\newcommand{\LTL}{\protect\ensuremath{\mathrm{LTL}}\xspace}
\newcommand{\CTL}{\protect\ensuremath{\mathrm{CTL}}\xspace}
\newcommand{\QPTL}{\protect\ensuremath{\mathrm{QPTL}}\xspace}
\newcommand{\ML}{\protect\ensuremath{\mathrm{ML}}\xspace}
\newcommand{\teamltl}{\protect\ensuremath{\mathrm{TeamLTL}}\xspace}
\newcommand{\hyctl}{\protect\ensuremath{\mathrm{HyperCTL^*}}\xspace}
\newcommand{\hyltl}{\protect\ensuremath{\mathrm{HyperLTL}}\xspace}
\newcommand{\uhyltl}{\protect\ensuremath{\forall^*\mathrm{HyperLTL}}\xspace}
\newcommand{\HQPTLP}{\protect\ensuremath{\mathrm{HyperQPTL\textsuperscript{\hskip-2pt\small +}}}\xspace}
\newcommand{\HQPTL}{\protect\ensuremath{\mathrm{HyperQPTL}}\xspace}
\newcommand{\dfn}{\mathrel{\mathop:}=}
\newcommand{\ddfn}{\mathrel{\mathop{{\mathop:}{\mathop:}}}=}
\newcommand{\set}[1]{\{ #1 \}}
\newcommand{\ldot}{\mathpunct{.}}
\newcommand{\pow}[1]{2^{#1}}
\newcommand{\tr}{\mathrm{TR}}
\newcommand{\LL}{\mathcal{L}}
\newcommand{\Sub}{\mathrm{Sub}}
\newcommand{\aux}{\varphi_\mathrm{aux}}
\newcommand{\todofy}[1]{\textnormal{\color{blue}\scriptsize+++Fan: #1+++}}
\newcommand{\todojv}[1]{\textnormal{\color{cyan}\scriptsize+++Jonni: #1+++}}
\newcommand{\todojh}[1]{\textnormal{\color{violet}\scriptsize+++Jana: #1+++}}
\newcommand{\U}{\LTLuntil}
\newcommand{\W}{\LTLweakuntil}
\newcommand{\X}{\LTLnext}
\newcommand{\G}{\LTLglobally}
\newcommand{\pathvars}{\mathcal{V}}
\newcommand{\pathassign}{\Pi}
\newcommand{\N}{\mathbb N}
\newcommand{\ap}{\mathrm{AP}}
\newcommand{\traces}{\mathrm{Traces}}
\newcommand{\teams}{\mathrm{TMS}}
\newcommand{\truth}[1]{\llbracket #1 \rrbracket}
\newcommand{\rank}{\mathrm{rank}}
\newcommand{\cons}{\leadsto_\mathrm{c}}
\newcommand{\lcons}{\leadsto_\mathrm{lc}}
\newcommand{\PSPACE}{\protect\ensuremath{\mathrm{PSPACE}}}
\newcommand{\EXPSPACE}{\protect\ensuremath{\mathrm{EXPSPACE}}}
\newcommand{\nlor}{\lor_{\mathrm{NE}}}
\newcommand{\nesub}{\nabla}
\renewcommand{\phi}{\varphi}
\renewcommand{\epsilon}{\varepsilon}
\newcommand{\nats}[0]{\mathbb{N}}
\DeclareMathOperator\dep{\mathrm{dep}}
\newcommand{\kK}{\mathrm{K}}
\newcommand{\single}{\operatorname{singleton}}
\newcommand{\lpreserve}{\operatorname{c_l-preserve}}
\newcommand{\mpreserve}{\operatorname{c_m-preserve}}
\newcommand{\rpreserve}{\operatorname{c_r-preserve}}
\newcommand{\spreserve}{\operatorname{c_s-preserve}}
\newcommand{\ldecrease}{\operatorname{c_l-decrease}}
\newcommand{\mdecrease}{\operatorname{c_m-decrease}}
\newcommand{\rdecrease}{\operatorname{c_r-decrease}}
\newcommand{\sdecrease}{\operatorname{c_s-decrease}}
\newcommand{\clor}{\varovee}
\newcommand{\Llor}{\lor_\mathrm{L}}
\DeclareMathOperator*{\Clor}{\scalerel*{\ovee}{\sum}}
\newcommand{\intimp}{\multimap}
\newcommand{\cneg}{\sim}
\newcommand{\asub}{\mathsf A}
\newcommand{\forallu}{\accentset{u}{\forall}}
\newcommand{\existsu}{\accentset{u}{\exists}}
\newcommand{\quantu}{\accentset{u}{Q}}
\newcommand{\flatop}{\accentset{1}{\asub}}
\newcommand{\forallso}{\forall}
\newcommand{\existsso}{\exists}
\newcommand{\qst}{q^{\mathcal{S}_T}}
\newcommand{\allatoms}{\mathcal{A}_\mathrm{all}}
\newcommand{\dcatoms}{\mathcal{A}_\mathrm{dc}}
\title{Linear-time Temporal Logic with Team Semantics:\\ Expressivity and Complexity} %TODO Please add
\titlerunning{LTL with Team Semantics: Expressivity and Complexity} %TODO optional, please use if title is longer than one line
\author{Jonni Virtema}{Institute for Theoretical Computer Science, Leibniz Universit\"at Hannover, Germany \and
Department of Computer Science, University of Sheffield, United Kingdom
}{j.t.virtema@sheffield.ac.uk}{https://orcid.org/0000-0002-1582-3718}{Supported by the DFG grant VI 1045/1-1.}
\author{Jana Hofmann}{CISPA Helmholtz Center for Information Security, Saarbr\"ucken, Germany}{jana.hofmann@cispa.de}{https://orcid.org/0000-0003-1660-2949}{}
\author{Bernd Finkbeiner}{CISPA Helmholtz Center for Information Security, Saarbr\"ucken, Germany}{finkbeiner@cispa.de}{https://orcid.org/0000-0002-4280-8441}{}
\author{Juha Kontinen}{Department of Mathematics and Statistics, University of Helsinki, Finland}{juha.kontinen@helsinki.fi}{https://orcid.org/0000-0003-0115-5154}{Supported by grant 308712 of the Academy of Finland.}
\author{Fan Yang}{Department of Mathematics and Statistics, University of Helsinki, Finland}{fan.yang@helsinki.fi}{https://orcid.org/0000-0003-0392-6522}{Supported by grants 330525 and 308712 of Academy of Finland, and Research Funds of University of Helsinki.}
\authorrunning{J. Virtema, J. Hofmann, B. Finkbeiner, J. Kontinen, F. Yang} %TODO mandatory. First: Use abbreviated first/middle names. Second (only in severe cases): Use first author plus 'et al.'
\keywords{Linear temporal logic, Hyperproperties, Model Checking, Expressivity} %TODO mandatory; please add comma-separated list of keywords
\begin{document}
\allowdisplaybreaks	
	
	\maketitle
	
	%TODO mandatory: add short abstract of the document
	\begin{abstract}
	We study the expressivity and complexity
	of model checking
	of linear temporal logic with team semantics (TeamLTL). TeamLTL, despite being a purely modal logic, is capable of defining hyperproperties, i.e., properties which relate multiple execution traces.
	TeamLTL has been introduced quite recently and only few results are known regarding its expressivity and its model checking problem.
	We relate the expressivity of \teamltl to logics for hyperproperties obtained by extending LTL with trace and propositional quantifiers (\hyltl and \HQPTL).
	By doing so, we obtain a number of model checking results for \teamltl and identify its undecidability frontier. In particular, we show decidability of model checking of the so-called \emph{left-flat} fragment of any downward closed \teamltl -extension. Moreover, we establish that the model checking problem of \teamltl with Boolean disjunction and inclusion atoms is undecidable. 
	
\end{abstract}

	\section{Introduction}
Linear-time temporal logic (\LTL) is one of the most prominent logics for the specification and verification of reactive and concurrent systems.  Practical model checking tools like SPIN, NuSMV, and many others (\cite{Holzmann:1997:SPIN,NuSMV:2000:Cimatti,Cook+Koskinen+Vardi/2011/TemporalPropertyVerificationAsAProgramAnalysisTask}) automatically verify whether a given computer system, such as a hardware circuit or a communication protocol, is correct with respect to its \LTL specification. The basic principle, as introduced in 1977 by Amir Pnueli~\cite{Pnueli/1977/TheTemporalLogicOfPrograms}, is to specify the correctness of a program as a set of infinite sequences, called \emph{traces}, which define the acceptable executions of the system.

Hyperproperties, i.e., properties which relate multiple execution traces, cannot be specified in \LTL. Such properties are of prime interest in information flow security, where dependencies between the secret inputs and the publicly observable outputs of a system are considered potential security violations.
Commonly known properties of that type are noninterference~\cite{DBLP:conf/sp/Roscoe95,DBLP:journals/jcs/McLean92} or observational determinism~\cite{DBLP:conf/csfw/ZdancewicM03}.
In other settings, relations between traces are explicitly desirable: robustness properties, for example, state that similar inputs lead to similar outputs.
%	Policies that regulate the permissible dependencies cannot be characterized as properties of individual execution traces. Rather, they are properties of sets of traces, known as \emph{hyperproperties}~\cite{Clarkson+Schneider/10/Hyperproperties}.
Hyperproperties are not limited to the area of information flow control.
E.g., distributivity and other system properties like fault tolerance can be expressed as hyperproperties~\cite{DBLP:journals/acta/FinkbeinerHLST20}.

The main approach to specify hyperproperties has been to extend temporal logics like \LTL, \CTL, and \QPTL with explicit trace and path quantification, resulting in logics like \hyltl~\cite{DBLP:conf/post/ClarksonFKMRS14}, \hyctl~\cite{DBLP:conf/post/ClarksonFKMRS14}, and \HQPTL~\cite{MarkusThesis,DBLP:conf/lics/CoenenFHH19}. 
Most frequently used is \hyltl, which can express noninterference as follows:
%HyperLTL~\cite{DBLP:conf/post/ClarksonFKMRS14} extends LTL with \emph{trace quantifiers} and \emph{trace variables}. 
%For example, the formula
%\begin{equation}
	%        \forall \pi.\ \forall \pi'.\ \G (x_\pi \leftrightarrow x_{\pi'}) 
	%\end{equation} 
%expresses that all traces must \emph{agree} on the value of $x$ at all times. This makes it possible to give an indirect characterization of dependence and independence via the resulting traces: if all traces that agree on all inputs \emph{except} $x$, also agree on the output $y$, then $y$ must be independent of $x$. 
%%
%\begin{linenomath*}
	\(
	\forall \pi \ldot \forall \pi' \ldot \G (\bigwedge_{i \in \mathit{I}} i_\pi \leftrightarrow i_{\pi'}) \rightarrow \G (\bigwedge_{o \in \mathit{O}} o_\pi \leftrightarrow o_{\pi'})
	\).
	%\end{linenomath*}
%
The formula states that any two traces which globally agree on the value of the public inputs $I$ also globally agree on the public outputs $O$. Consequently, the value of secret inputs cannot affect the value of the publicly observable outputs.

It is not clear, however, whether quantification over traces is the best way to express hyperproperties. The success of LTL over first-order logics for the specification of linear-time properties stems from the fact that its modal operators replace explicit quantification of points in time. This allows for a much more concise and readable formulation of the same property.
The natural question to ask is whether a purely modal logic for hyperproperties would have similar advantages.
A candidate for such a logic is \LTL with \emph{team semantics} ~\cite{kmvz18}. Under team semantics, \LTL expresses hyperproperties without explicit references to traces. Instead, each subformula is evaluated with respect to a set of traces, called a \emph{team}. Temporal operators advance time on all traces of the current team. Using the split operator $\lor$, teams can be split during the evaluation of a formula, which enables us to express properties of subsets of traces.

As an example, consider the property that there is a point in time, common for all traces, after which a certain event $a$ does not occur any more. We need a propositional and a trace quantifier to express such a property in HyperQPTL (it is not expressible in HyperLTL).
%	Such a property can be expressed in HyperQPTL (and not in HyperLTL) using a propositional and a trace quantifier:
%	\begin{linenomath*}
	%	\[
	%		\exists p \ldot \forall \pi \ldot \LTLeventually p \land \LTLglobally (p \rightarrow \LTLglobally \neg a_\pi)
	%	\]
	%	\end{linenomath*}
The formula $\exists p \ldot \forall \pi \ldot \LTLeventually p \land \LTLglobally (p \rightarrow \LTLglobally \neg a_\pi)$ states that there is a $p$-sequence $s \in (\pow{\{p\}})^\omega$ such that $p$ is set at least once, and if $p \in s[i]$, then $a$ is not set on all traces $\pi$ on all points in time starting from $i$. The same property can be expressed in \teamltl without any quantification simply as $\LTLeventually \LTLglobally \neg a$. The formula exploits the synchronous semantics of \teamltl by stating that there is a point such that for all future points all traces have $a$ not set.
As a second example, consider the case that an unknown input determines the behaviour of the system. Depending on the input, its execution traces either agree on $a$ or on $b$.
We can express the property in HyperLTL with three trace quantifiers:
%\begin{linenomath*}
	\(
	\exists \pi_1 \ldot \exists \pi_2 \ldot \forall \pi \ldot \LTLglobally(a_{\pi_1} \leftrightarrow a_\pi) \lor \LTLglobally(b_{\pi_2} \leftrightarrow b_\pi)
	\).
	%\end{linenomath*}
In \teamltl, the same property can be simply expressed as $\LTLglobally(a \clor \neg a) \lor \LTLglobally(b \clor \neg b)$. The Boolean or operator $\clor$ expresses that in the current team, either the left side holds on all traces or the right side does.

The use of the $\clor$ operator reveals another strength of \teamltl: its modularity. The research on team semantics (see related work section) has a rich tradition of studying extensions of team logics with new atomic statements and operators. They constitute a well-defined way to increase a logic's expressiveness in a step-by-step manner.
Besides $\clor$, examples are Boolean negation $\cneg$, the inclusion atom $\subseteq$, and universal subteam quantifiers $\asub$ and $\flatop$. Inclusion atoms have been found to be fascinating for their ability to express recursion in the first-order setting; the expressivity of $\mathrm{FO}(\subseteq)$ coincides with greatest fixed point logic and hence PTIME \cite{gallhella13}. In turn, all $\LTL$-definable properties can be expressed by \teamltl-formulae of the form $\flatop \varphi$. With the introduction of generalised atoms, \teamltl even permits custom extensions.
Possibly most interesting in the context of hyperproperties are dependence atoms. A dependence atom $\dep(x_1,\ldots,x_n)$ is satisfied by a team $X$ if any two assignments assigning the same values to the variables $x_1,\ldots, x_{n-1}$ also assign the same value to $x_n$. For example, the \teamltl formula 
%\begin{linenomath*}
	\(
	(\G \dep(i_1, i_2, o)) \lor (\G \dep(i_2, i_3, o))
	\)
	%\end{linenomath*}
states that the executions of the system can be decomposed into two parts; in the first part, the output $o$ is determined by the inputs $i_1$ and $i_2$, and in the second part, $o$ is determined by the inputs $i_2$ and $i_3$. 

Temporal team logics constitute a new, fundamentally different approach to specify hyperproperties. While \hyltl and other quantification-based hyperlogics have been studied extensively (see section on related work), 
only few results are known about the expressive power and complexity of \teamltl and its variants. In particular, we know very little about how the expressivity of the two approaches compares.
%	\todojh{This following paragraph could maybe be deleted, since is follows in realated work.}
What is known is that \hyltl and \teamltl are incomparable in expressivity~\cite{kmvz18} and that the model checking problem of \teamltl without splitjunctions $\lor$ (what makes the logic significantly weaker) is in $\PSPACE$~\cite{kmvz18}.
On the other hand, it was  recently  shown that  the complexity of satisfiability and model checking of \teamltl with Boolean negation $\cneg$ is equivalent to the decision problem of third-order arithmetic~\cite{LUCK2020} and hence highly undecidable.

%\vspace{1mm} 
\noindent{\bf Our contribution.}
We advance the understanding of team-based logics for hyperproperties by exploring the relative expressivity of \teamltl and temporal hyperlogics like \hyltl, as well as the decidability frontier of the model checking problem of \teamltl.
%We obtain a linear expressivity hierarchy of extensions of $\teamltl$, and several translations from team logics to hyperlogics.
Our expressivity and model checking results are summarized in Table~\ref{tab:EXPResults} and Table~\ref{tab:MCResults}.
We identify expressively complete extensions of \teamltl (displayed on the left of Table~\ref{tab:EXPResults}) that can express all (all downward closed, resp.) Boolean relations on $\LTL$-properties of teams, and present several translations from team logics to hyperlogics.	
We begin by approaching the decidability frontier of \teamltl from above, and tackle a question posed in \cite{LUCK2020}: \emph{Does some sensible restriction to the use of Boolean negation in $\teamltl(\cneg)$ yield a decidable logic?} We show that already
a very restricted access to $\cneg$ leads to high undecidability, whereas already the use of inclusion atoms $\subseteq$ together with Boolean disjunctions $\clor$ suffices for undecidable model checking. %, i,e,  
%	In particular, we  show that the model checking problem for 	$\teamltl(\subseteq, \clor)$ is $\Sigma^0_1$-hard, and its extension with a single occurrence of the universal subteam-quantifier $\asub$ is already $\Sigma_1^1$-hard.
Furthermore, we establish that these complexity results transfer to the satisfiability problem of the related logics. 
Next, regarding the expressivity of \teamltl, we show that
%$\teamltl(\clor, \flatop)$ and $\teamltl(\clor, {\cneg\!\bot}, \flatop)$
its extensions with all (all downward closed, resp.) atomic $\LTL$-properties of teams translate to simple fragments of \HQPTLP. 
Consequently, known decidability results for quantification-based hyperlogics enable us to approach the decidability frontier of \teamltl extensions from below.
We establish an efficient translation from the so-called \emph{$k$-coherent fragment} of $\teamltl(\cneg)$ to  the universal fragment of \hyltl (for which model checking is $\PSPACE$-complete \cite{conf/cav/FinkbeinerRS15}) and thereby obtain $\EXPSPACE$ model checking for the fragment. 
Finally, we show that the so-called \emph{left-flat}  fragment of $\teamltl(\clor,\flatop)$  enjoys decidable model checking via  a translation to $\existsu_p^*\forall_\pi^*\HQPTL$. 
\begin{table}[t]
	\setlength{\tabcolsep}{7pt}
	\centering
	\begin{tabular}{ccccc}	
		%\hline
		&  && (assuming left-flatness) &\\
		& $\teamltl(\clor, \flatop)$ & $\stackrel{\mathrm{Thm. \ref{thm:translation}}}{\leq}$ & $\existsu_q^*\forall_\pi \HQPTL$ &\\
		%	& $\teamltl(\clor, \flatop)$ &&&\\
		&  &$\stackrel{ \mathrm{Thm. \ref{s:teamltl2hyperqptl}}}{\leq}$& $\existsso_p \quantu^*_p \forall_\pi \HQPTLP$ & \\
		&  \rotatebox[origin=c]{-90}{$<$}${}^\dagger$ && & \\
		& $\teamltl(\clor, {\cneg\!\bot}, \flatop)$ &$\stackrel{\mathrm{Thm. \ref{s:teamltl2hyperqptl}}}{\leq}$& $\existsso_p \quantu^*_p \exists_\pi^*\forall_\pi \HQPTLP$& \\
		&&&&\\
		&\hspace{3mm}  \rotatebox[origin=c]{-90}{$\leq$}  {\cite{LUCK2020}} && (assuming $k$-coherence) & \\
		& $\teamltl(\cneg)$ &$\stackrel{\mathrm{Thm. \ref{thm:kcoherent-team-hyper}}}{\leq}$&  $\forall^k\hyltl$ & 
	\end{tabular}
	\vspace{0.5em}
	\caption{Expressivity results. The logics $\teamltl(\clor, \cneg\!\bot, \flatop)$  and $\teamltl(\flatop, \clor)$ can express \emph{all}/\emph{all downward closed} atomic $\LTL$-properties of teams (see the discussion at the end of Section \ref{sec:teamltl}).
		%Novel results are annotated with the corresponding theorem number.
		$\dagger$ holds since $\teamltl(\flatop,\clor)$ is downward closed.\looseness=-1}
	\label{tab:EXPResults}
	\vspace{-1em}
\end{table}
\begin{table}[t]
	\setlength{\tabcolsep}{7pt}
	\def\arraystretch{1.2}
	\centering
	\begin{tabular}[t]{l|l}
		Logic & Model Checking Result \\ \hline \hline
		$\teamltl$ without $\lor$ & in $\PSPACE$~\cite{kmvz18} \\
		$k$-coherent $\teamltl(\cneg)$ & in $\EXPSPACE$~[Thm.~\ref{cor:kcoherentMC}] \\
		left-flat $\teamltl(\clor,\flatop)$ & in $\EXPSPACE$~[Thm.~\ref{thm:leftflatMC}] \\			
		%left-flat $\teamltl(\dcatoms, \clor, \flatop)$ & in $3\EXPSPACE$~[Thm.~\ref{cor:leftflatMC}] \\
		$\teamltl(\subseteq, \clor)$ & $\Sigma^0_1$-hard [Thm.~\ref{thm:undecidable}]\\
		$\teamltl(\subseteq, \clor, \asub)$ & $\Sigma^1_1$-hard [Thm.~\ref{thm:sig11_unsat}] \\
		$\teamltl(\cneg)$ & \makecell[l]{complete for third-order arithmetic~\cite{LUCK2020}}
	\end{tabular}
	\vspace{0.5em}
	\caption{Complexity results.}
	\label{tab:MCResults}
	\vspace{-1em}
\end{table}

\noindent{\bf Related work.}
%	Research on modal team semantics and logics for hyperproperties has been very active in the past decade.
The development of team semantics began with the introduction of Dependence Logic~\cite{vaananen07}, which adds the concept of functional dependence to first-order logic by means of new atomic dependence formulae.
During the past decade, team semantics has been generalised to propositional \cite{YANG2017}, modal \cite{vaananen08}, temporal \cite{KrebsMV15}, and probabilistic \cite{HKMV18} frameworks, and fascinating connections to fields such as database theory~\cite{HannulaK16}, statistics \cite{CoranderHKPV16}, real valued computation \cite{HannulaKBV20}, and quantum information theory \cite{Hyttinen15b} has been identified.	
In the modal team semantics setting, model checking and satisfiability problems have been shown to be decidable, see \cite[page 627]{HellaKMV19} for an overview of the complexity landscape. Expressivity and definability of related logics is also well understood, see, e.g. \cite{HellaLSV14,KontinenMSV15,SanoV19}. The study of temporal logics with team semantics, was initiated in \cite{KrebsMV15}, where team semantics for computational tree logic \CTL was given. The idea to develop team-based logics for hyperproperties was coined in \cite{kmvz18}, where \teamltl was first introduced and shown incomparable to \hyltl.
%	Finally, $\teamltl(\cneg)$ and its fragments was studied, and related to third-order arithmetics, in \cite{LUCK2020}.
The interest on logics for hyperproperties, so-called hyperlogics, was sparked by the introduction of \hyltl and \hyctl~\cite{DBLP:conf/post/ClarksonFKMRS14}.
Many temporal logics have since been extended with trace and path quantification to obtain various hyperlogics, e.g., to express asynchronous hyperproperties~\cite{DBLP:journals/pacmpl/GutsfeldMO21, DBLP:conf/cav/BaumeisterCBFS21}, hyperproperties on finite traces~\cite{DBLP:conf/ijcai/GiacomoFMP21}, probabilistic hyperproperties~\cite{DBLP:conf/qest/AbrahamB18}, or timed hyperproperties~\cite{DBLP:conf/time/HoZ019}.
Model checking \hyltl and the strictly more expressive \HQPTL is decidable, though $k$-$\EXPSPACE$-complete, where $k$ is the number of quantifier alternations in the formula~\cite{conf/cav/FinkbeinerRS15, MarkusThesis}.
%	The algorithm can be slightly adapted to obtain a model checking algorithm for \HQPTL~\cite{MarkusThesis}. 
Model checking \HQPTLP, on the other hand, is undecidable~\cite{DBLP:conf/cav/FinkbeinerHHT20}.
The expressivity of \hyltl, \hyctl, and \HQPTL has been compared to first-order and second-order hyperlogics resulting in a hierarchy of hyperlogics~\cite{DBLP:conf/lics/CoenenFHH19}.
Beyond model checking and expressivity questions, especially \hyltl has been studied extensively.
This includes its satisfiability~\cite{DBLP:conf/concur/FinkbeinerH16,DBLP:conf/csl/Mascle020}, runtime monitoring ~\cite{DBLP:journals/fmsd/FinkbeinerHST19,DBLP:conf/csfw/AgrawalB16} and enforcement problems~\cite{RuntimeEnforcementHyperLTL}, as well as synthesis~\cite{DBLP:journals/acta/FinkbeinerHLST20}. 
%
%For a very recent work on analysing asynchronous hyperproperties via a temporal fixpoint calculus, see \cite{GutsfeldMO21}.

\section{Basics of TeamLTL}\label{sec:teamltl}

Let us start by recalling the syntax of \LTL from the literature.
Fix a set $\ap$ of \emph{atomic propositions}. 
The set of formulae of \LTL (over $\ap$) is generated by the following grammar:
\[
\varphi \ddfn p \mid \neg p  \mid \varphi \lor \varphi \mid \varphi\land \varphi \mid \X \varphi \mid \varphi \U \varphi \mid \varphi \W \varphi, \quad\quad  \text{where $p \in \ap$.}
\]
%	where $p \in \ap$.
We adopt, as is common  in studies on team logics, the convention that formulae are given in negation normal form. The logical constants $\top,\bot$ and connectives $\rightarrow,\leftrightarrow$ are defined as usual (e.g., $\bot \dfn p \land \neg p$ and $\top \dfn p \lor \neg p$), and $\LTLeventually \phi:=\top\U\phi$ and $\G \phi:=\phi\W\bot$.

A \emph{trace} $t$ over $\ap$ is an infinite sequence from $(\pow{\ap})^\omega$. For a natural number $i\in\N$, we denote by $t[i]$ the $i$th element of $t$ and by $t[i,\infty]$ the postfix $(t[j])_{j\geq i}$ of $t$. 
The satisfaction relation $(t,i)\models \varphi$, for  \LTL formulae $\phi$, is defined as usual, see e.g., \cite{PitermanP18}. We use $\llbracket \varphi \rrbracket_{(t,i)}\in\{0,1\}$ to denote the truth value of $\varphi$ on $(t,i)$.
%
\begin{comment}
\begin{alignat*}{3}
t &\models a                             &&\text{ iff } && a \in t[0] \\
t &\models \neg \varphi                  &&\text{ iff } && t \not\models \varphi \\
t &\models \varphi_1 \vee \varphi_2      &&\text{ iff }&&t \models \varphi_1 \text{ or } t \models \varphi_2 \\
t &\models \LTLnext \varphi              &&\text{ iff } && t[1,\infty] \models \varphi \\
t &\models \varphi_1 \LTLuntil \varphi_2 &&\text{ iff } && \exists i \geq 0.~ t[i,\infty] \models \varphi_2 \\
&                                        &&             && \land \forall j:0 \leq j < i \Rightarrow t[j,\infty] \models \varphi_1.
\end{alignat*}
\end{comment}
A {\em (temporal) team} is a pair $(T,i)$ consisting a set of traces $T\subseteq (\pow{\ap})^\omega$ and a natural number $i\in\N$ representing the time step.
%\footnote{Note that in  \cite{kmvz18} the time step $i$ was not explicitly represented in the team.}
We write $T[i]$ and $T[i,\infty]$ to denote the sets $\{t[i]\mid t\in T \}$ and $\{t[i,\infty]\mid t\in T \}$, respectively.

Let us next introduce the logic $\LTL$ interpreted with team semantics (denoted $\teamltl$).
$\teamltl$ was first studied in \cite{kmvz18}, where it was called $\LTL$ with {\em synchronous} team semantics.  The satisfaction relation $(T,i)\models\phi$ for $\teamltl$  is defined as follows:
\begin{align*}
&(T, i)\models p   &&\text{iff}&& \forall t \in T: p\in t[i]  \\
&(T, i)\models \lnot p  &&\text{iff}&& \forall t \in T: p\notin t[i]\\
&(T, i)\models \phi\land\psi  &&\text{iff}&& (T,i)\models\phi  \text{ and } (T,i)\models\psi \\
&(T,i) \models\X\varphi  &&\text{iff}&& (T, i+1)\models\varphi \\
&(T, i)\models \phi\land\psi  &&\text{iff}&& (T,i)\models\phi  \text{ and } (T,i)\models\psi \\
&(T,i)  \models\X\varphi  &&\text{iff}&& (T, i+1)\models\varphi \\
&(T, i)\models \phi\lor\psi  &&\text{iff}&& (T_1,i)\models\phi \text{ and }(T_2, i)\models\psi, \text{ for some } T_1,T_2\text{ s.t. }T_1\cup T_2=T\\
%	&(T,i) \models\X\varphi  &&\text{iff}&& (T, i+1)\models\varphi\\
&(T,i)\models\phi\U\psi &&\text{iff}&& \exists k\ge i\text{ such that } (T, k)\models\psi \text{ and }
\forall m: i \leq m <k \Rightarrow (T, m)\models\phi\\
&(T,i)\models\phi\W\psi &&\text{iff}&&\forall k \ge i : (T,k)\models\phi \text{ or }
\exists m \text{ such that }
i\le m\leq k \text{ and } (T,m)\models\psi 
\end{align*}
Note that $(T,i)\models \bot$ iff $T=\emptyset$.
Two formulae $\phi$ and $\psi$ are {\em equivalent} (written $\phi\equiv\psi$), if the equivalence $(T,i)\models\phi$ iff $(T,i)\models\psi$ holds for every $(T,i)$. 
We say that a logic $\mathcal{L}_2$ is {\em at least as expressive} as a logic $\mathcal{L}_1$ (written $\mathcal{L}_1\leq\mathcal{L}_2$) if for every $\mathcal{L}_1$-formula $\phi$, there exists an $\mathcal{L}_2$-formula $\psi$ such that $\phi\equiv\psi$. We write $\mathcal{L}_1\equiv\mathcal{L}_2$ if both $\mathcal{L}_1\leq\mathcal{L}_2$ and $\mathcal{L}_2\leq\mathcal{L}_1$ hold.
The following are important semantic properties of formulae from the team semantics literature:

\begin{description}%[topsep=1pt]
\item[(Downward closure)]
If $(T,i)\models\phi$ and $S\subseteq T$, then $(S,i)\models\phi$.
%\item[(Union closure)] If $(T_j,i)\models\phi$ for all $j\in J\neq\emptyset$, then $(\bigcup_{j\in J}T_j,i)\models\phi$.
\item[(Empty team property)] $(\emptyset,i)\models\phi$.
\item[(Flatness)] $(T,i)\models\phi$ ~~iff~~ $(\{t\},i)\models\phi$ for all $t\in T$.
\item[(Singleton equivalence)] $(\{t\},i)\models\phi$ ~~iff~~ $(t,i)\models\phi$.
\end{description}

\noindent A logic has one of the above properties if every formula of the logic has the property.
\teamltl satisfies {downward closure}, singleton equivalence,
and the empty team property~\cite{kmvz18}. 
However, it
%-formulae with the above-defined (synchronous) team semantics, however, 
does not
%in general
satisfy flatness; for instance, the formula $\LTLeventually p$ is not flat.
%We say that a connective has one of the above properties, if it \emph{preserves} the property. E.g, $\lor$ is downward closed, since $\varphi \lor \psi$ is downward closed whenever $\varphi$ and $\psi$ are. 

The power of team semantics comes with the ability to enrich logics with novel atomic statements describing properties of teams. We thereby easily get a hierarchy of team logics of different expressiveness.
The most prominent examples of such atoms are {\em dependence atoms} $\dep(\phi_1,\dots,\phi_n,\psi)$ and {\em inclusion atoms} $\phi_1,\dots,\phi_n  \subseteq  \psi_1,\dots,\psi_n$, with $\phi_1,\dots,\phi_n ,\psi,\psi_1,\dots,\psi_n$ being $\LTL$-formulae. The dependence atom states that the truth value of $\psi$ is functionally determined by that of $\phi_1, \ldots, \phi_n$. The inclusion atom states that each value combination of $\phi_1, \ldots, \phi_n$ must also occur as a value combination for $\psi_1, \ldots, \psi_n$. Their formal semantics is defined as:
%{
%\allowdisplaybreaks[3]
\[
(T, i)\models  \dep(\phi_1,\dots,\phi_n,\psi)  ~~\text{iff}~~ \forall t ,t' \in T :
\Big( \bigwedge_{1\leq j\leq n}  \truth{\phi_j}_{(t,i)} = \truth{\phi_j}_{(t',i)} \Big) \Rightarrow \truth{\psi}_{(t,i)} = \truth{\psi}_{(t',i)}
\]
%}
%( \truth{\phi_1}_{(t,i)},\dots, \truth{\phi_n}_{(t,i)})= (\truth{\phi_1}_{(t',i)},\dots,\truth{\phi_n}_{(t',i)})  \Rightarrow \truth{\psi}_{(t,i)} = \truth{\psi}_{(t',i)}. \\
%
%{
%\allowdisplaybreaks[3]
\[
(T, i)\models \phi_1,\dots,\phi_n  \subseteq  \psi_1,\dots,\psi_n  ~~\text{iff}~~ \forall t\in T \, \exists t'\in T :
\bigwedge_{1\leq j\leq n}  \truth{\phi_j}_{(t,i)} = \truth{\psi_j}_{(t',i)}		
%(\truth{\phi_1}_{(t,i)},\dots,\truth{\phi_n}_{(t,i)} )= (\truth{\psi_1}_{(t',i)},\dots,\truth{\psi_n}_{(t',i)} )
\]
%}
%\todojh{to we have to introduce the $\truth{\cdot}$ notation?}
%Dependence atoms $\dep(\psi)$ with only the last argument are also called {\em constancy atoms}. Its semantics is reduced to:  \todofy{do we actually need to talk about constancy atoms in the paper?} \todojv{Currently they are mentioned later, but might be eliminated.}
%\begin{align*}
%(T, i)\models  \dep(\psi)  &~~\text{ iff }~~\forall t, t' \in T : \truth{\psi}_{(t,i)} = \truth{\psi}_{(t',i)}.
%\end{align*}
%\todojh{give intuitive sentence for inclusion atom}

As an example, let $o_1, \ldots, o_n$ be some observable outputs and $s$ be a secret. The atom $(o_1, \ldots, o_n, s) \subseteq (o_1, \ldots, o_n, \neg s)$ expresses a form of \emph{non-inference} by stating that an observer cannot infer the current value of the secret from the outputs.
We also consider other connectives known in the team semantics literature: {\em Boolean disjunction} $\clor$, {\em Boolean negation} $\cneg$, and  {\em universal subteam quantifiers} $\asub$ and $\flatop$, with their semantics defined as:
\[
\begin{array}{lcllcl}
(T,i) \models \varphi \clor \psi  &\text{iff}& (T,i) \models \varphi \text{ or } (T,i) \models \psi \\
(T,i) \models \,\cneg \varphi  &\text{iff}& (T,i) \not\models \varphi \\
(T,i) \models \asub \varphi  &\text{iff}& \forall S\subseteq T: (S,i) \models \varphi \\
(T,i) \models \flatop \varphi  &\text{iff}& \forall t\in T: (\{t\},i) \models \varphi
\end{array}
\]
%For $\teamltl$ formulae $\varphi$,  $\flatop \varphi$ is, in fact, equivalent to $\varphi\subseteq \top$.
If $\mathcal A$ is a collection of atoms and connectives, we let $\teamltl(\mathcal A)$ denote the extension of $\teamltl$ with the atoms and connectives in $\mathcal A$. For any atom or connective $\circ$, we write simply $\teamltl(\mathcal A, \circ)$ instead of $\teamltl(\mathcal A \cup \{\circ\})$.

$\teamltl(\cneg)$ is a very expressive logic; all of the above connectives and atoms, as well as many others, have been shown to be definable in ${\teamltl(\cneg)}$~\cite{HannulaKVV18,LUCK2020}.
%Later, in Proposition \ref{gd2bor}, we will show that the so-called \emph{generalised atoms} are definable in $\teamltl(\cneg)$ as well, and that the so-called \emph{downward closed generalised atoms} can be defined in $\teamltl(\clor, \flatop)$.
%
To systematically explore less expressive variants of \teamltl, we introduce two representative logics of different expressiveness, namely $\teamltl(\clor, \flatop)$ and $\teamltl(\clor, {\cneg\!\bot}, \flatop)$. The expression ${\cneg\!\bot}$ can be used to enforce non-emptiness of a team. What makes these logics good representatives is their semantic property that they can express a general class of Boolean relations.
Let $B$ be a set of $n$-ary Boolean relations. We define the property $[\varphi_1,\dots,\varphi_n]_B$ for an $n$-tuple $(\varphi_1,\dots,\varphi_n)$ of $\LTL$-formulae:
%<<<<<<< HEAD
%$$
%	(T,i) \models [\varphi_1,\dots,\varphi_n]_B \quad\text{iff}\quad \{ (\truth{\phi_1}_{(t,i)}, \dots,  \truth{\phi_n}_{(t,i)} ) \mid t\in T  \} \in B.
%$$
%The logic $\teamltl(\clor, \cneg\!\bot, \flatop)$ is expressively complete with respect to all $[\varphi_1,\dots,\varphi_n]_B$. 
%That is, for every set of Boolean relations $B$ and $\LTL$-formulae $\varphi_1,\dots,\varphi_n$, the property $[\varphi_1,\dots,\varphi_n]_B$ is expressible in $\teamltl(\clor, \cneg \! \! \bot, \flatop)$.
%=======
$$
(T,i) \models [\varphi_1,\dots,\varphi_n]_B \quad\text{iff}\quad \{ (\truth{\phi_1}_{(t,i)}, \dots,  \truth{\phi_n}_{(t,i)} ) \mid t\in T  \} \in B.
$$	

The logic $\teamltl(\clor, {\cneg\!\bot}, \flatop)$ is expressively complete with respect to all $[\varphi_1,\dots,\varphi_n]_B$. 
That is, for every set of Boolean relations $B$ and $\LTL$-formulae $\varphi_1,\dots,\varphi_n$, the property $[\varphi_1,\dots,\varphi_n]_B$ is expressible in $\teamltl(\clor, {\cneg\!\bot}, \flatop)$.
%>>>>>>> b861c1b28a4455701777e0254f7575568c8e8027
Furthermore, $\teamltl(\clor, \flatop)$ can express all downward closed ($S_1 \in B$ \& $S_2 \subseteq S_1$ imply $S_2 \in B$) $B$.
%Furthermore, if $B$ is downward closed, $[\varphi_1,\dots,\varphi_n]_B$ is expressible in $\teamltl(\clor, \flatop)$.
These results are reformulated and proved using so-called \emph{generalised atoms} in \ifbool{appdx}{Appendix \ref{a:sec:GA}}{the extended version of this paper~\cite{arxivVersion}}. 
Note that, e.g., $k$-ary inclusion and dependence atoms can be defined using suitable Boolean relations $B$. Indeed it follows that, from the expressivity point-of-view, $\teamltl(\clor, \flatop)$ and $\teamltl(\clor, {\cneg\!\bot}, \flatop)$ subsume all extensions of $\teamltl$ with downward closed (resp. all) \emph{atomic notions of dependence}, i.e., atoms which state some sort of functional (in)dependence, like the dependence atom (which is downward closed) or the inclusion atom (which is not).

\section{Undecidable Extensions of TeamLTL}\label{sec:undec}

In \cite{LUCK2020}, L\"uck established that the model checking problem for $\teamltl(\cneg)$ is highly undecidable.
The proof heavily utilises the interplay between Boolean negation $\cneg$ and disjunction $\lor$; it was left as an open problem whether some sensible restrictions on the use of the Boolean negation would lead toward discovering decidable logics.
We show that, on the contrary, the decidability bounds are much tighter.
Already $\teamltl(\subseteq,\clor)$ (which is subsumed by $\teamltl(\clor, {\cneg\!\bot}, \flatop)$) is undecidable, and already very restricted access to $\cneg$ (namely, a single use of the $\asub$ quantifier) leads to high undecidability.
%To be precise, we establish that the model checking problem for
%$\teamltl(\subseteq, \clor)$ is $\Sigma^0_1$-hard, and that its extension with just one occurrence of $\asub$ is $\Sigma_1^1$-hard.

We define the model checking problem based on Kripke structures $\kK=(W, R, \eta , w_0)$, where $W$ is a finite set of states, $R\subseteq W^2$ the transition relation, $\eta\colon W\rightarrow 2^\ap$ a labelling function, and $w_0\in W$ an initial state of $W$.
A path $\sigma$ through $\kK$ is an infinite sequence $\sigma \in W^\omega$ such that $\sigma[0]= w_0$ and $(\sigma[i], \sigma[i + 1]) \in R$ for every $i \geq 0$. The trace of $\sigma$ is defined as $t(\sigma) \dfn \eta(\sigma[0])\eta(\sigma[1])\dots \in (2^\ap)^\omega$. A Kripke structure $\kK$ induces a set of traces $\traces(\kK) = \{t(\sigma) \mid \sigma \text{ is a path through $\kK$}\}$.

\begin{definition}
The \emph{model checking problem of a logic $\LL$} is the following decision problem:
%$\teamltl$ (\HQPTLP) is, 
Given a formula $\varphi\in\LL$  and a Kripke structure $K$ over $\ap$, determine whether $(\mathit{Traces}(K), 0) \models \varphi$.
\end{definition}

%	\subsection{Counter machines and lossy counter machines}
Our undecidability results are obtained by reductions from \emph{non-deterministic 3-counter machines}.	
A non-deterministic 3-counter machine $M$ consists of a list $I$ of $n$ instructions that manipulate three counters $C_l$, $C_m$, and $C_r$. All instructions are of the following forms:

\begin{itemize}
\item $C_a^+ \text{ goto } \{j_1, j_2\}$, \qquad $C_a^- \text{ goto } \{j_1, j_2\}$, \qquad $\text{if } C_a=0 \text{ goto }  j_1 \text{else goto }  j_2$,
\end{itemize}

\noindent where $a\in\{l,m,r\}$, $0\leq j_1,j_2 < n$. A \emph{configuration} is a tuple $(i,j,k,t)$, where $0\leq i < n$ is the next instruction to be executed, and $j,k,t\in \N$ are the current values of the counters $C_l$, $C_m$, and $C_r$. The execution of the instruction  $i\colon C_a^+ \text{ goto } \{j_1, j_2\}$ ($i\colon C_a^- \text{ goto } \{j_1, j_2\}$, resp.) increments (decrements, resp.) the value of the counter $C_a$ by $1$. The next instruction is selected nondeterministically from the set $\{j_1, j_2\}$. The instruction  $i\colon \text{if } C_a=0 \text{ goto }  j_1, \text{else goto }  j_2$ checks whether the value of the counter $C_a$ is currently $0$ and proceeds to the next instruction accordingly. The \emph{consecution relation} $\cons$ of configurations is defined as usual.  The \emph{lossy consecution relation} $(i_1,i_2,i_3,i_4)\lcons (j_1,j_2,j_3,j_4)$ of configurations holds if $(i_1,i'_2,i'_3,i'_4)\cons (j_1,j'_2,j'_3,j'_4)$ holds for some $i'_2, i'_3,i'_4, j'_2, j'_3,j'_4$ with  $i_2 \geq i'_2$, $i_3\geq i'_3$, $i_4\geq i'_4$, $j'_2 \geq j_2$, $j'_3\geq j_3$, and $j'_4\geq j_4$. A \emph{(lossy) computation} is an infinite sequence of (lossy) consecutive configurations starting from the initial configuration $(0,0,0,0)$. A (lossy) computation is \emph{$b$-recurring} if the instruction labelled $b$ occurs infinitely often in it.
%	
%	\begin{theorem}[\cite{AlurH94, Schnoebelen10}]\label{thm:countermachine}
Deciding whether a given non-deterministic $3$-counter machine has a $b$-recurring ($b$-recurring lossy) computation for a given $b$ is $\Sigma_1^1$-complete ($\Sigma_1^0$-complete, resp.)~\cite{AlurH94, Schnoebelen10}.
%	\end{theorem}

%	\subsection{Model checking is (highly) undecidable}

We reduce the existence of a $b$-recurring lossy computation of a given $3$-counter machine $M$ and an instruction label $b$ to the model checking problem of $\teamltl(\subseteq, \clor)$. 
We also illustrate that with a single instance of $\asub$ we can enforce non-lossy computation instead.

\begin{theorem}\label{thm:undecidable}
Model checking for $\teamltl(\subseteq, \clor)$ is $\Sigma_1^0$-hard.
\end{theorem}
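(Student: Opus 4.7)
The plan is to reduce the $\Sigma_1^0$-complete problem of deciding whether a nondeterministic 3-counter machine $M$ has a $b$-recurring \emph{lossy} computation to the model checking problem for $\teamltl(\subseteq, \clor)$. Given $M$ with instructions $0,\dots,n{-}1$ and label $b$, I will construct a Kripke structure $\kK_M$ and a formula $\varphi_M$ such that $M$ has a $b$-recurring lossy computation iff $(\traces(\kK_M),0)\models\varphi_M$.

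The key encoding idea is to represent a configuration $(i,c_l,c_m,c_r)$ of $M$ by a team $T$ at a single time step. The current instruction $i$ is represented by fresh propositions $p_0,\dots,p_{n-1}$, held uniformly by every trace in the team (so that $(T,k)\models p_i$ pinpoints the instruction). The counter values are encoded via token propositions $c_l, c_m, c_r$: a trace carrying token $c_a$ at time $k$ represents one unit of counter $C_a$. Thus the value of $C_a$ in configuration $(T,k)$ is the number of distinct $c_a$-tokened traces. The Kripke structure $\kK_M$ is designed to be rich enough (e.g.\ allowing at each step any subset of the local propositions) so that arbitrary such tokenings arise among $\traces(\kK_M)$.

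The global formula $\varphi_M$ then has the shape $\fteams \land \G\bigl(\bigvee_{i<n} (p_i \land \mathrm{step}_i)\bigr) \land \G\F p_b$, where $\mathrm{step}_i$ encodes the consecution rule for instruction $i$. Using the splitting disjunction $\lor$, the team is partitioned at each step into the subteams corresponding to the different token types and to the instruction selected next, and Boolean disjunction $\clor$ is used to realise the nondeterministic branching $\{j_1,j_2\}$ of increment/decrement instructions. Consecution itself is enforced with inclusion atoms: for an increment of $C_a$ from $k$ to $k{+}1$, the current $c_a$-subteam is required to be included (via an atom of the form $c_a \subseteq \X c_a$) in the next step's $c_a$-subteam, and dually for decrements; the counters $C_b, C_c$ with $b,c\ne a$ are preserved by bi-inclusion, and a zero test is expressed by an inclusion atom forcing the $c_a$-subteam to be empty.

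The lossy nature of $M$ falls out automatically from team semantics: inclusion atoms can certify that old tokens \emph{persist} but offer no way to forbid tokens from being \emph{split away} by $\lor$ during the evolution, so counters may spontaneously decrease — exactly the lossy semantics. The recurrence condition is then just $\G\F p_b$. The main obstacle will be calibrating $\mathrm{step}_i$ so that the interplay of $\lor$-splits and $\subseteq$-atoms enforces \emph{correctness} of the encoded consecution (i.e.\ exactly one extra token for increments, exactly one fewer for decrements) while still being flexible enough to simulate every $b$-recurring lossy run; in particular, care is needed in the zero-test step to prevent the splitter from "hiding" $c_a$-tokens into a discarded subteam, which is handled by forcing, via a further inclusion atom, any present $c_a$-token to survive into the zero-branch's team.
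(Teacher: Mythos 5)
Your high-level plan coincides with the paper's: both reduce from the existence of a $b$-recurring \emph{lossy} computation of a nondeterministic $3$-counter machine, encode a configuration as a team at a time step with the instruction label held uniformly and counter values represented by token propositions $c_l,c_m,c_r$, and use a Kripke structure with all possible transitions so that every tokening is realisable. However, the concrete mechanisms you propose for consecution do not work. An inclusion atom such as $c_a \subseteq \X c_a$ only asserts that every \emph{truth value} of $c_a$ occurring at time $k$ also occurs (on some, possibly different, trace) at time $k+1$; it says nothing about which traces carry the token or how many do, so neither it nor a bi-inclusion can express ``the $c_a$-subteam is contained in the next step's $c_a$-subteam'' or preserve a counter. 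The paper instead controls counters with splitjunctions: $c_s \lor (\neg c_s \land \X\neg c_s)$ bounds the counter from above (any trace with $c_s$ at $k+1$ must have had it at $k$), and an increment splits off a subteam forced to be a singleton by $\G\bigwedge_a(a\clor\neg a)$ that flips $\neg c_l$ to $c_l$. Inclusion atoms are needed only for the zero test ($\top\subseteq c_s$ asserts non-emptiness of the token set) and to define the left-nonempty disjunction $\Llor$ used to carve a nonempty computation-encoding subteam out of $\traces(\kK_I)$ --- a step your formula also needs but does not supply, since the full trace set itself cannot satisfy the configuration constraints.

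Your explanation of why the target must be \emph{lossy} computation is also not the actual obstruction. In synchronous $\teamltl$, $\G$ advances time on a \emph{fixed} team; splits performed inside the body at time $k$ do not persist to time $k+1$, so tokens are not ``split away during the evolution.'' The genuine issue, which the paper makes explicit, is that satisfaction at time $k$ depends only on the postfixes $t[k,\infty]$, so the counter value must be measured as the number of \emph{distinct postfixes} carrying the token (whence the dummy proposition $d$ in the paper's encoding); two distinct token-carrying traces may share a postfix from some point on, which uncontrollably merges two counter units into one. This is exactly the lossy decrement, and eliminating it is what requires the $\asub$ quantifier in the follow-up $\Sigma^1_1$-hardness result. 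Without identifying this, your claim that the calibration of $\mathrm{step}_i$ can be made to ``enforce correctness of the encoded consecution'' in the exact sense is not substantiated, and the proof as proposed has a gap at precisely the point you flag as the main obstacle.
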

\begin{proof}
Given a set $I$ of instructions of a $3$-counter machine $M$, and an instruction label $b$, we construct a $\teamltl(\subseteq, \clor)$-formula $\varphi_{I,b}$ and a Kripke structure $\kK_I$ such that
\begin{equation}\label{thm:undecidable_eq1}
\big(\traces(\kK_I),0\big) \models \varphi_{I,b} \quad\text{iff}\quad\text{$M$ has a $b$-recurring lossy computation.} 
\end{equation}
The $\Sigma_1^0$-hardness then follows since our construction is clearly computable.		
The idea is the following:
Put $n\dfn \lvert I \rvert$. A set $T$ of traces using propositions $\{c_l,c_m,c_r,d,0,\dots, n-1\}$ encodes the sequence $(\vec{c}_j)_{j\in\N}$  of configurations, if for each $j\in\N$ and $\vec{c}_j=(i,v_l,v_m,v_r)$ 

\begin{itemize}%[topsep=2pt]
\item $t[j]\cap\{0,\dots, n-1\} =\{i\}$, for all $t\in T$,
\item $\lvert \{t[j,\infty] \mid c_s\in t[j], t\in T \} \rvert = v_s$, for each $s\in\{l,m,r\}$.
%\item $\lvert \{t\in T \mid c_m\in t[j] \} \rvert = m$,
%\item $\lvert \{t\in T \mid c_r\in t[j] \} \rvert = r$.
\end{itemize}

\noindent Hence, we use $T[j,\infty]$ to encode the configuration $\vec{c}_j$; the propositions $0,\dots, n-1$ are used to encode the next instruction, and $c_l,c_m,c_r,d$ are used to encode the values of the counters. The proposition $d$ is a dummy proposition used to separate traces with identical postfixes with respect to $c_l$, $c_m$, and $c_r$.
The Kripke structure $\kK_I = (W,R,\eta,w_0)$ over the set of propositions $\{c_l,c_m,c_r,d,0,\dots, n-1\}$ is defined such that every possible sequence of configurations of $M$ starting from $(0,0,0,0)$ can be encoded by some team $(T,0)$, where $T\subseteq \traces(\kK_I)$.
A detailed construction of the formula $\varphi_{I,b}$ and the Kripke structure $\kK_I$ together with a detailed proof for the fact that (\ref{thm:undecidable_eq1}) indeed holds can be found in~\ifbool{appdx}{Appendix \ref{a:sec:undec}}{the full version of this paper \cite{arxivVersion}}.	
\end{proof}

The underlying reason for utilising lossy computations in the above proof is the following: In our encoding, we use the cardinality of the set $\lvert \{t[j,\infty] \mid c_l\in t[j], t\in T \} \rvert$ to encode the value of the counter $C_l$ in the $j$th configuration. It might, however, happen that two distinct traces $t,t'\in T$ have the same postfix, that is, $t[j,\infty]=t'[j,\infty]$, for some $j\in\N$. The collapse of two traces encoding distinct increments of the counter $C_l$ then corresponds to the uncontrollable decrement of the counter values in lossy computations. Using the universal team quantifier $\asub$ we can forbid this effect, and encode non-lossy computations.
The proof of the following theorem can be found in~\ifbool{appdx}{Appendix \ref{a:sec:undec}}{the full version of this paper \cite{arxivVersion}}.

\begin{restatable}{theorem}{sigunsat}\label{thm:sig11_unsat}
%	\begin{theorem}\label{thm:sig11_unsat}
Model checking for $\teamltl(\subseteq, \clor, \asub)$ is $\Sigma_1^1$-hard. This holds already for the fragment with a single occurrence of $\asub$.
%	\end{theorem}
\end{restatable}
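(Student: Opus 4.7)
The plan is to build directly on the construction from Theorem \ref{thm:undecidable}, keeping the same Kripke structure $\kK_I$ and the same base formula $\varphi_{I,b}\in\teamltl(\subseteq,\clor)$, but reducing instead from the existence of a $b$-recurring \emph{non-lossy} computation of a $3$-counter machine. Since the latter problem is $\Sigma_1^1$-complete, the existence of a reduction witnessed by a formula of $\teamltl(\subseteq,\clor,\asub)$ with a single occurrence of $\asub$ will suffice. Concretely, I would exhibit a formula of the shape $\varphi_{I,b}\wedge \asub\, \psi_{\mathrm{inj}}$ and argue that $(\traces(\kK_I),0)$ satisfies it iff $M$ has a $b$-recurring non-lossy computation.

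The discussion following Theorem \ref{thm:undecidable} identifies exactly the source of lossiness: two distinct traces $t,t'\in T$ with $t[j,\infty]=t'[j,\infty]$ collapse into a single element of $T[j,\infty]$, undercounting the cardinality $\lvert\{t[j,\infty]\mid c_s\in t[j],\,t\in T\}\rvert$ that encodes the counter $C_s$ at stage $j$. To eliminate this, $\psi_{\mathrm{inj}}$ should force the map $t\mapsto t[j,\infty]$ to remain injective on the relevant team for every $j\in\N$. Because $\asub$ quantifies over all subteams, it is enough to make $\psi_{\mathrm{inj}}$ fail on every two-element subteam $\{t,t'\}$ whose postfixes eventually coincide, while being vacuously true on singletons and on the empty team.

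To express $\psi_{\mathrm{inj}}$ using only $\subseteq$ and $\clor$, I would exploit the dummy proposition $d$ already present in the construction of $\kK_I$. Since $d$ was introduced precisely so that distinct traces may be freely separated by their $d$-pattern, I would first adjust $\kK_I$ so that every two traces that are distinct at all are distinguishable by $d$ at infinitely many positions (this is harmless for the original encoding). Then $\psi_{\mathrm{inj}}$ can be written as a $\teamltl(\subseteq,\clor)$-formula that says: either the current subteam has at most one trace (which is expressible by a suitable inclusion atom together with $\clor$), or the subteam already disagrees on $d$ at the current instant. Combined with the temporal operators, $\asub\,\psi_{\mathrm{inj}}$ then asserts non-collapsing of the team at every time point, with the Boolean disjunction $\clor$ carrying the case analysis for singletons versus larger subteams.

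The main obstacle is writing $\psi_{\mathrm{inj}}$ so that it uniformly and correctly rules out collapse on subteams of all sizes while staying inside $\teamltl(\subseteq,\clor)$, i.e. without introducing additional $\asub$ quantifiers or Boolean negations. The subtlety is that the natural condition \emph{``any two traces disagree at some future point''} is not downward closed in a single team, yet $\asub$ quantifies over arbitrary subteams, so the inner formula must be phrased to behave well on pairs and singletons simultaneously. I would approach this by letting $\psi_{\mathrm{inj}}$ be a $\clor$-disjunction whose left disjunct uses an inclusion atom to witness that the subteam collapses to at most one value pattern of $d$, and whose right disjunct uses temporal operators to witness a future separation by $d$; the remainder of the proof, including verifying the equivalence $(\traces(\kK_I),0)\models\varphi_{I,b}\wedge\asub\,\psi_{\mathrm{inj}}$ iff $M$ admits a $b$-recurring non-lossy computation, then follows the blueprint of Theorem \ref{thm:undecidable} with lossy replaced by non-lossy throughout.
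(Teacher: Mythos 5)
Your overall strategy---reduce from $b$-recurring \emph{non-lossy} computations and spend the single $\asub$ on a formula forcing the postfix map $t\mapsto t[j,\infty]$ to stay injective---is exactly the paper's strategy, and your sketch of the inner formula (a $\clor$ of ``the subteam collapses to one value pattern'' versus ``the subteam stays separated'') matches the paper's $\theta_\mathrm{diff}$, which uses $\G\bigl(\bigwedge_p (p\clor\neg p)\bigr)$ for the first disjunct and $\G\LTLeventually(\top\subseteq p\land\bot\subseteq p)$ over $p\in\{c_l,c_m,c_r,d\}$ for the second. (One local caution: the separation disjunct must assert disagreement \emph{infinitely often}, i.e.\ $\G\LTLeventually$, not merely at the current instant or at a single future point; otherwise postfixes from late enough positions may still coincide. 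Your prose wavers between these.)

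The genuine gap is your decision to keep ``the same base formula $\varphi_{I,b}$'' and only conjoin $\asub\,\psi_{\mathrm{inj}}$. Postfix collapse is only \emph{one} of the two mechanisms by which the Theorem~\ref{thm:undecidable} encoding realises lossiness; the other is baked into $\varphi_{I,b}$ itself. The subformula $\sdecrease \dfn c_s \lor (\neg c_s \land \X\neg c_s)$ explicitly permits any trace currently carrying $c_s$ to drop it at the next step, and the increment clause uses an ordinary $\lor$ whose left (incrementing) disjunct may be taken empty, so the counter need not actually increase. Consequently a team with pairwise everywhere-distinct postfixes can still encode a strictly lossy computation, and $(\traces(\kK_I),0)\models \varphi_{I,b}\wedge\asub\,\psi_{\mathrm{inj}}$ would hold whenever $M$ has a $b$-recurring \emph{lossy} computation---your reduction would still only prove $\Sigma_1^0$-hardness. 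The paper therefore also rewrites the base formula: $\sdecrease$ is replaced by $\spreserve \dfn (c_s\land\X c_s)\lor(\neg c_s\land\X\neg c_s)$, and the increment/decrement clauses use $\Llor$ together with $\single$ to force exactly one trace to change its label, so that (given injectivity of postfixes) counter values are tracked exactly. You need this second modification for the correctness argument to go through.
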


As is common in the $\LTL$-setting, the model checking problem of $\teamltl(\subseteq, \clor)$ can be embedded in its satisfiability problem using auxiliary propositions and  $\teamltl(\subseteq, \clor)$-formulae.
A formula $\varphi$ is satisfiable, if there exists a non-empty $T$ such that $(T,0)\models \varphi$.
We thus obtain the following corollary, which is also proven in~\ifbool{appdx}{Appendix \ref{a:sec:undec}}{the full version of this paper \cite{arxivVersion}}.

%Note that a similar embedding cannot be done in any downward closed logic, e.g., in $\teamltl(\dcatoms, \clor)$. In fact, the satisfiability problem for $\teamltl(\dcatoms, \clor)$ can be shown to be decidable as generalised atoms can be eliminated when restricted to singleton teams (cf. \cite[Proposition 5.4]{kmvz18} and Theorem \ref{thm:undeccoh}).
%

\begin{restatable}{corollary}{corundecsat}\label{cor:undecsat}
%\begin{corollary}\label{cor:undecsat}
The satisfiability problems for $\teamltl(\subseteq, \clor)$ and $\teamltl(\subseteq, \clor, \asub)$ are $\Sigma_1^0$-hard and $\Sigma_1^1$-hard, resp.
%	\end{corollary}
\end{restatable}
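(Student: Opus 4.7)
My plan is to reduce the model checking problems of Theorems~\ref{thm:undecidable} and~\ref{thm:sig11_unsat} to the corresponding satisfiability problems by the standard $\LTL$-style trick of internalising the Kripke structure into the formula. Given $\kK=(W,R,\eta,w_0)$ and a formula $\varphi$, I would introduce fresh propositions $\{p_w\mid w\in W\}$ and build a $\teamltl$-formula $\psi_\kK$ stating that every trace in the team is a path of $\kK$ tagged with these labels. Concretely, $\psi_\kK$ conjoins $p_{w_0}$ with the $\G$-closure of (i) a split-junction $\bigvee_{w}\bigl(p_w\land\bigwedge_{w'\neq w}\neg p_{w'}\bigr)$, which forces every trace to carry exactly one state label at each position, and (ii) $\bigwedge_{w}\bigl(p_w\to\bigwedge_{p\in\eta(w)}p\land\bigwedge_{p\notin\eta(w)}\neg p\land \X\bigvee_{(w,w')\in R}p_{w'}\bigr)$, which enforces label consistency and valid transitions.

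I would then take $\chi:=\psi_{\kK_I}\land\varphi_{I,b}$, with $\varphi_{I,b}$ and $\kK_I$ as in the proof of Theorem~\ref{thm:undecidable} (resp.\ Theorem~\ref{thm:sig11_unsat}). For the forward direction, if $M$ admits a (lossy) $b$-recurring computation, annotating the encoding team supplied by Theorem~\ref{thm:undecidable} with the $p_w$-atoms of each trace's underlying path yields a team satisfying $\chi$. For the backward direction, any team $T$ satisfying $\chi$ has its $\ap$-projection $T'$ contained in $\traces(\kK_I)$ by $\psi_{\kK_I}$; because the $p_w$ are fresh and do not occur in $\varphi_{I,b}$, one also gets $(T',0)\models\varphi_{I,b}$, and by the semantics of $\varphi_{I,b}$ as described in the proof of Theorem~\ref{thm:undecidable} the team $T'$ then encodes a valid $b$-recurring (lossy) computation of $M$. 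Since $\psi_\kK$ itself lives in the $\subseteq$-free and $\asub$-free fragment of $\teamltl$, it can be conjoined with formulae of either ambient logic without leaving it.

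The hard part will be verifying that the split-junction encoding of $\psi_\kK$ genuinely forces the path property trace-by-trace, given that $\teamltl$ lacks flatness under $\U$ and $\W$. The key point is that the mutual-exclusivity clause (i) pins the per-position partition of $T$ uniquely to the $p_w$-labels carried by its traces, so the per-step constraints from (ii) compose into a single valid trajectory through $\kK$ for every trace in $T$. With this in hand, both hardness statements of the corollary follow immediately from Theorems~\ref{thm:undecidable} and~\ref{thm:sig11_unsat}.
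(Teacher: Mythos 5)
Your reduction is the same basic move as the paper's -- internalise the Kripke structure into a formula over fresh state labels $p_w$ and conjoin it with $\varphi_{I,b}$ -- but it differs in one substantive point. The paper's characterising formula $\theta_{\kK'}$ contains, inside the $\G$-guarded disjunct for each state $w$, the additional conjunct $\bigwedge_{(w,v)\in R}\top\subseteq\X p_v$. These inclusion atoms force every transition of $\kK'$, and hence every trace, to be witnessed in the team, so that $\theta_{\kK'}$ is satisfied only by $\emptyset$ and by $\traces(\kK')$ itself; this yields the formula-independent equivalence ``$(\traces(\kK),0)\models\varphi$ iff $\theta_{\kK'}\land\varphi$ is satisfiable'' for \emph{arbitrary} $\varphi$. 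Your $\psi_\kK$ omits these atoms and therefore only pins the satisfying team down to a nonempty \emph{subset} of $\traces(\kK')$; since $\teamltl(\subseteq,\clor)$ is not downward closed, the general equivalence then fails, and you must re-open the proofs of Theorems~\ref{thm:undecidable} and~\ref{thm:sig11_unsat} and exploit the particular shape $\varphi_{I,b}=(\dots)\Llor\top$, for which satisfaction by the full trace set and by some nonempty subset coincide. That patch does work -- the backward directions of those proofs only use that the witnessing subteam is contained in $\traces(\kK_I)$ -- so your argument does establish both hardness claims, but it costs generality and leans on two points you should make explicit: first, that satisfaction of a $p_w$-free formula is preserved when the team is projected to $\ap$ even though the projection may identify traces (this locality property holds for $\teamltl(\subseteq,\clor,\asub)$ by induction, but it is exactly the kind of step that can fail in team semantics and should not be waved through); second, that $\Llor$ is itself encoded via $\subseteq$ and a built-in trace $\{p\}^\omega$ which must be present in the satisfying team, so your forward direction has to include that trace explicitly. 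Adding the inclusion-atom conjunct to $\psi_\kK$ -- free of charge, since $\subseteq$ is available in both target logics -- removes all of these caveats at once and recovers the paper's cleaner black-box reduction.
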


\section{Quantification-based Hyperlogics and Team Semantics}

In this section, we define those quantification-based hyperlogics against which we compare \teamltl in the rest of the paper.
$\teamltl$ and $\hyltl$ are known to have orthogonal expressivity~\cite{kmvz18} but apart from that, nothing is known about the relationship between the different variants of \teamltl and other temporal hyperlogics such as \HQPTL~\cite{MarkusThesis,DBLP:conf/lics/CoenenFHH19}. We aim to identify fragments of the logics with similar expressivity to better understand the relative expressivity of \teamltl for the specification of hyperproperties.

%\subsection{Basics of \hyltl, \HQPTL, and \HQPTLP} \label{sec:hyperdef}

%	\begin{definition}[HyperLTL]
%\label{hyperltl}

%We now define the logics HyperLTL,  $\HQPTL$, and $\HQPTLP$, which are extensions of \LTL %(Def.~\ref{ltl})
%		with explicit trace quantification and quantification of atomic propositions.

\HQPTLP~\cite{DBLP:conf/cav/FinkbeinerHHT20} is a temporal logic for hyperproperties. 
It subsumes \hyltl and \HQPTL, so we proceed to give a definition of \HQPTLP and define the latter logics as fragments.
\HQPTLP extends \LTL with explicit trace quantification and quantification of atomic propositions. 
As such, it also subsumes \QPTL, which can express all $\omega$-regular properties.
Fix an infinite set $\pathvars$ %$= \set{\pi_1, \pi_2, \ldots}$  
of trace variables.
$\HQPTLP$ has three types of quantifiers, one for traces and two for propositional quantification.
\begin{align*}
\varphi &\ddfn  \forall\pi\ldot\varphi \mid \exists\pi\ldot\varphi \mid \forallu p\ldot\varphi \mid \existsu p \ldot\varphi \mid \forallso p\ldot\varphi \mid \existsso p \ldot\varphi \mid \psi \enspace \\
\psi &\ddfn p_\pi \mid \neg p_\pi \mid \psi\lor\psi \mid \psi\land\psi \mid \X\psi \mid \psi\U\psi \mid \psi\W\psi
\end{align*}
Here, $p \in \ap$, $\pi \in \pathvars$, and $\forall\pi$ and $\exists\pi$ stand for universal and existential trace quantifiers, $\forallso p$ and $\existsso p$ stand for (non-uniform) propositional quantifiers, and $\forallu p$ and $\existsu p$ stand for uniform propositional quantifiers.
%		The difference between the two types of propositional quantifiers becomes clear when defining the semantics.
%		Intuitively, non-uniform propositional quantifiers introduce a new atomic proposition on all traces. It therefore resembles true second-order quantification. Uniform quantifiers, on the other hand, require that all traces agree on the introduces proposition. It can therefore be interpreted as quantification over sets of points in time.
We also study two syntactic fragments of  $\HQPTLP$. $\HQPTL$ is $\HQPTLP$ without non-uniform propositional quantifiers, and $\hyltl$ is $\HQPTLP$ without any  propositional quantifiers. In the context of $\HQPTL$, %when we consider only 
%In fragments of $\HQPTL$, when no confusion arises, %and there is no fear of confusion, 
we also %sometimes 
write $\forall p$ and $\exists p$  instead of $\forallu p$ and $\existsu p$.
For an $\LTL$-formula $\varphi$ and trace variable $\pi$, we let $\varphi_\pi$ denote the $\hyltl$-formula obtained from $\varphi$ by replacing all proposition symbols $p$ by their indexed versions $p_\pi$. We extend this convention to tuples of formulae as well.		
%	The logic $\HQPTL$ is the syntactic fragment of $\HQPTLP$ without nonuniform propositional quantifiers, whereas $\hyltl$ is the syntactic fragment of $\HQPTLP$ without any propositional quantifiers.

The semantics of $\HQPTLP$  is defined over a set $T$ of traces. Intuitively, the atomic formula $p_\pi$ asserts that  $p$ holds on trace $\pi$. Uniform propositional quantifications $\forallu p$ and $\existsu p$ add an atomic proposition $p$ such that all traces agree on the valuation of $p$ on any given time step $i$, whereas non-uniform propositional quantifications $\forall p$ and $\exists p$ colour the traces in $T$ in an arbitrary manner.
Non-uniform propositional quantification thus implements true second-order quantification, whereas uniform propositional quantification can be interpreted as a quantification of a set of points in time.
%		Intuitively, uniformly quantified propositions are added uniformly to the traces, i.e., on any given time step, all traces agree on the value of a uniformly quantified atomic proposition. Dropping the requirement that the propositional quantification colors the traces uniformly results in nonuniform propositional quantifiers.
%$As nonuniform propositional quantification resembles monadic second-order quantification, we use subscript ${}^2$ to distinguish nonuniform quantifiers from the uniform ones.

A \emph{trace assignment} is a function $\pathassign : \pathvars \to T$ that maps each trace variable in $\pathvars$ to some trace in $T$.
%Let $T$ and $T'$ be sets of traces over $\ap$ and  $\pathassign$ and $\pathassign'$ trace assignments for $T$ and $T'$, respectively. For $V\subseteq \pathvars$ and $S \subseteq \ap$, we write
%\begin{align*}
%\pathassign =_V \pathassign' &&\text{iff $\pathassign(\pi) = \pathassign'{\pi}$, for every \pi\in V} \\
%T =_{S} T' &&\text{iff $T = T(p)$, for every \sigma\in \pathvars \setminus\{ \pi \}}
%\end{align*}
%
%Let $\pathassign : \pathvars \to T$ be a trace assignment that maps trace variables from $\pathvars$ to traces in $T$. 
A \emph{modified trace assignment} $\pathassign[\pi \mapsto t]$ is equal to $\pathassign$  except that $\Pi[\pi \mapsto t](\pi)=t$.
%maps $\pi$ to $t$ and agrees with $\pathassign$ on all the other variables.
For any subset $A\subseteq\ap$, we write $t\upharpoonright A$ for the projection of $t$ on $A$ (i.e., $(t\upharpoonright A)[i] \dfn t[i]\cap A$ for all $i \in \N$).
For any two trace assignments $\pathassign$ and $\pathassign'$, we write $\pathassign =_A \pathassign'$, if $\big(\pathassign(\pi)\upharpoonright A\big) = \big(\pathassign'(\pi)\upharpoonright A\big)$ for all $\pi\in\pathvars$. Similarly, $T =_A T'$ whenever $ \{ t\upharpoonright A \mid t\in T\} =   \{ t\upharpoonright A \mid t\in T'\}$. 
For a sequence $s \in (\pow{\set{p}})^\omega$ over a single propositional variable $p$, we write $T[p \mapsto s]$ for the set of traces obtained from $T$ by reinterpreting $p$ on all traces as in $s$ while ensuring that $T[p \mapsto s] =_{\ap\setminus\{p\}} T$.
We use $\pathassign [p \mapsto s]$ accordingly.
The satisfaction relation $\Pi,i\models_T\phi$ for $\HQPTLP$-formulae $\phi$  is defined as follows:
\begin{alignat*}{3}
&\pathassign,i \models_T p_\pi       \quad&&\text{iff} \quad && p \in \pathassign(\pi)[i] \\
&\pathassign,i \models_T \varphi_1 \lor \varphi_2         \quad && \text{iff} && \pathassign,i \models_T \varphi_1 \text{ or } \pathassign,i \models_T \varphi_2\\			
&\pathassign,i \models_T \neg p_\pi           \quad&& \text{iff} && p \not\in \pathassign(\pi)[i] \\
&\pathassign,i \models_T \varphi_1 \land \varphi_2         \quad && \text{iff}  && \pathassign,i \models_T \varphi_1 \text{ and } \pathassign,i \models_T \varphi_2 \\
&\pathassign,i \models_T p_\pi       &&\text{iff} && p \in \pathassign(\pi)[i] \\
&\pathassign,i \models_T \neg p_\pi           && \text{iff} &&p \not\in \pathassign(\pi)[i] \\
&\pathassign,i \models_T \varphi_1 \lor \varphi_2         \quad && \text{iff} \quad&&\pathassign,i \models_T \varphi_1 \text{ or } \pathassign,i \models_T \varphi_2 \\
&\pathassign,i \models_T \varphi_1 \land \varphi_2         \quad && \text{iff} \quad&&\pathassign,i \models_T \varphi_1 \text{ and } \pathassign,i \models_T \varphi_2 \\
&\pathassign,i \models_T \X \varphi                && \text{iff} &&\pathassign,i+1 \models_T \varphi \\
&\pathassign,i \models_T \varphi_1 \U \varphi_2             && \text{iff} &&\exists k \geq i \text{ s.t. }  \pathassign, k \models_T \varphi_2 
\text{ and }
\forall m: i \leq m < k  \Rightarrow \pathassign,m \models_T \varphi_1 \\
&\pathassign,i \models_T \varphi_1 \W \varphi_2             && \text{iff} &&\forall k \geq i :  \pathassign, k \models_T \varphi_1 
\text{ or }
\exists m: i \leq  m \leq k  : \pathassign,m \models_T \varphi_2 \\
&\pathassign,i \models_T \exists \pi \ldot \varphi && \text{iff} && \pathassign[\pi \mapsto t], i \models_T \varphi \text{ for some $t \in T$}\\
&\pathassign,i \models_T \forall \pi \ldot \varphi && \text{iff} && \pathassign[\pi \mapsto t], i \models_T \varphi \text{ for all $t \in T$}\\
&\pathassign,i \models_T \existsu p \ldot \varphi      \quad &&\text{iff} \quad &&  \pathassign[p \mapsto s], i \models_{T[p \mapsto s]} \varphi 
\text{ for some $s \in (\pow{\set{p}})^\omega$} \\
%		&&&&& \text{and $\pathassign'\colon \pathvars \to T[p \mapsto s]$ such that}\\
%		 &&&&& \pathassign' =_{\ap\setminus\{p\}} \pathassign \\
&\pathassign,i \models_T \forallu p \ldot \varphi      \quad &&\text{iff} \quad && \pathassign[p \mapsto s], i \models_{T[p \mapsto s]} \varphi
\text{ for all $s \in (\pow{\set{p}})^\omega$}\\
%	&&&&& \text{and $\pathassign'\colon \pathvars \to T[p \mapsto s]$ such that}\\
%	&&&&& \text{$\pathassign' =_{\ap\setminus\{p\}} \pathassign$} \\
&\pathassign,i \models_T \existsso p \ldot \varphi      \quad &&\text{iff} \quad && \pathassign', i \models_{T'} \varphi \text{ for some $T' \subseteq (\pow{\ap})^\omega$}
\text{and $\pathassign'\colon \pathvars \to T'$ such that}\\
&&&&& T =_{\ap\setminus\{p\}} T' \text{ and } \pathassign =_{\ap\setminus\{p\}} \pathassign' \\
&\pathassign,i \models_T \forallso p \ldot \varphi      \quad &&\text{iff} \quad && \pathassign', i \models_{T'} \varphi \text{ for all $T' \subseteq (\pow{\ap})^\omega$}
\text{and $\pathassign'\colon \pathvars \to T'$ such that} \\
&&&&& T =_{\ap\setminus\{p\}} T' \text{ and } \pathassign =_{\ap\setminus\{p\}} \pathassign' 
\end{alignat*}
In the sequel, we describe fragments of \HQPTLP by restricting the quantifier prefixes of formulae. We use $\exists_\pi \,/\, \forall_\pi$ to denote trace quantification, $\existsu_p \,/\, \forallu_p$ for uniform propositional quantification, and $\existsso_p \,/\, \forallso_p$  for non-uniform propositional quantification. We use $\exists$ ($\forall$, resp.) if we do not need to distinguish between the different types of existential (universal, resp.) quantifiers. We write $Q$ to refer to both $\exists$ and $\forall$. For a logic $L$ and a regular expression $e$, we write $eL$ to denote the set of $L$-formulae whose quantifier prefixes are generated by $e$. E.g., $\forall^*\exists^*\HQPTL$ refers to $\HQPTL$-formulae with quantifier prefix $\{\forallu_p,\forall_\pi\}^*\{\existsu_p,\exists_\pi\}^*$.
%\todojh{make sure that this is used uniformly throughout the paper}

Next we relate the expressivity of extensions of $\teamltl$ to fragments of $\HQPTLP$.
We show that $\teamltl(\clor,\flatop)$ and $\teamltl(\clor,{\cneg\!\bot}, \flatop)$ can be translated to the prefix fragments  $\existsso_p \quantu^*_p \exists_\pi^*\forall_\pi$ and $\existsso_p \quantu^*_p \forall_\pi$ of $\HQPTLP$.
The translations provide insight into the limits of the expressivity of different extensions of $\teamltl$.
In particular, they show that in order to simulate the generation of subteams with the $\lor$-operator in \teamltl, one existential second-order quantifier $\exists_p$ is sufficient.
Meanwhile, the difference between downward closed team properties and general team properties manifests itself by a different need for trace quantifiers: for downward closed properties, a single $\forall_\pi$ quantifier is enough, whereas in the general case, a $\exists_\pi^*\forall_\pi$ quantifier alternation is needed.

As a prerequisite for the translation, we establish that evaluating $\teamltl(\clor, {\cneg\!\bot}, \flatop)$-formulae can only create countably many different teams.
For a given team $(T,i)$ and $\teamltl(\clor, {\cneg\!\bot}, \flatop)$-formula $\varphi$, the verification of $(T,i)\models \varphi$ boils down to checking statements of the form $(S,j)\models \psi$, where $S \in \mathcal{S}_T\subseteq \pow{T}$ for some set $\mathcal{S}_T$, $j\in \N$, and $\psi$ is an atomic formula, together with expressions of the form $S_1 =  S_2 \cup S_3$, where $S_1, S_2, S_3\in \mathcal{S}_T$. The following lemma, proven in~\ifbool{appdx}{Appendix \ref{a:lem:countableteams}}{the full version of this paper \cite{arxivVersion}}, implies that the set $\mathcal{S}_T$ can be fixed as a countable set that depends only on~$T$.

\begin{restatable}{lemma}{lemcountableteams}\label{lem:countableteams}
%\begin{lemma}\label{lem:countableteams}
	For every set  $T$ of traces over a countable $\ap$, there exists a countable $\mathcal{S}_T\subseteq \pow{T}$ such that, for every $\teamltl(\clor, {\cneg\!\bot}, \flatop)$-formula $\varphi$ and $i\in\N$,
	%\begin{equation}\label{eq:base}
		$(T,i) \models \varphi \quad\text{iff}\quad (T,i) \models^* \varphi$,
		%\end{equation}
	where the satisfaction relation $\models^*$ is defined the same way as $\models$ except that in the semantic clause for $\lor$ we require additionally that the two subteams $T_1,T_2\in \mathcal{S}_T$.
	%$\models^*$ is defined as the modification of $\models$, where $T_1,T_2\in \mathcal{S}_T$ is additionally asserted in the semantic clause for $\lor$.
	%\end{lemma}
\end{restatable}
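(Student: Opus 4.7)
The plan is to take $\mathcal{S}_T$ as the Boolean closure in $\pow{T}$ of the countable family
\[
\mathcal{B}_T \dfn \{\,\{t\in T : (t,0)\models \psi\} : \psi\text{ an }\LTL\text{-formula over }\ap\,\}.
\]
Countability of $\ap$ makes $\LTL$---and hence $\mathcal{B}_T$ and $\mathcal{S}_T$---countable; closure of $\LTL$ under $\X$ ensures that every time-shifted slice $\{t\in T:(t,i)\models\psi\}=\{t\in T:(t,0)\models \X^i\psi\}$ already lies in $\mathcal{S}_T$. With this fixed countable family the direction $(T,i)\models^*\varphi\Rightarrow(T,i)\models\varphi$ is immediate by induction, since $\models^*$ only restricts $\models$.

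For the forward direction I would proceed by structural induction on $\varphi$, strengthened to ``for every $S\in\mathcal{S}_T$ and every $i\in\N$, $(S,i)\models\varphi$ iff $(S,i)\models^*\varphi$''. Atoms, $\wedge$, $\clor$, ${\cneg\!\bot}$, $\X$, $\U$ and $\W$ are routine: they either create no new subteams or only shift the time index, which is absorbed into the $\LTL$-defining formula. For $\flatop\psi$, the point is that on a singleton team $\lor$ and $\clor$ are semantically indistinguishable (the only non-trivial way to split $\{t\}$ is $\{t\}\cup\emptyset$), so $(\{t\},i)\models\psi$ collapses to an $\LTL$-check of a canonical $\LTL$-translation $\psi^*$ at $(t,i)$; handling this formally under $\models^*$ is arranged by first normalising the formula so that no $\lor$ occurs inside a $\flatop$, after which the $\flatop$-recursion requires no splits of singletons at all.

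The crux is the $\lor$-case. Given a witness split $S=T_1\cup T_2$ with $(T_k,i)\models\psi_k$, I need $T_1',T_2'\in \mathcal{S}_T$ with $T_1'\cup T_2'=S$ that still witness satisfaction. The plan is to canonicalise the split by grouping traces of $S$ according to their $\Sub(\varphi)$-type---the tuple, over pairs $(\chi,j)\in\Sub(\varphi)\times\N$, of truth values $\truth{\chi^*}_{(t,j)}$---and re-assigning each type class as a block to one side. Traces sharing a $\Sub(\varphi)$-type are interchangeable across any witness split of any subformula of $\varphi$, so the re-assignment still witnesses $(T_k',i)\models\psi_k$. The main obstacle is that the type classes are naively intersections of countably many $\LTL$-slices and so need not automatically lie in the Boolean closure $\mathcal{S}_T$; the delicate step is to argue that the particular partition required for any single split can be coarsened to a Boolean combination of only finitely many $\LTL$-slices---exploiting that $\Sub(\varphi)$ is finite and that only finitely many time indices are genuinely discriminating for any one split of a witness of $\varphi$.
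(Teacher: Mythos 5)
Your backward direction and your treatment of the connectives other than $\lor$ are fine, but the $\lor$-case---which you yourself flag as the crux---is not established, and under your choice of $\mathcal{S}_T$ that claim \emph{is} the entire content of the lemma. Concretely, the assertion that ``only finitely many time indices are genuinely discriminating for any one split'' breaks down for splits witnessing formulae such as $({\cneg\!\bot}\land\G\LTLeventually p)\lor({\cneg\!\bot}\land\G\LTLeventually q)$: under the synchronous semantics a part $T_1$ of a witnessing split must satisfy $\G\LTLeventually p$, i.e.\ the set $\bigcap_{t\in T_1}\{k\in\N : p\in t[k]\}$ must be unbounded, so whether a regrouping of traces is admissible depends on an \emph{infinite} intersection of atomic slices and cannot be read off from finitely many of them. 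Your canonical split by $\Sub(\varphi)$-type does yield a correct split (traces agreeing on all relevant atoms at all times are genuinely interchangeable), but the resulting blocks are countable intersections of $\LTL$-slices, and you give no argument that they, or any admissible coarsening of them, lie in the finite Boolean closure $\mathcal{S}_T$. What you would need is a strong definability theorem---every witnessing split of every team formula can be replaced by an $\LTL$-definable one---which is far more than the lemma requires and which does not follow from the finiteness of $\Sub(\varphi)$ alone.

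The paper sidesteps the issue entirely by not asking $\mathcal{S}_T$ to be definable. It defines a nondeterministic map $\Sub$ that, for each formula $\psi$ and time point $j$, records the subteams occurring in one (witnessing, when one exists) evaluation of $(T,j)\models\psi$: each $\lor$ contributes one guessed pair $S_1\cup S_2=S$ together with $S$ itself, each $\U$/$\W$ contributes a countable union over future time points, and atoms (including $\flatop\psi$, treated atomically) contribute only the current team. Each such collection is countable, and $\mathcal{S}_T$ is the union over the countably many pairs $(\psi,j)$ of the sets $\Sub\big((T,j),\psi\big)$, hence countable; the equivalence $(T,i)\models\varphi\Leftrightarrow(T,i)\models^*\varphi$ then holds essentially by construction, because the witnessing subteams of any satisfied formula were placed into $\mathcal{S}_T$ in advance. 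To salvage your route you would have to either prove the coarsening claim (I would expect difficulties exactly at $\G\LTLeventually$-type subformulae) or enlarge $\mathcal{S}_T$ from the Boolean closure of $\LTL$-slices to a family that explicitly contains the witnesses---at which point you have reproduced the paper's argument.
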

Using of the above lemma, we obtain translations from the most interesting extensions of $\teamltl$ to weak prefix fragments of $\HQPTLP$; for details and proofs see~\ifbool{appdx}{Appendix \ref{a:lem:countableteams}}{the full version of this paper \cite{arxivVersion}}.
%
%Using the previous results, 
%	We can now proceed to prove the main theorem of this section.
%

\begin{restatable}{theorem}{teamltltohyperqptlthm}
%\begin{theorem}
	\label{s:teamltl2hyperqptl}
	%	Let $\mathcal{A}$ be a set of generalised atoms.
	For every $\varphi\in \teamltl(\clor, {\cneg\!\bot}, \flatop)$ there exists an equivalent  
	%$2\EXPTIME$-computable  
	{\rm$\HQPTLP$}-formula $\varphi^*$ in the $\existsso_p \quantu^*_p \exists_\pi^*\forall_\pi$ fragment. If $\varphi\in \teamltl(\clor,\flatop)$, $\varphi^*$ can be defined in the $\existsso_p \quantu^*_p \forall_\pi$ fragment.
	%The size of $\varphi^*$ is at most doubly exponential w.r.t. the size of $\varphi$.
	%For formulae without generalised atoms, the translation is linear.
	The size of $\varphi^*$ is linear w.r.t. the size of $\varphi$.
	%\end{theorem}
\end{restatable}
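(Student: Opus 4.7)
The plan is to proceed by induction on the structure of $\varphi$, exploiting Lemma~\ref{lem:countableteams} to restrict attention to subteams drawn from a fixed countable collection. Fix the set $T$ of traces and let $\mathcal{S}_T = \{S_0, S_1, \dots\} \subseteq \pow{T}$ (with $S_0 = T$) be the countable collection from the lemma. The strategy is to encode membership ``$t \in S_n$'' by an outermost second-order propositional quantifier $\existsso p$ assigning each trace $t$ a non-uniform labelling, and then to use uniform propositional quantifiers $\existsu q$ as pointers selecting particular subteams from the enumeration via suitable $\LTL$ constraints. With this setup, ``trace $\pi$ lies in subteam $S_n$'' becomes a temporal formula in $p_\pi$ and $q$, laid out across time so as to remain retrievable at every time point reached during the evaluation of $\varphi$.

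The translation $\varphi \mapsto \varphi^*$ is then defined recursively, carrying a context formula $\chi(\pi)$ expressing ``$\pi$ is in the currently active subteam'' (initially $\chi \equiv \top$). Atoms $p$ and $\neg p$ become $\forall \pi\ldot \chi(\pi) \to p_\pi$ and $\forall \pi\ldot \chi(\pi) \to \neg p_\pi$, respectively; $\land$ and $\clor$ distribute over the recursion; and temporal operators $\X$, $\U$, $\W$ pass through directly. For a splitjunction $\varphi_1 \lor \varphi_2$, I introduce two fresh uniform propositional quantifiers $\existsu q_1 \existsu q_2$ that guess the indices of the two subteams, assert $\forall \pi\ldot \chi(\pi) \leftrightarrow \chi_1(\pi) \lor \chi_2(\pi)$ to enforce the covering constraint $S_{n_1}\cup S_{n_2} = T_{\text{current}}$, and recurse on $\varphi_j$ with context $\chi_j$. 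The universal subteam quantifier $\flatop \psi$ becomes $\forall \pi\ldot \chi(\pi) \to \psi_\pi$, where $\psi_\pi$ is the $\pi$-indexed version of $\psi$; this is sound by singleton equivalence for $\LTL$. Finally, the atom $\cneg\!\bot$ translates to $\exists \pi\ldot \chi(\pi)$, contributing one existential trace quantifier.

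The last step is prenex normalisation. Since $T$ is not altered by temporal operators, trace quantifiers as well as uniform and non-uniform propositional quantifiers all commute with $\X$, $\U$, and $\W$, and hence can be pulled to the outermost prefix. Universal trace accesses from atoms and from $\flatop$ can be merged under a single trailing $\forall \pi$ by distributivity of $\forall$ over $\land$; each use of $\cneg\!\bot$ contributes a distinct $\exists \pi$ which joins an $\exists_\pi^*$ block. The resulting shape matches $\existsso_p \quantu^*_p \exists_\pi^* \forall_\pi$, and in the absence of $\cneg\!\bot$ (i.e.\ for $\teamltl(\clor,\flatop)$) the existential trace block vanishes, yielding $\existsso_p \quantu^*_p \forall_\pi$. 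Linearity of the size is immediate: each construct in $\varphi$ contributes $O(1)$ quantifiers and $O(1)$ body symbols, and the contexts $\chi_j$ do not accumulate through nested splitjunctions, as they are refreshed, not conjoined, at each $\lor$.

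The main obstacle lies in the encoding: a single non-uniform proposition $p$ must simultaneously represent membership in all countably many subteams of $\mathcal{S}_T$ in a manner that remains queryable at every time point reached by the evaluation. This is achieved by staggering the membership bits across time positions indexed by the uniform pointers, and making the encoding repeat or self-propagate far enough into the future to survive the temporal shifts demanded by $\X$, $\U$, and $\W$ occurring inside $\varphi$. It is precisely the countability of $\mathcal{S}_T$, given by Lemma~\ref{lem:countableteams}, that makes such a uniform addressing scheme feasible.
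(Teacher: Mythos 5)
Your high-level architecture matches the paper's: invoke Lemma~\ref{lem:countableteams}, use one outermost $\existsso_p$ quantifier to encode the countable family $\mathcal{S}_T$ along the trace set, use uniformly quantified propositions as pointers to subteams, translate $\lor$ by guessing two fresh pointers with a union constraint, $\cneg\!\bot$ by $\exists\pi$, and $\flatop$ by $\forall\pi$. However, there is a genuine gap in your treatment of the temporal operators. You let $\X,\U,\W$ ``pass through directly'' and then claim that all quantifiers ``commute with $\X$, $\U$, and $\W$'' so that prenexing is routine. This is false: $\existsu q$ does not commute with $\W$ or with the left-hand side of $\U$ (for instance $\G\,\existsu q\ldot\phi$ allows the witness sequence to depend on the time point, whereas $\existsu q\ldot\G\,\phi$ does not), and $(\exists\pi\ldot\phi)\U\psi\not\equiv\exists\pi\ldot(\phi\U\psi)$. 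Since every splitjunction introduces fresh uniform quantifiers and every $\cneg\!\bot$ introduces a trace quantifier, and these occur under $\U/\W$ in general, your prenexing step breaks exactly where it is needed; note also that the syntax of $\HQPTLP$ only admits quantifiers in the prefix, so the intermediate non-prenex formulae are not even well-formed. The paper avoids this by \emph{reifying time}: the translation $\tr_{(q,r)}$ carries a uniformly quantified proposition $r$ true at exactly one position as a pointer to the current time step, decodes atoms as $\LTLeventually(r_\pi\land\ell_\pi)$, and translates $\U$ and $\W$ into quantification over fresh pointers $r',r''$ constrained by quantifier-free ordering formulae such as $r\preceq r'$ and $r''\prec r'$. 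No temporal operator of the source formula ever governs a quantified subformula, so the output is prenex by construction and the dependency of the subteam split on the time point is captured by the quantifier order $\forallu r'\,\existsu q_1 q_2\cdots$.

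A second, smaller gap: in the case $\flatop\psi$, the subformula $\psi$ may itself contain $\clor$ and $\cneg\!\bot$, so ``the $\pi$-indexed version $\psi_\pi$'' is not an $\LTL$ formula and singleton equivalence does not directly apply. The paper first rewrites $\psi$ by a linear bottom-up procedure into a pure $\LTL$ formula $\psi^*$ with $\flatop\psi\equiv\flatop\psi^*$ before indexing. Your linearity claim is otherwise in line with the paper's, but it depends on both of these repairs.
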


\section{Decidable fragments of \teamltl}
In this section, we further study the expressivity landscape between the frameworks of \teamltl and \hyltl. We utilise these connections to prove decidability of the model checking problem of certain variants of \teamltl. 
We compare 
% relate 
the expressivity of extensions of $\teamltl$ that satisfy certain semantic invariances to that of $\uhyltl$ and $\existsu_p^*\forall_\pi\HQPTL$.
Thereby, we provide a partial answer to an open %partially answer to a 
problem posed in \cite{kmvz18} concerning the complexity of the model checking problem of $\teamltl$ and its extensions. The problem 
is known to be in $\PSPACE$ for the fragment of $\teamltl$ without $\lor$  \cite{kmvz18}. However, for $\teamltl$ with $\lor$, no meaningful upper bounds for the problem was known before. The best previous upper bound could be obtained from $\teamltl(\cneg)$, for which the problem is highly undecidable \cite{LUCK2020}. %Note that 
The reason for this lack of results is that developing algorithms for team logics with $\lor$ turned out to be comparatively hard.
The main source of difficulty is that the semantic definition of $\lor$ does not yield any reasonable compositional brute force algorithm: the verification of $(T,i)\models \varphi\lor\psi$ with $T$ generated by a finite Kripke structure proceeds by checking that $(T_1,i)\models \varphi$ and $(T_2,i)\models \psi$ for some $T_1\cup T_2 = T$,
but it can well be that $T_1$ and $T_2$ cannot be generated from any finite Kripke structure whatsoever.
% However it is fully possible that $T_1$ and $T_2$ cannot be generated from any finite Kripke structure whatsoever.
%
The main results of this section are the decidability of the model checking problems of the \emph{$k$-coherent fragment} of $\teamltl(\cneg)$ and the \emph{left-flat fragment} of $\teamltl(\clor, \flatop)$. 
We obtain inclusions to $\EXPSPACE$ by translations to $\uhyltl$ and $\existsu_p^*\forall_\pi\HQPTL$.

\subsection{The k-coherent fragment and $\uhyltl$}
The universal fragment of HyperLTL is one of the most studied fragments as it contains the set of \emph{safety} hyperproperties expressible in HyperLTL~\cite{Clarkson+Schneider/10/Hyperproperties}. In particular, formulae of the form $\forall \pi_1 \ldots \forall \pi_k\ldot \psi$ state $k$-safety properties (if $\psi$ is a safety LTL formula)~\cite{DBLP:conf/csfw/FinkbeinerHT19}, where non-satisfying trace sets contain bad prefixes of at most $k$ traces. 
In general, $\forall^k\hyltl$ formulae satisfy the following inherent invariance:
\(
\emptyset,i \models_T \varphi  \quad \text{iff} \quad \emptyset,i \models_{T'} \varphi, \text{ for all $T'\subseteq T$ s.t. $\lvert T'\rvert \leq k$}.
\)
That is, a $\forall^k\hyltl$-formula $\phi$ is satisfied by a trace set $T$, iff it is satisfied by all subsets of $T$ of size at most $k$.
This property is called $k$-\emph{coherence} in the team semantics literature \cite{Kontinenj13}. The main result of this section is that all $k$-coherent properties expressible in 
%extensions of the form $\teamltl(\mathcal{A})$ of
$\teamltl(\cneg)$ are expressible in $\forall^k\hyltl$. This implies that, with respect to trace properties, 
all logics between $\teamltl(\flatop)$ and 
$\teamltl(\cneg)$ are equi-expressive to $\forall\hyltl$, i.e., LTL.
%	These results indicate that the expressive power of \teamltl is mostly manifested with regards to properties that are not $k$-coherent. 

\begin{definition}
%Let $\kappa$ be a cardinal number.
Let $\mathcal{A}$ be any collection of atoms and connectives introduced so far. A  formula $\varphi$ in $\teamltl(\mathcal{A})$ is said to be \emph{$k$-coherent} ($k\in\N$) if for every team $(T,i)$,
%A  team semantics formula $\varphi$ is \emph{$k$-coherent}, where $k\in \N$, if the equivalence
\[
(T,i) \models \varphi \quad \text{iff} \quad (S,i)\models \varphi \text{ for every $S\subseteq T$ with $\lvert S \rvert \leq k$}.
\]
%Clearly, $\phi$ is $1$-coherent iff it is flat.
%A formula that is $1$-coherent is also called \emph{flat}.
%holds for every team $(T,i)$.
\end{definition}
%
%

%\subsection{Weak fragments of \teamltl}

We will next show that, with respect to $k$-coherent properties, $\teamltl(\cneg)$ is at most as expressive as  $\forall^k\hyltl$. We define a translation from $\teamltl(\cneg)$ to $\uhyltl$ that preserves the satisfaction relation with respect to teams of bounded size. 
Given a finite set $\Phi$ of trace variables, the translation is defined as follows:
\begin{align*}
p^\Phi  &\dfn \bigwedge_{\pi\in\Phi} p_\pi &&& 
(\neg p)^\Phi  &\dfn \bigwedge_{\pi\in\Phi} \neg p_\pi &&&
(\cneg \phi)^\Phi  &\dfn  \neg\phi^{\Phi} \\
%(\cneg \phi)^{\Phi,\cneg}  &\dfn  \phi^{\Phi} \\ 
(\X \phi)^\Phi &\dfn \X \phi^\Phi &&&
(\phi \land \psi)^{\Phi} &\dfn  \phi^{\Phi} \land \psi^{\Phi} &&&
(\phi \lor \psi)^\Phi &\dfn  \bigvee_{\Phi_0 \cup \Phi_1 = \Phi}\phi^{\Phi_0} \land \psi^{\Phi_1} \\
(\phi \U \psi)^\Phi &\dfn \phi^\Phi \U \psi^\Phi &&&
(\phi \W \psi)^\Phi &\dfn \phi^\Phi \W \psi^\Phi &&& &
\end{align*}
%}
where $\neg \phi^{\Phi}$ stands for the negation of $\phi^{\Phi}$ in negation normal form.
The following lemma, from which the subsequent theorem follows, is proved by induction. See~\ifbool{appdx}{Appendix \ref{a:ltl2hyper}}{the full version of this paper \cite{arxivVersion}} for detailed proofs.
\begin{restatable}{lemma}{ltlttohyperlm}
%\begin{lemma}
\label{lemma:ltl2hyper}
Let  $\phi$ be a formula of $\teamltl(\cneg)$ and $\Phi=\{\pi_1,\dots, \pi_k\}$ a finite set of trace variables. For any team $(T,i)$  with $\vert T\rvert \leq k$, any set $S\supseteq T$  of traces, and any assignment $\Pi$ with $\Pi[\Phi] = T$, we have that
%\begin{equation}\label{lemma:ltl2hyper_eq1}
	$(T,i) \models \phi \text{ iff } \Pi,i \models_S \phi^{\Phi}$.
	%\end{equation}%\todojh{Is the $\phi^{\{\pi_1,\dots,\pi_k\}}$ notation defined somewhere?}\todofy{it's defined above; changed a little the notation}
Furthermore, if $\phi$ is  downward closed and $T\neq \emptyset$, then
\(
(T,i) \models \phi \text{ iff } \emptyset,i \models_T \forall \pi_1\dots \forall\pi_k \ldot \phi^{\Phi}.
\)		
\end{restatable}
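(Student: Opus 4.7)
The plan is to prove the main statement by structural induction on $\phi$, with the strengthened inductive hypothesis that the equivalence $(T',j)\models\psi \Leftrightarrow \Pi',j\models_{S'}\psi^{\Phi'}$ holds at every subformula $\psi$, for \emph{every} suitable $\Phi'$, team $(T',j)$ with $|T'|\leq|\Phi'|$, superset $S'\supseteq T'$, and assignment $\Pi'$ with $\Pi'[\Phi']=T'$. The base cases $p$ and $\neg p$ are immediate from the definition of $p^\Phi=\bigwedge_{\pi\in\Phi}p_\pi$: the identity $\{\Pi(\pi)[i]\mid\pi\in\Phi\}=\{t[i]\mid t\in T\}$ provided by $\Pi[\Phi]=T$ turns the team-semantic quantification over traces into the big conjunction. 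Boolean negation is immediate since $(\cneg\phi)^\Phi$ is by definition the NNF of $\neg\phi^\Phi$, so the IH transfers through negation. The cases for $\land$, $\X$, $\U$, and $\W$ go through directly because team-LTL semantics advances time synchronously across $T$, which mirrors the HyperLTL evaluation of $\phi^\Phi$ on the fixed assignment $\Pi$ whose traces all step together.

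The main obstacle is the splitjunction case $\phi\lor\psi$, whose translation is $\bigvee_{\Phi_0\cup\Phi_1=\Phi}\phi^{\Phi_0}\land\psi^{\Phi_1}$. For the forward direction, given a witnessing team split $T=T_1\cup T_2$, I would define $\Phi_j\dfn\{\pi\in\Phi\mid\Pi(\pi)\in T_j\}$ for $j\in\{0,1\}$. Then $\Phi_0\cup\Phi_1=\Phi$, and critically $\Pi[\Phi_j]=T_j$: the inclusion $\subseteq$ is by construction, while $\supseteq$ uses surjectivity of $\Pi$ onto $T$ (coming from $\Pi[\Phi]=T$), which ensures every trace in $T_j\subseteq T$ has some $\pi\in\Phi$ mapping to it, and such $\pi$ lies in $\Phi_j$. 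In particular $|T_j|\leq|\Phi_j|$, so the IH applies to $(T_j,i)$ with the assignment $\Pi$ and superset $S$, yielding the two hyperformula conjuncts. The backward direction is dual: from a witnessing index split $\Phi_0\cup\Phi_1=\Phi$, set $T_j\dfn\Pi[\Phi_j]$, obtain $T_1\cup T_2=\Pi[\Phi]=T$, and apply the IH to recover the team split.

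For the furthermore claim, the forward direction uses downward closure: given any $\Pi\colon\Phi\to T$, the image $\Pi[\Phi]\subseteq T$ is a subteam, so by downward closure $(\Pi[\Phi],i)\models\phi$, and since $|\Pi[\Phi]|\leq k$ and $\Pi[\Phi]=\Pi[\Phi]$, the first part of the lemma gives $\Pi,i\models_T\phi^\Phi$. This holds for every such $\Pi$, so $\emptyset,i\models_T\forall\pi_1\dots\forall\pi_k\ldot\phi^\Phi$. The backward direction carries over the hypothesis $|T|\leq k$ that is operative throughout the lemma: since $T\neq\emptyset$ and $|T|\leq k$, one can pick a surjective assignment $\Pi\colon\Phi\to T$, instantiate the universal quantifier with $\Pi$ to get $\Pi,i\models_T\phi^\Phi$, and then invoke the first part of the lemma (with $S=T$ and $\Pi[\Phi]=T$) to conclude $(T,i)\models\phi$.

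The delicate bookkeeping in the $\lor$ case is what the bound $|T|\leq|\Phi|=k$ is for: it guarantees enough trace variables to cover the current team, so every split on the team side can be pulled back to a split on the variable side. All other inductive cases are syntactic translations of the semantic clauses and introduce no new difficulty.
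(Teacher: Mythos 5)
Your proof is correct and follows essentially the same route as the paper's: structural induction where the only nontrivial case is $\lor$, resolved by the correspondence between team splits $T=T_1\cup T_2$ and covers $\Phi_0\cup\Phi_1=\Phi$ via $\Pi$, with the furthermore claim obtained from downward closure (forward) and a surjective instantiation of the universal quantifiers (backward). Your treatment of the split case is in fact slightly more explicit than the paper's, which leaves the pullback of a team split to a variable split implicit.
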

\begin{restatable}{theorem}{kcoherentteamhyper}
%\begin{theorem}
\label{thm:kcoherent-team-hyper}
%Let $\mathcal{A}$ be any collection of generalised atoms.
Every $k$-coherent property that is definable in $\teamltl(\cneg)$ is also definable in $\forall^k\hyltl$. 
%\end{theorem}
\end{restatable}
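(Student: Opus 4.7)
The plan is to show that, given a $k$-coherent formula $\phi \in \teamltl(\cneg)$ and trace variables $\Phi = \{\pi_1, \dots, \pi_k\}$, the $\forall^k\hyltl$-formula
\[
\psi \;:=\; \forall \pi_1 \dots \forall \pi_k \ldot \phi^{\Phi}
\]
defines the same property as $\phi$. Note that $\phi^\Phi$ stays inside $\hyltl$: the only clause that could escape is the one for $\cneg$, but $(\cneg\phi)^\Phi = \neg\phi^\Phi$ is just a standard negation that can be pushed into negation normal form as specified just before Lemma~\ref{lemma:ltl2hyper}. The central tool is Lemma~\ref{lemma:ltl2hyper} itself, which relates $\phi$ on teams of size at most $k$ to $\phi^\Phi$ evaluated on a trace assignment that enumerates the team on~$\Phi$ (against an arbitrary ambient set of traces).

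For the forward direction, suppose $(T,i) \models \phi$. Pick any trace assignment $\Pi \colon \pathvars \to T$ and set $S := \Pi[\Phi]$. Then $S \subseteq T$ and $|S| \leq k$, so by $k$-coherence $(S,i) \models \phi$, and Lemma~\ref{lemma:ltl2hyper} (instantiated with team $(S,i)$, outer set $T \supseteq S$, and assignment $\Pi$) gives $\Pi, i \models_T \phi^\Phi$. Since $\Pi$ was arbitrary, $\emptyset, i \models_T \psi$. For the backward direction, suppose $\emptyset, i \models_T \psi$ and consider any nonempty $S \subseteq T$ with $|S| \leq k$. Since $|\Phi| = k \geq |S|$, one can choose an assignment $\Pi \colon \pathvars \to T$ with $\Pi[\Phi] = S$, duplicating elements of $S$ across the $k$ trace variables of $\Phi$ as needed. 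Then $\Pi, i \models_T \phi^\Phi$, so Lemma~\ref{lemma:ltl2hyper} yields $(S,i) \models \phi$. Ranging over all such $S$ and invoking $k$-coherence of $\phi$ then concludes $(T,i) \models \phi$.

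The expected obstacle is a mild technicality around the empty subteam. Because $\emptyset$ is itself a subteam of $T$ of size at most $k$, $k$-coherence of $\phi$ forces $(\emptyset,i) \models \phi$ whenever $\phi$ is satisfied by any team at time $i$; hence the vacuous truth $\emptyset, i \models_\emptyset \psi$ at $T = \emptyset$ always aligns with the $\teamltl$ side on non-degenerate inputs (in particular, trace sets generated by Kripke structures, which are always nonempty). The only degenerate case is when $\phi$ is nowhere satisfied, which is then definable by any unsatisfiable $\forall^k\hyltl$ formula. Beyond this wrinkle the nontrivial content is carried entirely by Lemma~\ref{lemma:ltl2hyper}, with the key combinatorial observation being that $\{\,\Pi[\Phi] : \Pi \colon \pathvars \to T\,\}$ is exactly the family of subteams of $T$ of cardinality at most~$k$ - which is precisely what makes a single block of $k$ universal trace quantifiers sufficient.
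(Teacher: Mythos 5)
Your proposal is correct and follows essentially the same route as the paper: the translation is the same formula $\forall\pi_1\dots\forall\pi_k\ldot\phi^{\Phi}$, the work is delegated to Lemma~\ref{lemma:ltl2hyper} in exactly the same way, and your handling of the empty subteam (splitting off the nowhere-satisfiable case, which the paper translates to $\forall\pi\ldot\bot$) matches the paper's use of the hypothesis $\emptyset,i\models\varphi$. No substantive differences.
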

\begin{comment}
\begin{proof}
Let $\varphi$ be a $k$-coherent $\teamltl(\cneg)$-formula.
If \mbox{$\emptyset, i \not\models \varphi$} for some $i\in\mathbb{N}$, then $(T, j) \not\models \varphi$ for every team $(T,j)$. %\todofy{why?}\todojv{If $(T, j) \models \varphi$ then, by $k$-coherence, $(\emptyset, j) \models \varphi$. Hence $(\emptyset, i) \models \varphi$, for all $i$.}\todofy{I was confused about the ``for all $i$" part: so, if the team is empty, the $i$ does not have any function, right?} \todojv{Right.} 
We then translate $\varphi$ to $\forall \pi \ldot \bot$. Assume now $\emptyset, i \models \varphi$. For any nonempty team $(T,i)$, we have that
{\setlength{\belowdisplayskip}{0pt}
\begin{align*}
&(T,i)\models \varphi \\
\Leftrightarrow~ &(S,i) \models \varphi \text{ for every $S\subseteq T$ with $\vert S \rvert \leq k$}\tag{since $\varphi$ is $k$-coherent}\\
\Leftrightarrow~& \Pi,i \models_{T} \varphi^{\{\pi_1,\dots,\pi_k\}} \text{ for every $\Pi$ s.t  $\Pi[\{\pi_1,\dots,\pi_k\}] \subseteq T$}\tag{by Lemma \ref{lemma:ltl2hyper} \& $\emptyset, i \models \varphi$}\\
\Leftrightarrow~ &\emptyset,i \models_{T} \forall \pi_1\dots \forall\pi_k \ldot \varphi^{\{\pi_1,\dots,\pi_k\}}.
\end{align*}
}
%
%
\end{proof}
\end{comment}
Since model checking for $\uhyltl$ is $\PSPACE$-complete and its data complexity 
(model checking with a fixed formula)
is $\mathrm{NL}$-complete \cite{conf/cav/FinkbeinerRS15}, and since the above translation from ${\teamltl(\cneg)}$ to $\uhyltl$ is exponential for any $k$, we get the following corollary:
\begin{corollary}\label{cor:kcoherentMC}
For any fixed $k\in\N$, the model checking problem for $\teamltl(\cneg)$, restricted to $k$-coherent properties, is in $\EXPSPACE$, and in $\mathrm{NL}$ for data complexity.
\end{corollary}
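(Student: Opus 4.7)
The plan is to derive Corollary~\ref{cor:kcoherentMC} as a direct consequence of the translation from $k$-coherent $\teamltl(\cneg)$ to $\forall^k\hyltl$ established in Theorem~\ref{thm:kcoherent-team-hyper}, combined with the known model checking complexity of $\uhyltl$ from \cite{conf/cav/FinkbeinerRS15}. Concretely, given a $k$-coherent $\teamltl(\cneg)$-formula $\varphi$ and a Kripke structure $\kK$, I would first (effectively) produce the $\forall^k\hyltl$-formula $\varphi' \dfn \forall \pi_1\ldots\forall\pi_k\ldot \varphi^{\{\pi_1,\dots,\pi_k\}}$ guaranteed by Lemma~\ref{lemma:ltl2hyper} and Theorem~\ref{thm:kcoherent-team-hyper}, after a preliminary check of whether $\emptyset,i\models\varphi$ (replacing $\varphi$ by $\forall\pi\ldot\bot$ if not). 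Then I would invoke the $\PSPACE$ model checking algorithm of \cite{conf/cav/FinkbeinerRS15} on the input $(\varphi',\kK)$.

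The main step is a size bound on $\varphi'$. Inspecting the translation $(\cdot)^\Phi$ for the fixed $\Phi=\{\pi_1,\dots,\pi_k\}$, all clauses are linear in the size of the subformulae except for the clause for $\lor$, which takes the form $(\phi\lor\psi)^\Phi = \bigvee_{\Phi_0\cup\Phi_1 = \Phi}\phi^{\Phi_0}\land\psi^{\Phi_1}$; this multiplies the current size by $3^{k}$ since each variable in $\Phi$ independently chooses among $\Phi_0$ only, $\Phi_1$ only, or both. Hence a formula $\varphi$ of size $n$ is translated to $\varphi^\Phi$ of size $2^{O(kn)}$, which is exponential when $k$ is fixed; the outer prefix $\forall\pi_1\ldots\forall\pi_k$ adds only $O(k)$ symbols. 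Feeding this exponentially large $\forall^k\hyltl$-formula into the $\PSPACE$ model checker yields an $\EXPSPACE$ upper bound, as claimed.

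For data complexity, the formula $\varphi$ is fixed, so $\varphi'$ has constant size. The $\uhyltl$ model checking algorithm of \cite{conf/cav/FinkbeinerRS15} is known to run in $\mathrm{NL}$ when the formula is fixed. Since $\varphi'$ can be precomputed without looking at $\kK$, we directly inherit the $\mathrm{NL}$ bound with respect to the size of the Kripke structure.

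The only mildly delicate point is ensuring the translation can be carried out on the fly within the target space budget rather than materialising the exponential formula explicitly; this is routine since the construction is fully compositional and the set-indexing by $\Phi\subseteq\{\pi_1,\dots,\pi_k\}$ can be handled on demand. All other ingredients (the equivalence of $\varphi$ and $\varphi'$ on the team $\traces(\kK)$, the reduction to the empty-team case, and the $\PSPACE$ model checking algorithm) are cited from earlier results, so no further machinery is needed.
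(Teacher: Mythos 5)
Your proposal is correct and follows essentially the same route as the paper: the paper derives the corollary in one line from Theorem~\ref{thm:kcoherent-team-hyper} plus the $\PSPACE$/$\mathrm{NL}$ bounds for $\uhyltl$ model checking in \cite{conf/cav/FinkbeinerRS15}, noting only that the translation is exponential for fixed $k$. Your added size analysis of the $\lor$-clause is accurate, and the worry about computing the translation ``on the fly'' is unnecessary, since an $\EXPSPACE$ budget already accommodates materialising the exponential-size formula.
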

Clearly $(T,i) \models \flatop \varphi$ \text{iff} $\emptyset,i \models_T \forall \pi \ldot \varphi_\pi$ for any $\varphi \in \LTL$, and hence we obtain the following:
\begin{corollary}
\label{1coh_HyperLTL}
%Let $\mathcal{A}$ be any collection of generalised atoms and connectives from $\{\clor, \asub\}$.
The restriction of $\teamltl(\flatop)$ to formulae of the form $\flatop \varphi$ is expressively equivalent to $\forall\hyltl$.
\end{corollary}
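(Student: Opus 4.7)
The plan is to prove the corollary by establishing both inclusions. The direction from $\flatop \varphi$ to $\forall\hyltl$ is essentially the equivalence stated in the line preceding the corollary: I would let $\varphi_\pi$ denote the $\hyltl$-formula obtained from an $\LTL$-formula $\varphi$ by replacing every atomic proposition $p$ by $p_\pi$, and then verify, using the singleton-equivalence property of $\teamltl$, that $(T,i) \models \flatop \varphi$ iff for every $t \in T$, $(t,i) \models \varphi$ in the standard $\LTL$ sense, iff $\emptyset, i \models_T \forall \pi \ldot \varphi_\pi$.

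For the converse inclusion, I would take an arbitrary $\forall\hyltl$-formula $\forall \pi \ldot \psi$ and observe that, since $\pi$ is the only trace variable in scope of $\psi$, every atom in $\psi$ has the shape $p_\pi$ or $\neg p_\pi$. Let $\psi^* \in \LTL$ be the formula obtained by erasing the $\pi$-subscripts from $\psi$. A routine induction on $\psi$ (advancing through the Boolean and temporal connectives, which operate pointwise along the single trace) shows that for any trace $t$ and any assignment $\Pi$ with $\Pi(\pi) = t$, one has $\Pi, i \models_T \psi$ iff $(t,i) \models \psi^*$. Taking the conjunction over all $t \in T$ and applying singleton equivalence in the other direction, I obtain $\emptyset, i \models_T \forall \pi \ldot \psi$ iff $(T, i) \models \flatop \psi^*$.

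The main step of substance is the inductive equivalence $\Pi,i \models_T \psi \iff (t,i) \models \psi^*$ in the converse direction, but this is essentially the $k=1$ instance of Lemma~\ref{lemma:ltl2hyper} restricted to quantifier-free and splitjunction-free formulae, so no genuine obstacle arises: no $\lor$ means no set partitioning to worry about, and no $\cneg$ means no negated inductive hypothesis to manage. Both translations are linear in formula size, so the equi-expressivity is moreover effective and efficient.
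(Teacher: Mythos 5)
Your proposal is correct and follows essentially the same route as the paper, which derives the corollary directly from the single equivalence $(T,i) \models \flatop \varphi$ iff $\emptyset,i \models_T \forall \pi \ldot \varphi_\pi$ stated just before it; you merely make explicit the (immediate) converse observation that every $\forall\hyltl$ sentence has the form $\forall\pi\ldot\varphi_\pi$ for the $\LTL$-formula obtained by erasing trace subscripts. No gaps.
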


\begin{comment}
We end this section with an example which implies that for any  collection $\mathcal{A}$ of generalised atoms and connectives from $\{\clor, \asub\}$,  $\teamltl(\mathcal{A},\subseteq)$ can express exactly the same trace properties as $\LTL$. \todofy{I moved this to here, not completely sure}\todofy{Is Corollary 12 a corollary of Thm 10 and Example 11, or just Thm 10?}

\begin{example}\label{ex:tracepropertiesinteamltl}
Every HyperLTL-sentence of the form $\forall \pi \varphi$ and $\exists \pi \phi$ with $\varphi$   quantifier free can be equivalently expressed in $\teamltl(\subseteq)$ by the formulae $\phi(p/p_\pi)\subseteq \top$ and $\top \subseteq \phi(p/p_\pi)$, respectively.  %$\varphi^*\subseteq \top$ and $\top \subseteq \psi^*$, where  $\varphi^*$ and $\psi^*$ are obtained from $\varphi$ and $\psi$ by replacing all occurrences of atomic formulae of the form $p_\pi$ by $p$, respectively. 
To see why, for any team $(T,i)$ and trace assignment $\Pi$, we have that $\Pi,i\models_T\exists \pi\phi\,\Leftrightarrow\, \pathassign[\pi \mapsto t], i \models_T \varphi$ for some $t\in T\,\Leftrightarrow\,(t,i)\models\phi(p/p_\pi)$ for some $t\in T\,\Leftrightarrow\, \truth{\top}_{(t,i)}=1=\truth{\phi(p/p_\pi)}_{(t,i)}$ for some $t\in T\,\Leftrightarrow\, (T,i)\models\top\subseteq\phi(p/p_\pi)$. Similarly, $\Pi,i\models_T\forall \pi\phi\,\Leftrightarrow\, (T,i)\models\phi(p/p_\pi)\subseteq \top$.
\todofy{Here I assume we'll introduce the substitution notation, or take it for granted (as it's very basic)}%\todofy{Probably needs a very brief proof}
\end{example}
\end{comment}

While model checking for $k$-coherent properties is decidable, %unfortunately
checking whether a given formula defines a $k$-coherent property is not, in general, decidable.
%There does not exist a general algorithm for checking whether a formula is $\kappa$-coherent.
\begin{restatable}{theorem}{thmundeccoh}\label{thm:undeccoh}
%\begin{theorem}\label{thm:undeccoh}
Checking whether a $\teamltl(\subseteq, \clor)$-formula is $1$-coherent is undecidable.
%\end{theorem}
\end{restatable}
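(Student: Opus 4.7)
The plan is to reduce from the (un)satisfiability problem for $\teamltl(\subseteq,\clor)$, which is undecidable by \Cref{cor:undecsat}. Given a formula $\varphi$ over propositions $\ap_0$, I introduce two fresh propositions $q,p\notin\ap_0$ and consider
\[
\psi_\varphi \;:=\; \bigl(\varphi \land (\top \subseteq q) \land (\top \subseteq \neg q)\bigr) \lor p,
\]
writing $\varphi'$ for the left disjunct. The key design idea is that the clashing inclusion atoms $(\top\subseteq q)\land(\top\subseteq\neg q)$ act as a \emph{singleton killer}: at any time $i$ a singleton $(\{t\},i)$ cannot satisfy both atoms (the first forces $q \in t[i]$, the second forbids it), so $\varphi'$ is only ever satisfied by the empty team or by teams of size $\ge 2$. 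Dually, a trace-duplication construction --- replacing each $t \in T$ by two copies $t^+,t^-$ that agree with $t$ on $\ap_0$ but differ on $q$ at the current position --- shows that $\varphi'$ is satisfiable iff $\varphi$ is. The target claim is that $\psi_\varphi$ is $1$-coherent iff $\varphi$ is unsatisfiable, which yields the reduction.

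For the direction ``$\varphi$ unsatisfiable $\Rightarrow$ $\psi_\varphi$ is $1$-coherent'', I first invoke the shift identity $(T,i)\models\chi \iff (T[i,\infty],0)\models\chi$, which holds uniformly in $\teamltl(\subseteq,\clor)$ and lifts unsatisfiability of $\varphi$ (and hence of $\varphi'$) from time~$0$ to every time. Consequently, in any splitjunction witness for $(T,i)\models\psi_\varphi$ the $\varphi'$-branch must receive the empty team, so $\psi_\varphi$ collapses to the literal $p$. Since $p$ is flat and $\teamltl(\subseteq,\clor)$ has the empty team property, $\psi_\varphi$ is $1$-coherent.

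For the reverse direction, given a non-empty team $T$ with $(T,0)\models\varphi$ I will construct the duplicated team $T^*$ over $\ap_0\cup\{q,p\}$ described above, and additionally set $p\notin t[0]$ for every $t\in T^*$. A routine projection argument (using that $\varphi$ only mentions $\ap_0$ and that $T^*\upharpoonright\ap_0 = T$) yields $(T^*,0)\models \varphi'$, so $(T^*,0)\models\psi_\varphi$ via the trivial $\lor$-split. On the other hand, the singleton-killer property prevents any $(\{t\},0)$ with $t\in T^*$ from satisfying $\varphi'$, and by construction no such singleton satisfies $p$ either; thus $(T^*,0)$ is a witness to the failure of $1$-coherence.

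The main obstacle is calibrating $\psi_\varphi$ so that its non-$1$-coherence is controlled \emph{precisely} by the satisfiability of $\varphi$: the gadget $\varphi'$ must simultaneously preserve satisfiability (handled by the duplication), forbid singleton witnesses (handled by the clashing inclusion atoms), and, via the fresh fallback $p$, produce a non-flatness witness exactly when satisfying teams of $\varphi$ exist. A lesser subtlety --- that $1$-coherence quantifies over all $(T,i)$ whereas satisfiability refers to time~$0$ --- is dispatched by the shift identity.
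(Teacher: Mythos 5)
Your proposal is correct, but it takes a genuinely different route from the paper. The paper's proof defines a flattening $\varphi\mapsto\varphi^*$ (replace $\clor$ by $\lor$ and each inclusion atom by a conjunction of biconditionals), observes that $\varphi$ and $\varphi^*$ agree on singleton teams, and then shows that $\varphi$ is unsatisfiable iff ($\varphi$ is $1$-coherent \emph{and} $\varphi^*$ is \LTL-unsatisfiable); since the second conjunct is decidable in \PSPACE{} \cite{SistlaC85} and unsatisfiability is $\Pi^0_1$-hard by Corollary~\ref{cor:undecsat}, $1$-coherence checking is $\Pi^0_1$-hard. That is a conjunctive (truth-table-style) reduction with essentially no new formula construction. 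You instead build a gadget $\psi_\varphi=\bigl(\varphi\land(\top\subseteq q)\land(\top\subseteq\neg q)\bigr)\lor p$ and prove the clean many-one equivalence ``$\psi_\varphi$ is $1$-coherent iff $\varphi$ is unsatisfiable''; the clashing inclusion atoms correctly kill singleton witnesses of the left disjunct, and your duplication of traces on the fresh proposition $q$ correctly restores satisfiability of the left disjunct from any satisfying team for $\varphi$. Your argument buys a direct many-one reduction and avoids invoking the decidability of \LTL{} satisfiability, at the cost of needing two auxiliary lemmas you only sketch (the time-shift identity and invariance of satisfaction under $\ap_0$-projection/trace duplication); both are routine inductions for $\teamltl(\subseteq,\clor)$ and do go through. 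Note also that your gadget relies specifically on inclusion atoms to express the non-downward-closed singleton killer, so it generalises along a different axis than the paper's closing remark, which transfers the result to any extension admitting a computable singleton-preserving translation into a logic with decidable satisfiability.
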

\begin{proof}
The idea of the undecidability proof is as follows: Given any $\teamltl(\subseteq, \clor)$-formula $\varphi$, we can use a simple rewriting rule to obtain an $\LTL$-formula $\varphi^*$ such that $\varphi$ is not satisfiable (in the sense of $\teamltl$) if and only if $\varphi$ is $1$-coherent and $\varphi^*$ is not satisfiable (in the $\LTL$-sense).
Now, since checking $\LTL$-satisfiability can be done in $\PSPACE$~\cite{SistlaC85} and non-satisfiability for $\teamltl(\subseteq, \clor)$ is $\Pi^0_1$-hard by Corollary \ref{cor:undecsat}, it follows that checking 1-coherence is $\Pi^0_1$-hard as well.
For a detailed proof, see~\ifbool{appdx}{Appendix \ref{a:thm:undeccoh}}{the full version of this paper \cite{arxivVersion}}.
\end{proof}

The same holds for any extension of $\teamltl$ with an undecidable satisfiability or validity problem and whose formulae can be computably translated to $\teamltl$ while preserving satisfaction over singleton teams.

\subsection{The left-flat fragment and $\HQPTLP$}\label{sec:decidable}

In this subsection, we show that formulae $\phi$ from the left-flat fragment  of $\teamltl(\clor, \flatop)$ (defined below) can be translated to \HQPTL formulae that are linear in the size of $\phi$. %The size of the obtained \HQPTL formula is linear in the size of the original formula. 
The known model checking algorithm of \HQPTL~\cite{MarkusThesis} then immediately yields a model checking algorithm for the left-flat fragment of $\teamltl(\clor, \flatop)$. %We first define the {\em left-flat fragment} of $\teamltl$.

\begin{definition}[The left-flat fragment]\label{left_flat_df}
Let $\mathcal{A}$ be a collection of atoms and connectives.
A $\teamltl(\mathcal{A})$-formula belongs to the left-flat fragment if in each of its subformulae of the form $\psi \U \phi$ or $\psi \W \phi$, $\psi$ is a flat formula (as defined in \Cref{sec:teamltl}).
\end{definition}
Such defined fragment %is a non-trivial fragment, as it 	
%Note that this is a non-trivial fragment. It 
allows for arbitrary use of the $\LTLeventually$ operator, and therefore remains incomparable to \hyltl~\cite{kmvz18}. %An exemplary formula that falls into the left-flat fragment is:
For instance, 
$
\LTLeventually \dep(a,b) \lor \LTLeventually \dep(c,d)
$ is a nontrivial formula in this fragment.
It states that the set of traces can be partitioned into two parts, one where eventually $a$ determines the value of $b$, and another where eventually $c$ determines the value of $d$.
%A typical system satisfying such a property would have some unobserved input that influences the dependency of some of its variables.
The property is not expressible in \hyltl, because %cannot express such a property as 
\hyltl cannot state the property ``there is a point in time at which $p$ holds on all (or infinitely many) traces''~\cite{conf/fossacs/BozzelliMP15}.

It follows from Theorem \ref{thm:undeccoh} that checking whether a $\teamltl(\subseteq, \clor)$-formula belongs to the left-flat fragment is undecidable (as flatness equals 1-coherency). Nevertheless, a decidable syntax for left-flat formulae can be obtained by using the operator $\flatop$. Formulae $\flatop \psi$ are always flat and equivalent to $\psi$ if $\psi$ is flat. Therefore, in Definition \ref{left_flat_df}, instead of imposing the semantic condition of $\psi$ being flat in subformulae $\psi \U \phi$ and $\psi \W \phi$, we could require that the subformulae must be of the form $(\flatop \psi) \U \phi$ or $(\flatop \psi) \W \phi$.

We now describe a translation from the left-flat fragment of $\teamltl(\clor,\flatop)$ to the $\existsu_p^*\forall_\pi$ fragment of \HQPTL. 
In this translation, we make use of the fact that satisfaction of flat formulae $\phi$ can be determined with the usual (single-traced) \LTL semantics. 
In the evaluation of $\phi$, it is thus sufficient to consider only finitely many subteams, whose temporal behaviour can be reflected by existentially quantified~$q$-sequences. The quantified sequences refer to points in time, at which subformulae have to hold for a trace to belong to a team.

A left-flat $\teamltl(\clor,\flatop)$-formula $\varphi$ will be translated into a formula with existential propositional quantifiers followed by a single trace quantifier.
%of the form $\exists r^0 \ldots r^n \ldot \forall \pi \ldot [\phi,r]$, where $[\phi, r]$  is quantifier free.
The existential propositional quantifiers either indicate a point in time at which a subformula of $\phi_i$ is evaluated or resolve the decision for $\clor$-choices. For subformulae, we use propositions $r^{\phi_i}$, and if the same subformula occurs multiple times, it is associated with different $r^{\phi_i}$. For the resolution of $\clor$-choices we use propositions $d^{\phi \clor \psi}$.
Additionally, $r$ is a free proposition for the point in time at which $\varphi$ is to be evaluated.
The universal quantifier $\forall \pi$ sorts each trace into one of the finitely many teams.

Let $\forall \pi \ldot \hat{\psi}$ be the \hyltl formula given by \Cref{thm:kcoherent-team-hyper} for any flat formula $\psi$ (since a flat formula is $1$-coherent). 
%Second, we introduce helpful syntactic sugar as follows.
%\begin{align*}
%r \prec r' &\dfn \G(r_\pi \leftrightarrow \X r'_{\pi}) \\
%\mathit{once}(r) &\dfn \G(r_\pi \rightarrow \X\G \neg r_\pi)
%\end{align*}
We translate $\varphi$ inductively with respect to $r$:
\begin{alignat*}{3}
&[p, r] &&~\coloneqq~&& \G (r_\pi \rightarrow p_\pi) \\
&[\neg p, r] &&~\coloneqq~&& \G (r_\pi \rightarrow \neg p_\pi) \\
&[\X \varphi, r] &&~\coloneqq~&& \G(r_\pi \leftrightarrow \X r^{\phi}_{\pi}) \land [\varphi, r^{\phi}] \\
&[\flatop \varphi, r] &&~\coloneqq~&& \G (r_\pi \rightarrow \hat\phi) \\
&[\varphi \land \psi, r] &&~\coloneqq~&& [\varphi, r] \land [\psi, r] \\
&[\varphi \lor \psi, r] &&~\coloneqq~&& [\varphi, r] \lor [\psi, r] \\
&[\varphi \clor \psi, r] &&~\coloneqq~&& (d^{\phi \clor \psi}_\pi \rightarrow [\varphi, r]) \land (\neg d^{\phi \clor \psi}_\pi \rightarrow [\psi, r])\\
&[\varphi \W \psi, r] &&~\coloneqq~&& \G (r_\pi \rightarrow r^{\phi}_\pi \W (r^{\psi}_\pi \land \X \G \neg r^{\psi}_\pi))
\end{alignat*}
Now, let $r^1, \ldots r^n$ be the free propositions occurring in $[\varphi, r]$ and $\pi$ the free trace variable. Define the following $\existsu_p^*\forall_\pi$ \HQPTL formula:
$
\existsu r \existsu r^1 \ldots \existsu r^n \ldot \forall \pi \ldot r_\pi \land \X\G \neg r_\pi \land [\phi, r].
$
Correctness of the translation can be argued intuitively as follows.
The left-flat formula $\phi$ can be evaluated independently from the other traces in a team.
Therefore, the operators $\U$ and $\W$, whose right-hand sides argue only about a single point in time, can only generate finitely many teams. Thus, there are only finitely many points of synchronization, all of which are quantified existentially.
Every trace fits into one of the teams described by the quantified propositional variables. We verify that the translation is indeed correct in~\ifbool{appdx}{Appendix \ref{a:correctness}}{the full version of this paper \cite{arxivVersion}}.
As the construction in \Cref{thm:kcoherent-team-hyper} yields a formula $\hat{\phi}$ whose size is linear in the original formula $\phi$, the translation is obviously linear. We therefore state the following theorem.

%We thus obtain the following theorem.
\begin{restatable}{theorem}{teamltltoeahyperqptl}
%\begin{theorem}
\label{thm:translation}
For every formula $\phi$ from the left-flat fragment of $\teamltl(\clor, \flatop)$, we can compute an equivalent $\existsu_p^*\forall_\pi$ \HQPTL formula of size linear in the size of $\phi$.
%\end{theorem}
\end{restatable}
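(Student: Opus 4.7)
The plan is to establish the theorem by structural induction on the left-flat formula $\varphi$, proving the following invariant: for every team $(T,i)$,
$(T,i) \models \varphi$ iff there exist uniform propositional sequences assigning values to all fresh propositions that occur in $[\varphi,r]$ such that, with $r$ interpreted to hold exactly at time $i$, the resulting extended trace set $T'$ satisfies $\pathassign,0 \models_{T'} \forall \pi \ldot [\varphi,r]$ in the $\HQPTL$ sense. Pulling the existential propositional quantifiers associated with each sub-translation to the front then yields a prenex formula of shape $\existsu_p^* \forall_\pi$, and since each clause in the inductive definition of $[\cdot,\cdot]$ adds only a bounded amount of syntactic material per subformula occurrence, the translated formula is linear in the size of $\varphi$.

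First I would dispatch the easy cases. For $p$ and $\neg p$ the equivalence is immediate from the definitions. For $\varphi \land \psi$ it follows directly from the semantics of conjunction under $\forall\pi$. For $\X\varphi$ the fresh marker $r^\phi$, pinned to the immediate successor of $r$ via $\G(r_\pi \leftrightarrow \X r^\phi_\pi)$, reduces the step to the induction hypothesis. For $\flatop \varphi$ correctness follows from Corollary~\ref{1coh_HyperLTL} (and the construction in \Cref{thm:kcoherent-team-hyper}), since $\flatop\varphi$ is equivalent to $\forall \pi \ldot \hat\varphi$. For $\varphi \clor \psi$, I use that the uniformly quantified proposition $d^{\varphi \clor \psi}$ evaluates to the same truth value on every trace at every given time, so the translation faithfully implements a single Boolean choice between the two branches, matching the semantics of $\clor$.

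The conceptually central case is the splitjunction $\varphi \lor \psi$. Here I will exploit that the translation occurs under the universal trace quantifier $\forall \pi$, so the Boolean disjunction $[\varphi,r] \lor [\psi,r]$ is resolved independently on each trace. In the forward direction, a witnessing partition $T = T_1 \cup T_2$ with $(T_j,i) \models \varphi_j$ combined with the induction hypothesis yields propositional witnesses for each subteam; since the fresh propositions appearing in $[\varphi,r]$ and in $[\psi,r]$ are disjoint by construction, these witnesses combine into a single assignment on $T$ without interference. In the backward direction, I define $T_1 \dfn \{t \in T : [\varphi,r] \text{ holds for } t\}$ and $T_2 \dfn T \setminus T_1$ and apply the induction hypothesis to each $T_j$. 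The cases for $\U$ and $\W$ then rely crucially on left-flatness: since the left operand $\varphi$ is flat and hence $1$-coherent, satisfaction of $\varphi$ on any subteam at any given time reduces to trace-wise satisfaction, so the uniform marker $r^\phi$ correctly encodes the set of time points at which the left operand must hold, while the uniform marker $r^\psi$, constrained by $r^\psi_\pi \land \X\G \neg r^\psi_\pi$ to hold at exactly one position, provides the team-wide synchronisation point at which the right operand is evaluated (together with the recursive sub-translations $[\varphi, r^\phi]$ and $[\psi, r^\psi]$ contributed by the inductive definition).

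The main obstacle I expect is the splitjunction case, where I must carefully verify that the uniform existential witnesses chosen for the two sides can indeed be combined: this leans on both the disjointness of fresh propositions introduced on either side and on the left-flatness assumption, which prevents the left operand of any nested $\U$ or $\W$ from imposing team-level constraints beyond those already captured pointwise by the uniform markers. Once this independence is in place, the remaining inductive steps are mechanical, and the linearity of the final formula is immediate from the syntactic form of the rules.
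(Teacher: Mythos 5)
Your proposal is correct and follows essentially the same route as the paper's proof: the paper establishes exactly your invariant as a lemma (satisfaction of $\varphi$ by $(T,i)$ iff existence of uniform propositional witnesses making $\forall\pi\ldot[\varphi,r]$ true with $r$ pinned to position $i$) and proves it by the same structural induction, handling $\lor$ by merging the disjoint fresh-proposition witnesses in one direction and splitting $T$ according to which disjunct each trace satisfies in the other, and handling $\U$/$\W$ by invoking Theorem~\ref{thm:kcoherent-team-hyper} on the flat left operand exactly as you describe. The only difference is presentational: you make explicit the non-interference argument for combining witnesses in the splitjunction case, which the paper leaves implicit.
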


Recall that the model checking problem of \hyltl formulae with one quantifier alternation is $\EXPSPACE$-complete~\cite{conf/cav/FinkbeinerRS15} in the size of the formula, %. Furthermore, it is 
and $\PSPACE$-complete in the size of the Kripke structure %$K$~
\cite{conf/cav/FinkbeinerRS15}.
These results directly transfer to \HQPTL~\cite{MarkusThesis} (in which \HQPTL was called \textit{\hyltl with extended quantification} instead): for model checking a \HQPTL formula, the Kripke structure can be extended by two states  generating all possible $q$-sequences.
Since the translation from $\teamltl(\clor, \flatop)$ to \HQPTL yields a formula in the $\existsu_p^*\forall_\pi$ fragment with a single quantifier alternation and preserves the size of the formula, we obtain the following theorem. %for $\teamltl(\clor, \flatop)$ model checking.

\begin{theorem}\label{thm:leftflatMC}
The model checking problem for left-flat $\teamltl(\clor, \flatop)$-formulae is in $\EXPSPACE$, and in $\PSPACE$ for data complexity.
\end{theorem}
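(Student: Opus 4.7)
The plan is to derive this theorem as a direct consequence of Theorem \ref{thm:translation} together with the known model checking complexity of \HQPTL. Given any left-flat $\teamltl(\clor,\flatop)$-formula $\varphi$ and Kripke structure $\kK$, I first compute, in linear time, the equivalent $\existsu_p^*\forall_\pi\HQPTL$-formula $\varphi^*$ supplied by Theorem \ref{thm:translation}. Because the translation preserves satisfaction on team $(\traces(\kK),0)$, the model checking problem reduces to deciding whether $\emptyset,0\models_{\traces(\kK)}\varphi^*$.

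Next I would eliminate the uniform propositional quantifiers in $\varphi^*$ by the standard product-construction described just before the theorem: for each propositional variable $q$ quantified by $\existsu q$, expand $\kK$ with two copies of each state differing only on the value of $q$ (so that the choice of $q$-value at each step is made uniformly across traces by the Kripke structure). Iterating this construction across the polynomially many uniformly quantified variables yields a Kripke structure $\kK'$ whose size is polynomial in $\kK$ and linear in $\varphi^*$, and an equivalent $\forall_\pi\hyltl$ question $\emptyset,0\models_{\traces(\kK')}\forall\pi\ldot\widehat\varphi$ where $\widehat\varphi$ is the quantifier-free matrix. More precisely, since $\varphi^*$ has prefix $\existsu_p^*\forall_\pi$, the overall question is a single $\exists^*\forall$ \hyltl instance of size linear in $\varphi$ over a Kripke structure polynomial in~$\kK$.

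Finally I would invoke the result of \cite{conf/cav/FinkbeinerRS15} that \hyltl model checking for one quantifier alternation is in $\EXPSPACE$ in the formula and $\PSPACE$ in the Kripke structure. Plugging in the size bounds: the formula size has grown only linearly and the Kripke structure only polynomially, so the overall procedure runs in $\EXPSPACE$ in the formula and, with $\varphi$ fixed, in $\PSPACE$ in the Kripke structure. This establishes both claims of Theorem~\ref{thm:leftflatMC}.

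There is essentially no hard obstacle here because Theorem \ref{thm:translation} already does the real semantic work; the only point that needs a bit of care is verifying that the Kripke-structure extension for the $\existsu_p$-prefix truly incurs only a polynomial blow-up (one factor of~$2$ per quantified proposition, and the quantifier block has length linear in $\varphi$), so that the $\EXPSPACE$ bound on the formula side and the $\PSPACE$ data-complexity bound are both preserved by the composition of the two reductions.
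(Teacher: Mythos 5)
Your proposal follows the paper's own route exactly: apply Theorem~\ref{thm:translation} to obtain an $\existsu_p^*\forall_\pi$ \HQPTL formula of linear size, eliminate the propositional quantifiers by a Kripke-structure extension, and then invoke the $\EXPSPACE$/$\PSPACE$ bounds for \hyltl model checking with one quantifier alternation from \cite{conf/cav/FinkbeinerRS15}. So the decomposition and the invoked results are the same.

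The one step that would fail as written is your quantifier-elimination gadget. Doubling every state of $\kK$ on the value of $q$ produces a structure whose traces carry \emph{arbitrary, per-trace} $q$-labelings; a subsequent $\forall_\pi$ then ranges over traces that need not agree on $q$, which breaks the uniformity required by the semantics of $\existsu q$, and moreover the existential choice of the single $q$-sequence is no longer represented anywhere. The correct construction (the one the paper alludes to, following the treatment of \HQPTL in \cite{MarkusThesis}) is to add a \emph{disjoint} two-state component that generates exactly the traces in $(\pow{\{q\}})^\omega$, and to replace $\existsu q$ by a trace quantifier $\exists \pi_q$ ranging over that component, rewriting each occurrence of $q$ in the matrix as $q_{\pi_q}$. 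This keeps the formula in the $\exists^*_\pi\forall_\pi$ fragment of \hyltl, adds only two states per quantified proposition (so the structure grows by an additive term linear in the formula, not even a product), and the $\EXPSPACE$ combined-complexity and $\PSPACE$ data-complexity bounds then follow exactly as you conclude. With that gadget substituted, your argument coincides with the paper's.
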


\section{Conclusion}
%We studied \teamltl under the synchronous semantics, a logic that can concisely express a collection of hyperproperties that reason about sub-parts of systems.
We studied \teamltl under the synchronous semantics. \teamltl is a powerful but not yet well-studied logic that can express hyperproperties without explicit quantification over traces or propositions. As such, properties which need various different quantifiers in traditional (quantification-based) hyperlogics become expressible in a concise fashion.
One of the main advantages of \teamltl is the ability to equip it with a range of atomic statements and connectives to obtain logics of varying expressivity and complexity.
%\teamltl is especially useful for expressing properties that involve decomposing the trace set of a system to parts satisfying different dependencies, or properties, such as bounded termination, that cannot be reduced to checking properties of bounded size parts of the trace set.

We systematically studied \teamltl with respect to two of the main questions related to logics: the decision boundary of its model checking problem and its expressivity compared to other logics for hyperproperties.
%The logic becomes particularly expressive when equipped with generalised atoms and additional connectives like Boolean or.
We related the expressivity of \teamltl to the hyperlogics \hyltl, \HQPTL, and \HQPTLP, which are obtained by extending the traditional temporal logics \LTL and \QPTL with trace quantifiers.
We discovered that the logics $\teamltl(\clor, \flatop)$ and $\teamltl(\clor, {\cneg\!\bot}, \flatop)$ are expressively complete with respect to all downward closed, and all atomic notions of dependence, respectively.
We were able to show that $\teamltl(\clor, {\cneg\!\bot}, \flatop)$ can be expressed in a fragment of \HQPTLP.
Furthermore, for $k$-coherent properties, $\teamltl(\cneg)$  is subsumed by $\forall^*$\hyltl.
Finally, the left-flat fragment of $\teamltl(\clor,\flatop)$ can be translated to \HQPTL. The last two results induce efficient model checking algorithms for the respective logics.
In addition, we showed that model checking of $\teamltl(\subseteq, \clor)$ is already undecidable, and that the additional use of the $\asub$ quantifier makes the problem highly undecidable.
%See Tables \ref{tab:EXPResults} and \ref{tab:MCResults} on page \pageref{tab:EXPResults} for an overview of our results.

We conclude with some open problems and directions for future work: 
What is the complexity of model checking for \teamltl (with the disjunction $\lor$ but without additional atoms and connectives)? Is it decidable, and is there a translation to \HQPTL?
An interesting avenue for future work is also to explore team semantics of more expressive logics than $\LTL$ such as linear time $\mu$-calculus, or branching time logics such as $\CTL^*$ and the full modal $\mu$-calculus.

\bibliography{biblio}

\ifbool{appdx}{\appendix

\section{Generalised Atoms in TeamLTL}\label{a:sec:GA}

Section \ref{sec:undec} established that already very weak fragments of $\teamltl(\cneg)$ have highly undecidable model checking. One way to obtain expressive, but computationally well behaved, team-logics for hyperproperties is to carefully introduce novel atoms stating those atomic properties that are of the most interest, and then seeing whether decidable logics, or fragments, with those atoms can be uncovered.
In Section \ref{sec:teamltl} we already saw a few examples of important atomic statements; namely dependence and inclusion atoms.
%\todojh{The following about generalised atoms could use some more intuition+motivation. maybe some exemplary property?}
More generally, arbitrary properties of teams induce  {\em generalised atoms}. These  atoms were first introduced, in the first-order team semantics setting, by Kuusisto \cite{Kuusisto15} using generalised quantifiers. %In the propositional setting, 
Propositional generalised atoms essentially encode second-order truth tables.

\begin{definition}[Generalised atoms for $\LTL$]
	An \emph{$n$-ary generalised atom} %($n$-GDA) 
	is an  $n$-ary operator $\#_G(\phi_1,\dots, \phi_n)$, with an associated nonempty set $G$ of $n$-ary relations over the Boolean domain $\{0,1\}$, that applies only to $\LTL$-formulae $\phi_1,\dots, \phi_n$. Its team semantics is defined as:
	%An \emph{$n$-ary generalised atom} ($n$-GDA) is a set $G$ of $n$-ary relations over the Boolean domain $\{0,1\}$.
	%Every $n$-GDA $G$ is associated with an $n$-ary operator $\#_G(\phi_1,\dots, \phi_n)$, that takes $n$ $\LTL$-formulae as parameters.
	%For a team $(T,i)$ the satisfaction of $\#_G(\phi_1,\dots, \phi_n)$ is defined as follows:
	\[
		(T,i) \models \#_G(\phi_1,\dots, \phi_n)  \quad\text{\normalfont iff }~~ \{(\truth{\phi_1}_{(t,i)}, \dots,  \truth{\phi_n}_{(t,i)} ) \mid t\in T  \} \in G.
	\]
	%We say that an $n$-GDA $G$ is \emph{downward closed} if $A\subseteq B\in G$ implies $A\in G$, for all $A$ and $B$.
	%In order to ease the notation, we often identify the GDA $G$ with the operator $\#_G$.
\end{definition}
E.g., dependence atoms of type $\dep(\varphi,\psi)$ can be expressed as generalised atoms via the second-order truth table $G\dfn\{  A \subseteq  \{0,1\}^2 \mid \text{ if $(a, b_1),(a,b_2) \in A$ then $b_1=b_2$} \}$. Thus, when considered as a generalised atom, the dependence atom is, in fact, a collection of generalised atoms; one for each arity.
Note that $\flatop$ can be interpret as a downward closed generalised atom where $G\dfn\{\emptyset, \{1\}\}$, when applied to $\teamltl$-formulae. From now on, we treat $\flatop$ as a generalised atom.
Often this distinction has no effect, for example, $\teamltl(\clor,\flatop)$-formula $\flatop \varphi$ is equivalent with the $\teamltl(\flatop)$-formula $\flatop \varphi'$, where $\varphi'$ is obtained from $\varphi$ by replacing all $\clor$ by $\lor$, and simply eliminating the symbols $\flatop$ from the formula.  
%In order to simplify the presentation, we restrict our attention to generalised atoms that have the empty team property. Using this convention mitigates the need to add special cases for our constructions below for the special case of the empty team.
We denote by $\allatoms$ ($\dcatoms$, resp.) the collection of all (all downward closed, resp.) generalised atoms. 

\begin{comment}
%\todojv{If we allow asynchronous next $\Xa$, it might be better to use a syntax with a time function that denotes current time of each trace.}
\begin{align*}
\lVert  \phi_1,\dots,\phi_n  \subseteq  \psi_1,\dots,\psi_n  \rVert_s &\dfn   \{ (T,i) \in \teams  \mid \forall t_1\in T \exists t_2\in T \text{ s.t. } (\phi_j \in t_1^s[i] \text{ iff } \psi_j \in t_2^s[i]), \text{ for } 1\leq j\leq n  \}\\
\lVert \phi \clor \psi \rVert_s &\dfn  \lVert \phi \rVert_s \cup \lVert \psi \rVert_s \\
%\lVert \Xa \phi \rVert_s &\dfn \{ T\in \teams  \mid \exists T_0 \cup T_1 = T : T_0 \cup T_1[1,\infty)\in  \lVert \phi \rVert_s\}\\
\lVert \phi \intimp \psi \rVert_s &\dfn \{ T \in \teams \mid \forall T'\subseteq T : T'\in  \lVert \phi \rVert_s \Rightarrow T' \in \lVert \psi \rVert_s \}
\end{align*}

The non-empty disjunction has the following semantics:
\begin{align*}
(T,i)\models \varphi\nlor \psi \,\Leftrightarrow\, &T=\emptyset, \text{ or }  (T_1,i)\models \varphi \text{ and } (T_2,i)\models \psi,\\
&\text{for some } T_1\neq\emptyset\neq T_2 \text{ s.t. } T=T_1\cup T_2.
\end{align*}
The following connective is interdefinable with $\nlor$.
\begin{align*}
(T,i)\models \nesub\varphi \,\Leftrightarrow\, &T=\emptyset \text{ or }  (S,i)\models \varphi \text { for some } \emptyset \neq S\subseteq T.
\end{align*}

\end{comment}
%

It is straightforward to verify (by induction) that for any collection $\mathcal A$ of downward closed atoms and connectives (a connective is downward closed, if it preserves downward closure), the logic $\teamltl(\mathcal A)$ is downward closed as well. For instance,  $\teamltl(\dep,\clor,\flatop)$ is downward closed.
%Clearly, for any downward closed logic $\teamltl(\mathcal A)$, we have $\teamltl(\mathcal A)\equiv\teamltl(\mathcal A,\asub)$.
%, and $\teamltl(\subseteq,\flatop)$ is union closed. 

Propositional and modal logic with $\clor$ are known to be expressively complete with respect to nonempty downward closed properties (that are closed under so-called \emph{team bisimulations})%satisfying certain natural invariance conditions 
\cite{HellaLSV14, YANG2016}.
The next proposition establishes that, in the \LTL setting, $\teamltl(\clor,\flatop)$ is very expressive among the downward closed logics; all downward closed atoms %that have the empty team property \todofy{Can delete ``that have the empty team property" now. See my comment in Definition 5} 
can be expressed in the logic. 
The translation given in the proposition is inspired by an analogous one given in  \cite{YANG2016} for propositional team logics. 
%follows using an argument similar to an analogous translation given in \cite{YANG2016} for propositional team semantics. 
%	For proof details, see Appendix \ref{a:gd2bor}. %It turns out that the operator $\flatop$ is essential for the result, for $\teamltl(\clor)$ does not suffice.

\begin{restatable}{proposition}{gdtoborprop}
	%\begin{proposition}
		\label{gd2bor}
		For any $n$-ary generalised atom $\#_G$ and $\LTL$-formulae $\varphi_1,\dots, \varphi_n$,
		we have that 
		\[
			\#_G(\varphi_1,\dots,\varphi_n)\equiv \Clor_{R\in G}\bigvee_{(b_1,\dots,b_n)\in R}\big(\flatop(\varphi_1^{b_1}\wedge\dots\wedge \varphi_n^{b_n})\wedge\cneg\!\!\bot\big)
		\]
		where $\varphi_i^{1}:=\varphi_i$ and $\varphi_i^0:=\neg \varphi_i$ in negation normal form. 
		If $\#_G$ is downward closed, the above translation can be simplified to
		%downward closed and has the empty team property
		\[
			\#_G(\varphi_1,\dots,\varphi_n)\,\equiv\, \Clor_{R\in G}\bigvee_{(b_1,\dots,b_n)\in R}\flatop(\varphi_1^{b_1}\wedge\dots\wedge \varphi_n^{b_n}).
		\]
		%In particular, 
		Consequently, $\teamltl(\dcatoms, \clor) \equiv  \teamltl(\clor,\flatop)$ and
		$\teamltl(\allatoms, \clor) \!\equiv  \!\teamltl(\clor,\flatop, {\cneg\!\!\bot})\!\leq\!\teamltl(\cneg)$,
		where $\cneg\!\!\bot$ is read as a generalised atom stating that the team is non-empty. %\todofy{We actually did not define the semantics of $\bot$. Maybe should be added in the semantic clauses}\todojv{Was defined as a shorthand "as usual" in Section II. Added now $\bot \dfn p\lor \neg p$.}
		The elimination of each atom yields a doubly exponential disjunction over linear sized formulae.
		%
		%Let $\calC$ be a collection of downward closed atoms. 
		%For every $n$-ary $\#\in \calC$ and $\LTL$ formulae $\varphi_1,\dots, \varphi_n$, the formula $\#(\varphi_1,\dots, \varphi_n)$ is equivalent to some $\teamltl(\clor)$-formula of the form
		%\(
		%\Clor_{i\in I} \psi_i,
		%\)
		%{\color{red}where each $\psi_i$ a Boolean combination of the formulae $\varphi_1,\dots, \varphi_n$ and hence linear} in the size of $\#(\varphi_1,\dots, \varphi_n)$.
		%Moreover $\teamltl(\mathcal{C} \cup \{\clor, \asub\}) \equiv  \teamltl(\clor)$.
		%\end{proposition}
\end{restatable}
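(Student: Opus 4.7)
The plan is to prove each equivalence by unpacking both sides against the semantic definitions of $\#_G$, $\flatop$, $\clor$, and the split disjunction. The key observation to set up at the start is that each trace $t$ determines a unique Boolean tuple $\vec{b}(t) \dfn (\llbracket\varphi_1\rrbracket_{(t,i)},\dots,\llbracket\varphi_n\rrbracket_{(t,i)}) \in \{0,1\}^n$, and $(t,i) \models \varphi_1^{c_1}\wedge\dots\wedge\varphi_n^{c_n}$ iff $(c_1,\dots,c_n) = \vec{b}(t)$. By singleton equivalence, a subteam $(S,i)$ then satisfies $\flatop(\varphi_1^{b_1}\wedge\dots\wedge\varphi_n^{b_n})$ precisely when $\vec{b}(t) = (b_1,\dots,b_n)$ for every $t \in S$.

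For the first equivalence I would prove both directions separately. In the forward direction, given $(T,i) \models \#_G(\varphi_1,\dots,\varphi_n)$, one picks $R \dfn \{\vec{b}(t) \mid t\in T\} \in G$ as the witnessing outer $\Clor$-disjunct, and for each $\vec{b}\in R$ takes the nonempty subteam $T_{\vec{b}} \dfn \{t\in T \mid \vec{b}(t) = \vec{b}\}$; these form a cover of $T$ witnessing the inner split, and by the opening observation each satisfies both $\flatop(\varphi_1^{b_1}\wedge\dots\wedge\varphi_n^{b_n})$ and $\cneg\!\!\bot$. For the converse, a witness consisting of some $R \in G$ and a cover $\{T_{\vec{b}}\}_{\vec{b}\in R}$ of $T$ yields $\{\vec{b}(t) \mid t\in T\} \subseteq R$ from the flat conjuncts and $R \subseteq \{\vec{b}(t) \mid t\in T\}$ from the $\cneg\!\!\bot$ conjuncts (which force each $T_{\vec{b}}$ to be nonempty); equality then gives the LHS.

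For the downward-closed variant, dropping $\cneg\!\!\bot$ weakens the RHS to asserting only that $\{\vec{b}(t) \mid t\in T\} \subseteq R$ for some $R \in G$, which is equivalent to $\{\vec{b}(t) \mid t\in T\} \in G$ by downward closure of $G$. The corollaries follow immediately: the ``$\leq$'' inclusions apply the translation atomwise, and the reverse inclusions are trivial since $\flatop$ is itself a downward closed generalised atom and $\cneg\!\!\bot$ is a generalised atom; the inclusion in $\teamltl(\cneg)$ uses the definability of $\clor$, $\flatop$ and $\cneg\!\!\bot$ there, noted earlier in the paper. The doubly-exponential size bound falls out of $|G| \leq 2^{2^n}$ together with at most $2^n$ inner split disjuncts, each of linear size. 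The only subtle point, worth flagging rather than technically difficult, is that team-semantic splits yield only \emph{covers} (not partitions) of $T$, which is precisely why $\cneg\!\!\bot$ is required in the general case to upgrade ``$\subseteq R$'' to ``$= R$,'' and why it can be dispensed with when $G$ is already closed under taking subsets.
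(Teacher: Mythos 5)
Your proposal is correct and follows essentially the same route as the paper's own proof: both hinge on the observation that $\flatop(\varphi_1^{b_1}\wedge\dots\wedge\varphi_n^{b_n})$ characterises exactly the subteams on which every trace realises the tuple $\vec{b}$, so that the inner split disjunction over $R$ asserts $\{\vec{b}(t)\mid t\in T\}\subseteq R$, with the $\cneg\!\!\bot$ conjuncts upgrading this to equality and downward closure of $G$ making the upgrade unnecessary. The paper phrases this as a chain of semantic equivalences rather than an explicit witness construction with the subteams $T_{\vec{b}}$, but the content — including your closing remark about covers versus partitions — is the same.
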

\begin{proof}
	%The proposition follows essentially from a similar argument to that of a similar translation given in \cite{YANG2016}. %First, we have that for any nonempty team $(S,i)$, % \todofy{we do not consider multi-set of traces, right?}
	%\begin{align*}
		%(S,i)\models \bigwedge_{p\in V(\vec{\phi})}\G (p\clor\neg p)&\,\Leftrightarrow\, \forall p\in V(\vec{\phi}),\forall k\geq i: (S,k)\models p\text{ or }(S,k)\models\neg p\\
		%&\,\Leftrightarrow\, \forall p\in V(\vec{\phi}),\forall k\geq i: \text{either }\forall t\in S: p\in t[k],\text{ or }\forall t\in S: p\notin t[k]\\
		%&\,\Leftrightarrow\, |S\upharpoonright V(\vec{\phi})|=1, \text{where }S\upharpoonright V(\vec{\phi})=\{t\upharpoonright V(\vec{\phi})\mid t\in S\}.
		%\end{align*}
	For any $\vec{b}=(b_1,\dots,b_n)\in R\in G$, put 
	\[\vec{\varphi}^{\,\vec{b}}=\varphi_1^{b_1}\wedge\dots\wedge \varphi_n^{b_n}.\] 
	Put also 
	\[\truth{\vec{\phi}\,}_{(t,i)}=(\truth{\phi_1}_{(t,i)}, \dots,  \truth{\phi_n}_{(t,i)} ).\]
	for  any trace $t$. Now, by the empty team property, we have %it is easy to see that %\todofy{need to recall the empty team property, flatness of classical formulae}
	\begin{align*}
		(S,i)\models \flatop\vec{\varphi}^{\,\vec{b}}&\Leftrightarrow\, S=\emptyset\text{ or }  %\varphi_1^{b_1}\wedge\dots\wedge \varphi_n^{b_n}\wedge\bigwedge_{p\in V(\vec{\phi})}\G (p\clor\neg p)\\
		\forall t\in S: \truth{\vec{\phi}\,}_{(t,i)}=\vec{b}\\
		&\Leftrightarrow\, \{\truth{\vec{\phi}\,}_{(t,i)}\mid t\in S\}\subseteq\{\vec{b}\}.
	\end{align*}
	%\begin{align*}
		%&(S,i)\models \flatop\vec{\varphi}^{\,\vec{b}}\Leftrightarrow\, S=\emptyset\text{ or }\\  %\varphi_1^{b_1}\wedge\dots\wedge \varphi_n^{b_n}\wedge\bigwedge_{p\in V(\vec{\phi})}\G (p\clor\neg p)\\
		%&\forall t\in S: (\truth{\phi_1}_{(t,i)}, \dots,  \truth{\phi_n}_{(t,i)} )=(b_1,\dots,b_n)\\
		%&\Leftrightarrow\, \{(\truth{\phi_1}_{(t,i)}, \dots,  \truth{\phi_n}_{(t,i)} )\mid t\in S\}\subseteq\{(b_1,\dots,b_n)\}.
		%\end{align*}
	Thus, %for $\varphi_R=\bigvee_{\vec{b}\in R}(\varphi_1^{b_1}\wedge\dots\wedge \varphi_n^{b_n})$, we have that
	\begin{align*}
		&(T,i)\models\bigvee_{\vec{b}\in R}\flatop\vec{\varphi}^{\,\vec{b}} \\%\Leftrightarrow\,&   \forall \vec{b}\in R, \exists T_{\vec{b}}\text{ s.t. }T=\bigcup_{\vec{b}\in R}T_{\vec{b}}\text{ and } (T_{\vec{b}},i)\models\phi_{\vec{b}}\\
		\Leftrightarrow~& \forall \vec{b}\in R, \exists T_{\vec{b}}\text{ s.t. }
		T=\bigcup_{\vec{b}\in R}T_{\vec{b}} \text{ and }\{\truth{\vec{\phi}\,}_{(t,i)}\mid t\in T_{\vec{b}}\}\subseteq\{\vec{b}\}\\
		\Leftrightarrow~& \{\truth{\vec{\phi}\,}_{(t,i)}\mid t\in T\}\subseteq R.
	\end{align*}
	%\begin{align*}
		%(T,i)\models&\bigvee_{\vec{b}\in R}\flatop\vec{\varphi}^{\,\vec{b}} %\Leftrightarrow\,&   \forall \vec{b}\in R, \exists T_{\vec{b}}\text{ s.t. }T=\bigcup_{\vec{b}\in R}T_{\vec{b}}\text{ and } (T_{\vec{b}},i)\models\phi_{\vec{b}}\\
		%\Leftrightarrow\, \forall \vec{b}\in R, \exists T_{\vec{b}}\text{ such that. }T=\bigcup_{\vec{b}\in R}T_{\vec{b}} \text{ and }\\
		%&\{(\truth{\phi_1}_{(t,i)}, \dots,  \truth{\phi_n}_{(t,i)} )\mid t\in T_{\vec{b}}\}\subseteq\{(b_1,\dots,b_n)\}\\
		%\Leftrightarrow\,& \{(\truth{\phi_1}_{(t,i)}, \dots,  \truth{\phi_n}_{(t,i)} )\mid t\in T\}\subseteq R.
		%\end{align*}
	Since $(S,i)\models\cneg\!\!\bot\Leftrightarrow S\neq\emptyset$, we have similarly that 
	\[(S,i)\models \flatop\vec{\varphi}^{\,\vec{b}}\wedge\cneg\!\!\bot\,\Leftrightarrow \,\{\truth{\vec{\phi}\,}_{(t,i)}\mid t\in S\}=\{\vec{b}\},\]
	%\[\Leftrightarrow \{(\truth{\phi_1}_{(t,i)}, \dots,  \truth{\phi_n}_{(t,i)} )\mid t\in S\}=\{(b_1,\dots,b_n)\},\]
	and thus
	\[
	(T,i)\models\bigvee_{\vec{b}\in R} \big( \flatop\vec{\varphi}^{\,\vec{b}} \wedge\cneg\!\!\bot \big) \Leftrightarrow \{\truth{\vec{\phi}\,}_{(t,i)}\mid t\in T\}=R.
	\]
	%\[\{(\truth{\phi_1}_{(t,i)}, \dots,  \truth{\phi_n}_{(t,i)} )\mid t\in T\}= R.\]
	Finally, if $\#_G$ is downward closed, then
	\begin{align*}
		&(T,i)\models\Clor_{R\in G}\bigvee_{\vec{b}\in R}\flatop\vec{\varphi}^{\,\vec{b}}\\
		\Leftrightarrow~&(T,i)\models\bigvee_{\vec{b}\in R}\flatop\vec{\varphi}^{\,\vec{b}}\text{ for some }R\in G\\
		\Leftrightarrow~&\{\truth{\vec{\phi}\,}_{(t,i)}\mid t\in T\}\subseteq R\text{ for some }R\in G\\
		\Leftrightarrow~& \{\truth{\vec{\phi}\,}_{(t,i)} \mid t\in T  \} \in G\tag{$\because$ $\#_G$ is downward closed}\\
		\Leftrightarrow~& (T,i)\models \#_G(\varphi_1,\dots,\varphi_n).
	\end{align*}
	Similarly, if no additional condition is assumed for $\#_G$, then $\displaystyle(T,i)\models\Clor_{R\in G}\bigvee_{\vec{b}\in R}(\flatop\vec{\varphi}^{\,\vec{b}}\wedge\sim\!\!\bot)\Leftrightarrow\, (T,i)\models \#_G(\varphi_1,\dots,\varphi_n)$.
\end{proof}

The operator $\flatop$ used in the above translation is essential for the result. We now illustrate with an example that the logic $\teamltl(\clor)$ without the operator $\flatop$ is strictly less expressive than $\teamltl(\clor, \flatop)$ or $\teamltl(\dcatoms,\clor)$.
\begin{proposition}\label{prop:flatopne}
	The formula $\flatop \LTLeventually p$ is not expressible in $\teamltl(\clor)$. Thus $\teamltl(\clor) < \teamltl(\clor, \flatop)$.
\end{proposition}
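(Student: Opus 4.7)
The plan is to prove inexpressibility by exhibiting two teams that $\teamltl(\clor)$ cannot distinguish but on which $\flatop \LTLeventually p$ disagrees. For each $n \in \N$, let $t_n \in (2^{\{p\}})^\omega$ be the trace with $t_n[n] = \{p\}$ and $t_n[k] = \emptyset$ for $k \neq n$, let $s$ be the constant-$\emptyset$ trace, and set $T_1 := \{t_n : n \in \N\}$ and $T_2 := T_1 \cup \{s\}$. Then $(T_1, 0) \models \flatop \LTLeventually p$ since each $t_n$ sees $p$ at position $n$, while $(T_2, 0) \not\models \flatop \LTLeventually p$ because of $s$. Once I show that no $\teamltl(\clor)$-formula distinguishes these two teams, the proposition follows.

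The core of the argument is the following strengthened indistinguishability, proved by structural induction on $\varphi \in \teamltl(\clor)$: for every $A \subseteq T_1$ whose index set $\{n \in \N : t_n \in A\}$ is infinite and every $i \in \N$, $(A, i) \models \varphi$ iff $(A \cup \{s\}, i) \models \varphi$; specialising to $A = T_1$ and $i = 0$ yields what is needed. For the literals $p$ and $\neg p$ the verification is direct: since $A$ contains some $t_n$ with $n \neq i$, the atom $p$ fails at time $i$ on both sides, and since $s[i] = \emptyset$, the truth of $\neg p$ at time $i$ is controlled only by whether $t_i \in A$, which is insensitive to adding $s$. The cases $\land$, $\clor$, $\X$, $\U$, $\W$ all reduce to IH applied at the relevant time step(s) to the same team $A$.

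The splitjunction case $\varphi_1 \lor \varphi_2$ is the critical step. For the forward direction, any witnessing split $A = A_1 \cup A_2$ must, by pigeonhole, leave at least one component $A_j$ with infinitely many indices; IH then lets me slide $s$ into that component, yielding a split $A \cup \{s\} = (A_j \cup \{s\}) \cup A_{3-j}$ that witnesses $\varphi_1 \lor \varphi_2$ on $A \cup \{s\}$. For the backward direction, given a split $A \cup \{s\} = A'_1 \cup A'_2$, I take $A_j := A'_j \setminus \{s\} \subseteq A$ so that $A_1 \cup A_2 = A$; downward closure of $\teamltl(\clor)$ then transfers satisfaction of $\varphi_j$ from $A'_j$ down to $A_j$, so IH is not even needed here.

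The main obstacle I anticipate is the choice of inductive invariant. The naïve statement "$(A, i) \models \varphi$ iff $(A \cup \{s\}, i) \models \varphi$ for every $A \subseteq T_1$" already fails at singletons like $A = \{t_i\}$, where $p$ holds on $A$ but not on $A \cup \{s\}$. Restricting to subteams with infinitely many indices simultaneously salvages the atomic cases and survives any finite split induced by $\lor$ (via pigeonhole), and this is precisely what makes the induction go through.
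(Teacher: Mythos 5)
Your proposal is correct and follows essentially the same route as the paper's own proof: the same witness teams (traces with $p$ at exactly one position, plus the all-empty trace), the same strengthened invariant restricted to infinite subteams, the same pigeonhole argument to absorb the extra trace into an infinite component of a split, and the same appeal to downward closure for the converse direction. Your treatment of the literal cases is in fact slightly more careful than the paper's.
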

\begin{proof}
	Define trace sets 
	\[T_{\mathbb{N^+}} \dfn \{\, \{\}^n\{p\}\{\}^\omega \mid n\in\N\,\},~~T_{0} \dfn \{\, \{\}^\omega \,\},\]
	and $T_{\mathbb{N}} \dfn T_{\mathbb{N^+}}  \cup  T_{0}$. Clearly $(T_{\mathbb{N^+}},0) \models \flatop \LTLeventually p$ and $(T_{\mathbb{N}},0) \not\models \flatop \LTLeventually p$. We now show, by induction, that for every $\teamltl(\clor)$-formula $\varphi$, $i\in\N$, and infinite subset $T\subseteq T_{\mathbb{N^+}}$,
	\begin{equation*}%\label{eq:equivalence}
		(T,i) \models \varphi \,\Leftrightarrow\, (T\cup T_0,i) \models \varphi.
	\end{equation*}
	%which suffices, since also $\flatop\in \dcatoms$ when $\flatop$ is applied to $\LTL$ formulae.
	%
	% We claim that, for every $\teamltl(\clor)$ formula $\varphi$, every $i\in\N$, and every infinite subset $T\subseteq T_{\mathbb{N^+}}$,
	% \begin{equation}\label{eq:equivalence}
		% (T,i) \models \varphi \,\Leftrightarrow\, (T\cup T_0,i) \models \varphi.
		% \end{equation}
	% The claim then follows, since, in particular, $(T_{\mathbb{N^+}},0) \models \flatop \F p$, but $(T_{\mathbb{N}},0) \not\models \flatop \F p$, and since $\flatop\in \dcatoms$, when applied to $\LTL$ formulae.  
	%
	%We prove \eqref{eq:equivalence} via induction on the structure of the formula. 
	
	Since $\teamltl(\clor)$ is downward closed, it suffices to show the $``\Rightarrow"$ direction. %of \eqref{eq:equivalence}.
	The base case that $\varphi=p$ or $\neg p$ follows from the observation that   $(S,i) \not\models p$ and $(S,i) \not\models \neg p$, for any infinite $S\subseteq T_{\mathbb{N}}$. We now only give details for the most interesting inductive cases.
	%The cases for $\land$, $\clor$, and $\X$ are evident.
	
	%Case for $\lor$:
	If $(T,i)\models \varphi \lor \psi$, then there exist  $T_1,T_2\subseteq T$ s.t. $T_1 \cup T_2 = T$, $(T_1,i)\models \varphi$ and $(T_2,i)\models \psi$, where either $T_1$ or $T_2$ is infinite, as $T$ is infinite. W.l.o.g. we way assume 
	that $T_1$ is infinite. Now by induction hypothesis, we have that 
	\[(T_1\cup T_0) \cup T_2 = T\cup T_0, \,\,(T_1\cup T_0,i)\models \varphi, \text{ and }(T_2,i)\models \psi.\]
	Therefore, $(T\cup T_0,i) \models \varphi \lor \psi$.

	%Case for $\U$ ($\W$ is analogous): 
	If $(T,i)\models \varphi \U \psi$, then there exists $k\geq i$ s.t. $(T,k)\models \psi$ and $(T,j)\models \phi$ whenever $i\leq j < k$. By induction hypothesis, we have that $(T\cup T_0,k)\models \psi$ and $(T\cup T_0,j)\models \phi$ whenever $i\leq j < k$. Hence, we conclude that $(T\cup T_0,i)\models \varphi \U \psi$.
	
\end{proof}

It is a facinating open problem whether variants of the infamous Kamp's theorem can be developed for $\teamltl(\clor, \flatop)$ and $\teamltl(\cneg)$. In the modal team semantics setting variants of the famous van Benthem-Rosen characterisation theorem have been shown for the modal logic analogues $\ML(\clor)$ \cite{HellaLSV14} and $\ML(\cneg)$ \cite{KontinenMSV15} of $\teamltl(\clor, \flatop)$ and $\teamltl(\cneg)$, respectively.

\section{Missing Proofs of Section \ref{sec:undec}}\label{a:sec:undec}

\begin{claim}
	The claim (\ref{thm:undecidable_eq1}) on page \pageref{thm:undecidable_eq1} holds.
\end{claim}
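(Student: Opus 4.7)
The plan is to construct $\kK_I$ as the ``free'' Kripke structure over the propositions $\{c_l, c_m, c_r, d, 0, \dots, n-1\}$, which at every step nondeterministically chooses exactly one instruction label $i \in \{0, \dots, n-1\}$ together with an arbitrary valuation of the auxiliary propositions $c_l, c_m, c_r, d$. Consequently, $\traces(\kK_I)$ contains every infinite trace whose label component is a singleton at each position, so that every possible sequence of configurations of $M$ starting from $(0,0,0,0)$ is encodable by some subteam $T\subseteq\traces(\kK_I)$.

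Next, I would build $\varphi_{I,b}$ as a conjunction $\varphi_{\text{start}} \land \varphi_{\text{comp}} \land \varphi_{b\text{-rec}}$. Here $\varphi_{\text{start}}$ enforces that the label at step $0$ is $0$ and that no counter proposition $c_s$ holds initially on any trace, thereby encoding the initial configuration $(0,0,0,0)$; $\varphi_{b\text{-rec}} \dfn \G \LTLeventually b$ captures ``instruction $b$ occurs infinitely often'', exploiting that an atomic $\teamltl$ literal holds iff it holds uniformly on every trace of the team. The heart of the encoding is $\varphi_{\text{comp}} \dfn \G \Clor_{i < n}\,(i \land \theta_i)$, where each $\theta_i$ encodes the consecution semantics of instruction $i$. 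To express an increment $C_a^+$, I would use an inclusion atom $\top \subseteq \X c_a$, which asserts that at the next step at least one trace carries $c_a$ and thus introduces a new increment; combined with preservation clauses of the form ``$c_s \to \X c_s$'' for each counter $s$, this forces the multiset $\{t[j{+}1, \infty] \mid c_a \in t[j{+}1],\ t \in T\}$ to grow by at least one element. Decrements are symmetric, the zero-test $C_a = 0$ is captured by $\neg c_a$ (which holds on the team iff no trace carries $c_a$), and a positive test uses $\top \subseteq c_a$.

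To establish (\ref{thm:undecidable_eq1}), I would prove the right-to-left direction first. Given a $b$-recurring lossy computation $(\vec{c}_j)_{j \in \N}$ of $M$, I would construct a team $T \subseteq \traces(\kK_I)$ by adding, for each increment of $C_s$ at step $j$, a new trace that begins to carry $c_s$ from step $j$ onwards; the dummy proposition $d$ is used to distinguish otherwise identical postfixes, so that the encoding cardinality matches the intended counter values exactly. A structural induction on subformulae of $\varphi_{I,b}$ then yields $(T,0) \models \varphi_{I,b}$, and downward closure finishes the direction for $\traces(\kK_I)$. Conversely, assuming $(\traces(\kK_I), 0) \models \varphi_{I,b}$, the relevant subteam $T$ is extracted from the semantic witness for the outer split induced by $\Clor$ and the step-wise $\clor$-disjunctions, and at each step $j$ the configuration $\vec{c}_j$ is read off from $T[j, \infty]$ as specified. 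The step formulae $\theta_i$ then ensure that $(\vec{c}_j)_{j \in \N}$ is a \emph{lossy} consecution sequence labelled consistently with the instruction valid at each step, and $\varphi_{b\text{-rec}}$ guarantees $b$-recurrence.

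The main obstacle is the mismatch between the multiset-based interpretation of counter values, encoded by the cardinality of $\{t[j, \infty] \mid c_s \in t[j],\ t \in T\}$, and the set-theoretic semantics of inclusion atoms: inclusion atoms can guarantee ``at least as many new postfixes as required'' but cannot prevent two traces from later collapsing to the same postfix, which uncontrollably decreases the encoded counter value. This mismatch is precisely what forces the reduction to target \emph{lossy} rather than exact computations, and it also explains why the stronger Theorem \ref{thm:sig11_unsat} needs a single use of the universal subteam quantifier $\asub$ to rule out such postfix-collapses and thereby encode non-lossy computations.
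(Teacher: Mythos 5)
Your overall encoding strategy (counter values as cardinalities of postfix sets, the free Kripke structure, $\G \Clor_{i<n}(i\land\theta_i)$ for the step relation, $\G\LTLeventually b$ for recurrence, and the diagnosis of why postfix collapse forces lossiness and why $\asub$ removes it) matches the paper's. However, there is a genuine gap at the very top of the construction: you evaluate a plain conjunction $\varphi_{\mathrm{start}}\land\varphi_{\mathrm{comp}}\land\varphi_{b\text{-rec}}$ on the \emph{entire} team $\traces(\kK_I)$. Since $\kK_I$ generates every possible successor at every step, the full trace set already contains at time $1$ traces carrying different instruction labels, so no single disjunct $i\land\theta_i$ of $\Clor_{i<n}$ can hold on it ($\Clor$ does not split the team, and an atomic label $i$ holds only if \emph{all} traces carry it). Your attempt to repair this with downward closure is doubly flawed: downward closure transfers satisfaction from a team to its subteams, not the other way around, and $\teamltl(\subseteq,\clor)$ is not downward closed in the first place because inclusion atoms are not. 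The paper's formula is instead $(\theta_{\mathrm{comp}}\land\theta_{b\mathrm{-rec}})\Llor\top$, where $\Llor$ is a left-nonempty splitjunction (definable from $\subseteq$, $\lor$ and a built-in trace); this wrapper is precisely the device that existentially selects a \emph{non-empty} subteam of $\traces(\kK_I)$ encoding the computation, and the non-emptiness guard is essential because the empty team satisfies everything. Nothing in your formula performs this selection, so the right-to-left direction of \eqref{thm:undecidable_eq1} fails as stated.

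A second, smaller issue is your increment gadget. The atom $\top\subseteq\X c_a$ together with preservation clauses only forces the next-step count of $c_a$-carriers to be at least one; it places no upper bound on how many fresh traces start carrying $c_a$, so the encoded counter could jump by more than $1$ in a single increment step, and the configuration sequence read off from a satisfying team need not be a lossy computation (lossy consecution permits arbitrary \emph{decreases}, not increases). The paper bounds the increment by splitting off a part satisfying $\single\land\neg c_l\land\X c_l$, where $\single$ forces that part to contain at most one trace with respect to $\ap$, while the remainder satisfies $\ldecrease$; the other counters are constrained by $\sdecrease$ rather than by preservation (preservation is what the paper reserves for the non-lossy, $\asub$-based reduction).
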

\begin{proof}
	We start by defining the formula $\phi_{I,b}$ that enforces that the configurations encoded by $T[i,\infty]$, $i\in \N$, encode an accepting computation of the counter machine. 
	
	Define \(
	\phi_{I,b} \dfn  (\theta_{\mathrm{comp}} \land \theta_{b\mathrm{-rec}}) \Llor \top,
	\) where  $\Llor$ is a shorthand for the following condition: 
	\begin{multline*}
		(T,i)\models\phi\Llor\psi\text{ iff }\exists T_1,T_2 \text{ s.t. }T_1\neq\emptyset, ~T_1\cup T_2=T,
		(T_1,i)\models\phi\text{ and }(T_2,i)\models\psi.
	\end{multline*}
	The disjunction $\Llor$ can be defined using $\subseteq$, $\lor$, and a built-in trace $\{p\}^\omega$, where $p$ is a fresh proposition (see e.g., \cite[Lemma 3.4]{HellaKMV19}).
	The formula $\theta_{b\mathrm{-rec}} \dfn\,  \G\LTLeventually b$ describes the $b$-recurrence condition of the computation. The other formula $\theta_{\mathrm{comp}}$, which we define below in steps, states that the encoded computation is a legal one.

	First, define 
	\begin{align*}
		\single \dfn \G \bigwedge_{a \in\ap} (a \clor \neg a), \qquad
		\sdecrease  \dfn c_s  \lor ( \neg c_s \land \X \neg c_s), \text{ for $s\in\{l,m,r\}$}.
		%\rpreserve   \dfn \big( \big( c_r \land \X c_r) \lor ( \neg c_r \land \X \neg c_r)\big).
	\end{align*}
	The intuitive idea behind the above formulae are as follows: A team satisfying the formula $\single$ contains at most a single trace with respect to the propositions in $\ap$.
	If a team $(T,i)$ satisfies $\sdecrease$, then the number of traces in $T[i+1,\infty]$ satisfying $c_s$ is less or equal to the number of traces in $T[i,\infty]$ satisfying $c_s$. In our encoding of counters this would mean that the value of the counter $c$ in the configuration $\vec{c}_{i+1}$ is less or equal to its value in the configuration $\vec{c}_{i}$. Thus $\sdecrease$ will be handy below for encoding lossy computation.
	
	Next, for each instruction label $i$, we define a formula $\theta_i$ describing the result of the execution of the instruction:
	\begin{itemize}
		\item For the instruction  $i\colon C_l^+ \text{ goto } \{j, j'\}$, define
		{\setlength{\abovedisplayskip}{0pt}
			\setlength{\belowdisplayskip}{3pt}
			\setlength{\abovedisplayshortskip}{0pt}
			\setlength{\belowdisplayshortskip}{0pt}
			\begin{multline*}
				\theta_i \dfn \X (j\clor j') \land  \big( (\single \land \neg c_l \land \X c_l) \lor \ldecrease \big)
				\land \rdecrease \land \mdecrease.
			\end{multline*}
			\item For the instruction $i\colon C_l^- \text{ goto } \{j, j'\}$, define
			\begin{multline*}
				\theta_i \dfn \X (j\clor j') \land  \big( ( c_l \land \X \neg c_l) \Llor \ldecrease \big)
				\land \rdecrease \land \mdecrease.
			\end{multline*}
			\item For the instrunctions $i\colon C_s^+ \text{ goto } \{j, j'\}$ and $i\colon C_s^- \text{ goto } \{j, j'\}$ with $s\in\{m,r\}$, the formulae $\theta_i$ are defined analogously with the indices $l$, $m$, and $r$ permuted. 
			\item For the instruction $i\colon \text{if } C_s=0 \text{ goto }  j, \text{else goto }  j'$, define
			\begin{multline*}
				\theta_i \dfn  \big(\X(\neg c_s \land  j) \clor (\top \subseteq c_s \land \X j')\big) 
				\land  \ldecrease \land \mdecrease \land \rdecrease.
			\end{multline*}
		}
	\end{itemize}
	Finally, define \(
	\theta_{\mathrm{comp}} \dfn \G \Clor_{ i < n} ( i \land  \theta_i).
	\)
	We next describe the intuition of the above formulae.
	The left-most conjunct of $\theta_i$ for $i\colon C_l^+ \text{ goto } \{j, j'\}$ expresses that after executing the instruction $i$, the label of the next instruction is either $j$ or $j'$. The third and the fourth conjunct express that the values of counters $C_r$ and $C_m$ will not increase, but might decrease. The second conjunct expresses that the value of the counter $C_l$ might increase by one, stay the same, or decrease. The meaning of $\theta_i$ for $i\colon C_l^- \text{ goto } \{j, j'\}$ is similar. Finally, the formula  $\theta_i$ for $i\colon \text{if } C_s=0 \text{ goto }  j, \text{else goto }  j'$ expresses that, in the lossy execution of i, a) the values of the counters $C_l$, $C_m$, and $C_r$ might decrease, but cannot increase, b) the next instruction is either $j$ or $j'$, c) if the next instruction is $j$ then the value of the counter $C_s$, after the lossy execution, is $0$, and d)  if the next instruction is $j'$ then the value of the counter $C_s$, before the lossy execution, was not $0$.
	
	Next, we define the Kripke structure $\kK_I = (W,R,\eta,w_0)$ over the set of propositions $\{c_l,c_m,c_r,d,0,\dots, n-1\}$. The structure is defined such that every possible sequence of configurations of $M$ starting from $(0,0,0,0)$ can be encoded by some team $(T,0)$, where $T\subseteq \traces(\kK_I)$. Define $W \dfn \{(i,j,k,t,l) \mid 0 \leq i < n \text{ and }  j,k,t,l\in\{0,1\} \}$, $w_0 \dfn (0,0,0,0,0)$, $R \dfn W \times W$, and $\eta$ as the valuation such that $\eta\big((i,j,k,t,l)\big) \cap \{0,\dots, n-1\} = i$,
	%\begin{itemize}
		%\item $W \dfn \{(i,j,k,t,l) \mid 0 \leq i < n \text{ and }  j,k,t,l\in\{0,1\} \}$,
		%\item $w_0 \dfn (0,0,0,0,0)$, 
		%\item $R \dfn W \times W$,
		%\item and $\eta$ as the valuation such that 
		\begin{multicols}{2}
			\begin{itemize}
				%	\item $\eta\big((i,j,k,t,l)\big) \cap \{0,\dots, n-1\} = i$, 
				\item $c_l\in \eta\big((i,j,k,t,l)\big)$ if $j=1$,  
				\item $c_m\in \eta\big((i,j,k,t,l)\big)$ if $k=1$, 
				\item $c_r\in \eta\big((i,j,k,t,l)\big)$ if $t=1$, 
				\item and $d\in \eta\big((i,j,k,t,l)\big)$ if $l=1$.
			\end{itemize}
		\end{multicols}
		%\end{itemize}

	Assume first that $M$ has a $b$-recurring lossy computation and let $(\vec{c}_j)_{j\in\mathbb{N}}$ be the related sequence of configurations of $M$. Let $T\subseteq \traces(\kK_I)$ be a set of traces that encodes $(\vec{c}_j)_{j\in\mathbb{N}}$ in the way described above, and such that, for every $j\in \mathbb{N}$,
	\begin{itemize}
		\item if $\vec{c}_j=(i,v_l,v_m,v_r)$ and the instruction labelled $i$ is of the form $i\colon C_s^+ \text{ goto } \{j, j'\}$, then there is at most one $t\in T[j,\infty]$ such that $c_s\not\in t[0]$ but $c_s\in t[1]$,
		\item if $i$ is not of the form $i\colon C_s^+ \text{ goto } \{j, j'\}$, then there is no $t\in T[j,\infty]$ such that $c_s\not\in t[0]$ but $c_s\in t[1]$.
	\end{itemize}
	The aforementioned condition makes sure that the traces in $T$ that encode the incrementation of counter values do not change erratically. Clearly, such a $T$ always exists, given that $(\vec{c}_j)_{j\in\mathbb{N}}$ encodes a lossy computation. Furthermore,  since $T$ encodes $(\vec{c}_j)_{j\in\mathbb{N}}$ and the related $b$-recurrent lossy computation follows the instructions in $I$, we have that $(T,0)\models \theta_{\mathrm{comp}} \land \theta_{b\mathrm{-rec}}$. Finally, as $\emptyset \neq T\subseteq \traces(\kK_I)$, $\big(\traces(\kK_I),0\big) \models (\theta_{\mathrm{comp}} \land \theta_{b\mathrm{-rec}}) \Llor \top$ follows.
	
	Assume then that $\big(\traces(\kK_I),0\big) \models \phi_{I,b}$. Hence there exists some nonempty subset $T$ of $\traces(\kK_I)$ such that $(T,0)\models \theta_{\mathrm{comp}} \land \theta_{b\mathrm{-rec}}$. It is now easy to construct a sequence $(\vec{c}_j)_{j\in\mathbb{N}}$ of configurations that encode a $b$-recurrent lossy computation for $M$; for each $j\in\mathbb{N}$, define $\vec{c}_j=(i, v_l, v_m,v_r)$ such that $\bigcup T[j] \cap \{0,\dots, n-1\} = \{i\}$, and $\lvert \{t[j,\infty] \mid c_s\in t[j], t\in T \} \rvert = v_s$, for each $s\in\{l,m,r\}$.
\end{proof}

\sigunsat*

\begin{proof}
	The proof is analogous to that of Theorem \ref{thm:undecidable}. The following modifications are required to shift from lossy computation to non-lossy one.
	Firstly, we define a formula $\theta_\mathrm{diff} $ that expresses that if two traces differ (i.e., $t\neq t'$) then all of their postfixes differ as well (i.e., $t[j,\infty]\neq t'[j,\infty]$, for each $j \in \N$): %This is achieved by the following formula:
	\begin{multline*}
		\theta_\mathrm{diff} \dfn \asub\Big(\G\big((c_l \clor \neg c_l) \land  (c_m \clor \neg c_m) \land (c_r \clor \neg c_r) \land (d \clor \neg d)\big)\\
		\quad\quad\quad\quad\clor \G\LTLeventually\big((\top\subseteq c_l \land \bot\subseteq c_l) \clor (\top\subseteq c_m \land \bot\subseteq c_m)\\ 
		\quad\clor (\top\subseteq c_r \land \bot\subseteq c_r) \clor (\top\subseteq d \land \bot\subseteq d) \big) \Big).
	\end{multline*}
	For $s\in\{l,m,r\}$, instead of using the formula $\sdecrease$ in Theorem \ref{thm:undecidable}, we make use of $\spreserve$ defined as:
	\[
	\spreserve  \dfn\,  \big( c_s \land \X c_s) \lor ( \neg c_s \land \X \neg c_s)\big).
	\]
	Finally, we define \[\phi_{I,b} \dfn\,   (\theta_\mathrm{diff} \land \theta'_{\mathrm{comp}} \land \theta_{b\mathrm{-rec}}) \Llor \top,\]
	where $\theta'_{\mathrm{comp}} \dfn \G \Clor_{ i < n} ( i \land  \theta'_i)$ and  $\theta'_i$ for each instruction $i$ is defined as follows:
	
	\begin{itemize}
		\item For the instruction  $i\colon C_l^+ \text{ goto } \{j, j'\}$, define
		{\setlength{\abovedisplayskip}{0pt}
			\setlength{\belowdisplayskip}{3pt}
			\setlength{\abovedisplayshortskip}{0pt}
			\setlength{\belowdisplayshortskip}{0pt}
			\begin{multline*}
				\theta'_i \dfn \X (j\clor j') \land  \big( (\single \land \neg c_l \land \X c_l) \Llor \lpreserve \big) 
				\land \rpreserve \land \mpreserve.
			\end{multline*}
			\item For the instruction $i\colon C_l^- \text{ goto } \{j, j'\}$, define
			\begin{multline*}
				\theta'_i \dfn \X (j\clor j') \land  \big( (\single \land c_l \land \X \neg c_l) \Llor \lpreserve \big)
				\land \rpreserve \land \mpreserve.
			\end{multline*}
			\item For $i\colon C_s^+ \text{ goto } \{j, j'\}$ and $i\colon C_s^- \text{ goto } \{j, j'\}$ with $s\in\{m,r\}$, the formulae $\theta'_i$ are defined analogously with the indices $l$, $m$, and $r$ permuted. 
			\item For the instruction $i\colon \text{if } C_s=0 \text{ goto }  j, \text{else goto }  j'$, define
			\begin{multline*}
				\theta'_i \dfn  \big((\neg c_s \land \X  j) \clor (\top \subseteq c_s \land \X j')\big) \land  \lpreserve \land \mpreserve \land \rpreserve.
			\end{multline*}
		}
	\end{itemize}
\end{proof}

\corundecsat*

\begin{proof}
	Given a finite Kripke structure $\kK = (W,R,\eta,w_0)$ over a finite set of propositions $\ap$, we introduce a fresh propositional variable $p_w$, for each $w\in W$, and let $\kK'$ be the unique extension of $\kK$ in which $p_w$ holds only in $w$. We then construct a formula $\theta_{\kK'}$ whose intention is to characterise the team $\big(\traces(\kK'),0\big)$. Define
	\begin{multline*}
		\theta_{\kK'} \dfn\, p_{w_0} \land \G \bigvee_{w\in W} \Bigg(p_w \land \big( \bigwedge_{w\neq v\in W} \neg p_v \big)
		\land \big( \bigwedge_{(w,v)\in R} \top\subseteq \X p_v \big) \land \big( \X \bigvee_{(w,v)\in R} p_v \big)\\
		\land \bigwedge_{p\in \eta(w)} p \land  \bigwedge_{p\in \ap\setminus \eta(w)} \neg p \bigg).
	\end{multline*}
	It is not hard to verify that the only sets of traces over $\ap\cup \{p_w \mid w\in W\}$ for which $(T,0) \models \theta_{\kK'}$ holds are the empty set  and $\traces(\kK')$. Now, given a finite Kripke structure $\kK$ and a formula $\varphi$ of $\teamltl(\subseteq, \clor)$ ($\teamltl(\subseteq, \clor, \asub)$, resp.) it holds that $(\traces(\kK),0)\models \varphi$ iff $\theta_{\kK'}  \land \varphi$ is satisfiable.
\end{proof}	

\section{Missing Proofs of Section 4}\label{a:lem:countableteams}

\setcounter{theorem}{4}
\begin{lemma}%\label{lem:countableteams}
	For every set  $T$ of traces over a countable $\ap$, there exists a countable set $\mathcal{S}_T\subseteq \pow{T}$ such that, for every $\teamltl(\clor, \cneg\!\bot, \flatop)$-formula $\varphi$ and $i\in\N$,
	\begin{equation}\label{eq:base}
		(T,i) \models \varphi \quad\text{iff}\quad (T,i) \models^* \varphi,
	\end{equation}
	where the satisfaction relation $\models^*$ is defined the same way as $\models$ except that in the semantic clause for $\lor$ we require additionally that the two subteams $T_1,T_2\in \mathcal{S}_T$.
	%$\models^*$ is defined as the modification of $\models$, where $T_1,T_2\in \mathcal{S}_T$ is additionally asserted in the semantic clause for $\lor$.
\end{lemma}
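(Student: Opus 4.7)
The plan is to construct $\mathcal{S}_T$ as an $\omega$-stage iterated closure of $\{T,\emptyset\}$ under canonical witnesses supplied by the Axiom of Choice. The key counting facts are that the collection of $\teamltl(\clor,\cneg\!\bot,\flatop)$-formulae over the countable $\ap$ is itself countable, and so is $\N$. Moreover, inside the semantic evaluation of $(T,i)\models\varphi$, the only clause that creates genuinely new subteams is the one for $\lor$: the clauses for $\land$, $\X$, $\U$, $\W$ and $\clor$ re-use the current subteam, literals and $\cneg\!\bot$ are atomic, and---as discussed below---$\flatop\psi$ can also be treated atomically.

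For every triple $(S,\varphi,i)$ with $(S,i)\models\varphi$, fix a canonical \emph{witness tree} $\tau_{S,\varphi,i}$ whose $\lor$-nodes commit to a particular split of the current subteam and whose $\U$- and $\W$-nodes commit to the corresponding witness time indices. Since $\varphi$ has only finitely many subformulae and the temporal operators iterate only over $\N$, the tree has at most countably many nodes and therefore introduces at most countably many new subteams. Define $\mathcal{S}_T^{(0)}:=\{T,\emptyset\}$ and, inductively, $\mathcal{S}_T^{(n+1)}$ as $\mathcal{S}_T^{(n)}$ together with all subteams occurring in the trees $\tau_{S,\varphi,i}$ for $S\in\mathcal{S}_T^{(n)}$, any formula $\varphi$, and any $i\in\N$. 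Each stage remains countable (countably many triples, each contributing countably many subteams), so $\mathcal{S}_T:=\bigcup_n\mathcal{S}_T^{(n)}$ is countable. The required equivalence is then proved by induction on $\varphi$, strengthened to hold for every $S\in\mathcal{S}_T$: the direction $\models^*\Rightarrow\models$ is immediate because every $\mathcal{S}_T$-restricted split is an ordinary split, and for $\models\Rightarrow\models^*$ the $\lor$-step is discharged by the split recorded in $\tau_{S,\psi_1\lor\psi_2,i}$, which places both parts into $\mathcal{S}_T$ and lets the induction hypothesis for the smaller $\psi_1,\psi_2$ transfer.

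The main obstacle I anticipate is the $\flatop$-clause: its recursive reading $(T,i)\models^*\flatop\psi\iff\forall t\in T:(\{t\},i)\models^*\psi$ would, naively, force every singleton $\{t\}$ with $t\in T$ to lie in $\mathcal{S}_T$ in order to witness any $\lor$-splits of $\psi$ inside the singleton, thereby destroying countability whenever $T$ is uncountable. The remedy is the observation that on singleton teams every $\teamltl(\clor,\cneg\!\bot,\flatop)$-formula can be normalized, by a straightforward bottom-up procedure in the spirit of the one used in the $\flatop$-case of Theorem~\ref{s:teamltl2hyperqptl}, into an $\LTL$-formula $\psi^\flat$ such that $(\{t\},i)\models\psi\iff(t,i)\models_\LTL\psi^\flat$: $\cneg\!\bot$ becomes $\top$, $\flatop$ is the identity, and the team-split $\lor$ collapses to a Boolean combination involving the empty-team satisfaction of its disjuncts. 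Under this normalization the $\flatop$-clause reduces to an atomic test $T\subseteq\{t\mid (t,i)\models_\LTL\psi^\flat\}$ that neither forces any singleton into $\mathcal{S}_T$ nor breaks the inductive argument, and the plan goes through.
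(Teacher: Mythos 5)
Your proposal is correct and takes essentially the same route as the paper: the paper builds $\mathcal{S}_T$ as the countable union, over all formulae and time points, of a recursively and nondeterministically defined witness-collecting function $\Sub$ (which records a canonical split at each $\lor$-node and ranges over all time points at $\U$/$\W$-nodes), and this amounts to the same countable closure under canonical witnesses that you obtain by your $\omega$-stage iteration. Your handling of $\flatop$ also matches the paper's, which avoids the singleton explosion by treating $\flatop\psi$ as a generalised atom applied to an $\LTL$-normalized $\psi$ (via the bottom-up elimination of $\clor$ and $\cneg\!\bot$ given in the appendix); the only caveat is your parenthetical that ``$\cneg\!\bot$ becomes $\top$'', which is sound only as part of that careful bottom-up procedure and not as a wholesale substitution, since empty subteams still arise in $\lor$-splits inside a nonempty singleton.
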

\setcounter{theorem}{19}

%\lemcountableteams*
\begin{proof}
	%Fix $T$. First construct nondeterministically a function 
	We first define inductively and nondeterministically a function 
	\(
	\Sub\colon (\pow{T}\times \N) \times \teamltl(\clor, \mathcal{A}) \to  \pow{T} %\times \N
	\)
	as follows:
	\begin{itemize}
		\item If $\varphi$ is an atomic formula or a generalised atom, define $\Sub\big((S,j),\varphi\big) \dfn \{S\}$ %, define $\Sub\big((S,j),\varphi\big) \dfn \{S\}$. 
		\item %Define 
		$\Sub\big((S,j), \X \varphi\big) \dfn \Sub\big((S,i+1), \varphi\big)$.
		\item %Define 
		$\Sub\big((S,j), \varphi \land \psi\big) \dfn  \Sub\big((S,j), \varphi\big) \cup \Sub\big((S,j), \psi\big)$. 
		\item $\Sub\big((S,j), \varphi \clor \psi\big) \dfn  \Sub\big((S,j), \varphi\big) \cup \Sub\big((S,j), \psi\big)$. 
		
		\item $\Sub\big((S,j), \varphi \lor \psi\big)\dfn\Sub\big((S_1,j), \varphi\big) \cup \Sub\big((S_2,j), \psi\big) \cup \{S\}$, where subsets $S_1$ and $S_2$ of $S$ are guessed nondeterministically such that $S_1\cup S_2 = S$.
		\item %Define 
		$\Sub\big((S,j), \psi \U \varphi\big) \dfn \Sub\big((S,j),\psi \W \varphi\big)
		\dfn  \bigcup_{i \geq j} \big( \Sub\big((S,i), \psi\big) \cup \Sub\big((S,i), \varphi\big) \big)$.
		
	\end{itemize}
	
	Clearly $\Sub\big((S,j),\psi \big)$ is a countable set. %for every $S\subseteq T$, $j\in \N$, and $\teamltl(\clor, \mathcal{A})$ formula $\psi$.
	Now, define 
	\[
	\mathcal{S}_T \dfn \bigcup_{\substack{ j\in \N\\ \psi\in \teamltl(\clor, \mathcal{A})}} \Sub\big((T,j),\psi \big).
	\]
	The set $\mathcal{S}_T$ is a countable union of countable sets, and thus itself countable. For each team $(S,j)$ and formula $\psi$, assuming  $\mathcal{S}_T\supseteq\Sub\big((S,j), \psi\big)$, it is not hard to show by induction that \eqref{eq:base} holds.
\end{proof}

\teamltltohyperqptlthm*
\begin{proof}
	Let $\qst$, $q$, and $r$ be distinct propositional variables. We define 
	%We give 
	a compositional translation $\tr_{(q,r)}$ %, where $q$ and $r$ are propositional variables taken as parameters in the translation, 
	such that %. We will then show that, 
	for every team $(T,i)$ and $\teamltl(\clor, \cneg\!\bot, \flatop)$-formula $\varphi$,
	{
		\setlength{\abovedisplayskip}{3pt}
		\setlength{\belowdisplayskip}{-8pt}
		\begin{equation}\label{eq:translation}
			(T,i) \models \varphi \text{ iff } \Pi,i \models_T \existsso \qst \existsu q \existsu r \big(\tr_{(q,r)} (\varphi) \land \aux \big), \text{ where $\aux \dfn \forall \pi \ldot \qst_\pi \land q_\pi \land r_\pi$.}
		\end{equation}
	}
	%	where $\aux \dfn \forall \pi \ldot \qst_\pi \land q_\pi \land r_\pi$. %is an auxiliary formula %a conjunction of auxiliary formulae 
	%to be defined shortly.
	
	%	We start by listing some conventions, that we make use of during the rest of the proof.
	We first fix some conventions.
	%	We stipulate that the variables quantified in the translation are fresh and distinct. %Moreover, 
	All quantified variables in the translation below are assumed to be fresh and distinct.
	We also assume that the uniformly quantified propositional variables $q,r,\dots$ are true  in exactly one level, that is, they satisfy the formula $\forall \pi \ldot \LTLeventually q_\pi \land \G(q_\pi \rightarrow \X\G \neg q_\pi)$, which we will omit in the presentation of the translation for simplicity. 
	%. This subformula is suppressed from the translation below in order to simplify the presentation.
	%
	%	
	
	The idea behind the translation is the following. Let $(T',i)$ denote the team obtained from $(T,i)$ by evaluating the quantifier $\existsso \qst$.
	\begin{itemize}
		\item %We use $\qst$ to encode 
		The variable $\qst$ is used to encode the countable set $\mathcal{S}_T$ of sets of traces given by Lemma  \ref{lem:countableteams}. To be precise, for each $i\in \N$, $\qst$  encodes the set $(\{t\in T' \mid \qst\in t[i] \} \upharpoonright \ap) \in \mathcal{S}_T$.
		\item The uniformly quantified variable $q$ in $\tr_{(q,r)}$ is used to encode an element of $\mathcal{S}_T$ using $\qst$: If $q \in t[i]$, then $q$ encodes the set $\{t\in T' \mid \qst\in t[i] \} \upharpoonright \ap$.
		\item The uniformly quantified variable $r$ in $\tr_{(q,r)}$ is used to encode the time step $i$ of a team $(T,i)$: If $r \in t[i]$, then $r$ encodes the time step $i$.
	\end{itemize}
	After fixing a suitable interpretation for $\qst$, teams $(S,i)$ can be encoded with pairs of uniformly quantified variables $(q,r)$, whenever $S\in\mathcal{S}_T$. The formula $\aux $
	%	\[
	%	\aux \dfn \forall \pi \, (\qst_\pi \land q_\pi \land \X^i r),
	%	\]
	%evaluated with respect to a set of traces $T$, 
	expresses that the pair $(q,r)$ encodes the team $(T,i)$ in question.
	
	The translation $\tr_{(q,r)}$ is defined inductively as follows:
	{
		\setlength{\abovedisplayskip}{3pt}
		\setlength{\belowdisplayskip}{-9pt}
		\begin{align*} 
			\tr_{(q,r)}(\ell) \,\dfn\,& \forall \pi \ldot \big( \LTLeventually (q_\pi  \land \qst_\pi) \rightarrow \LTLeventually(r_\pi \land \ell_\pi) \big), \quad \text{where $\ell \in \{p, \neg p\}$} \\
			%
			%\tr_{(q,r)}(\neg p) \,\dfn\,& \forall \pi \ldot \big(  \LTLeventually (q_\pi  \land \qst_\pi) \rightarrow \LTLeventually (r_\pi \land \neg p_\pi) \big)
			\tr_{(q,r)}(\varphi\clor \psi) \,\dfn\,&  \tr_{(q,r)}(\varphi) \lor \tr_{(q,r)}(\psi), \quad 
			\tr_{(q,r)}(\varphi \land \psi) \,\dfn\, \tr_{(q,r)}(\varphi) \land \tr_{(q,r)}(\psi)\\
			\tr_{(q,r)}(\X \varphi) \,\dfn\,& \existsu r' \ldot \big( \G(r_\pi \leftrightarrow \X r'_{\pi}) \land \tr_{(q,r')}(\varphi)\big), \quad
			\tr_{(q,r)} (\cneg\!\bot) \,\dfn\, \exists \pi \ldot \LTLeventually (q_\pi  \land \qst_\pi)
		\end{align*}
	}
	{
		\setlength{\abovedisplayskip}{0pt}
		\setlength{\belowdisplayskip}{3pt}
		\begin{align*} 
			\tr_{(q,r)}(\flatop\varphi) \,\dfn\,& \forall \pi. \big( \LTLeventually (q_\pi  \land \qst_\pi) \rightarrow \LTLeventually(r_\pi \land \varphi^*_\pi) \big), \tag*{(where the $\varphi^*$ is an $\LTL$-formula}\\
			&\tag*{obtained from $\varphi$ by removing all $\flatop$ and using a careful bottom-up linear}\\
			&\tag*{procedure to eliminate all $\clor$ and $\cneg\!\bot$. Detailed exposition below the proof.)}\\
			\tr_{(q,r)}(\varphi\lor \psi) \,\dfn\,& \existsu q_1 q_2 \ldot\big(\varphi_{\cup}(q,q_1,q_2)
			\land \tr_{(q_1,r)}(\varphi) \land \tr_{(q_2,r)}(\psi)\big)\\
			\tr_{(q,r)}(\varphi\U\psi) \,\dfn\,& \existsu r' \ldot \Big( r \preceq r' \land \tr_{(q,r')}(\psi)
			\land \forallu r'' \ldot \big( ( r \preceq r'' \land r'' \prec r' )  \rightarrow \tr_{(q,r'')}(\varphi) \big)\Big)\\
			\tr_{(q,r)}(\varphi\W\psi) \,\dfn\,& \forall r' \ldot \Big(r \preceq r' \rightarrow \Big(\tr_{(q,r')}(\varphi)
			\lor \exists r'' \ldot \big(r'' \preceq r' \land \tr_{q,r''}(\psi)\big) \Big) \Big)
		\end{align*}%
	}
	where $\varphi_{\cup}(q,q',q'') \dfn \LTLeventually (q_\pi \land \qst_\pi) \leftrightarrow  \LTLeventually \big((q'_\pi \lor q''_\pi) \land \qst_\pi)\big)$, $r \prec r' \dfn \G(r_\pi \leftrightarrow \X \LTLeventually r'_{\pi})$, and  $r \preceq r' \dfn \G(r_\pi \rightarrow \LTLeventually r'_{\pi})$.
	%		\begin{align*}%
		%			&\varphi_{\cup}(q,q',q'') \dfn \LTLeventually (q_\pi \land \qst_\pi) \leftrightarrow  \LTLeventually \big((q'_\pi \lor q''_\pi) \land \qst_\pi)\big).
		%		\end{align*}	
	
	%		For generalised atoms $\#_G(\varphi_1,\dots, \varphi_n)$, where $\emptyset\in G$,
	%		\begin{multline*}
		%			\tr_{(q,r)}(\#_G(\varphi_1,\dots, \varphi_n)) \dfn \exists \pi_1\dots \exists \pi_{2^n} \ldot \forall \pi \ldot \\ \F\big(q_\pi \land \qst_\pi)
		%%			\rightarrow \Big( \F \big( q_\pi \land \bigwedge_{1\leq i\leq 2^n}\qst_{\pi_i} \big) \land \F \big(r_\pi \land \theta_G  \big)\Big),
		%		\end{multline*}
	%		$\theta_G$ is the quantifier-free subformula of the formula 
	%		\(
	%		\psi_G=\exists \pi_1\dots \exists \pi_{2^n} \ldot \forall \pi \ldot \theta_G%(\varphi_1/p^1,\dots, \varphi_n/p^n)
	%		\)
	%		given by Proposition \ref{prop:atomtrans}.
	%
	%		For generalised atoms $\#_G(\varphi_1,\dots, \varphi_n)$, where $\emptyset\not\in G$,
	%		\begin{multline*}
		%			\tr_{(q,r)}(\#_G(\varphi_1,\dots, \varphi_n)) \dfn  \exists \pi_1\dots \exists \pi_{2^n} \ldot \forall \pi \ldot \\ 
		%			\F \big( q_\pi \land \bigwedge_{1\leq i\leq 2^n}\qst_{\pi_i} \big) \land \Big(\F\big(q_\pi \land \qst_\pi) \rightarrow  \F \big(r_\pi \land \theta_G  \big)\Big).
		%		\end{multline*}		
	
	The formula $\tr_{(q,r)}(\varphi)\land \aux$ can be transformed to an equivalent prenex formula in the fragment $\quantu^*_p \exists_\pi^*\forall_\pi$, and if $\varphi\in \teamltl(\clor, \flatop)$, the translation is in the $\quantu^*_p \forall_\pi$ fragment. The translations are clearly linear. The equivalence \eqref{eq:translation} is proved by a routine inductive argument; we omit the detailed proof.
	%	
	%		
	%		For any formula $\varphi\in \teamltl(\clor)$ without generalised atoms, the above translation yields a formula \mbox{$\tr_{(q,r)}(\varphi)\land \aux$} that is linear in the size of $\varphi$, and the equivalent formula in the $\quantu^*_p \forall_\pi$ fragment of $\HQPTL$ is of the same size.
	%
	%	  and that is equivalent to a formula of the same size in the $\quantu^*_p \forall_\pi$ fragment of $\HQPTL$.
	%		
	%		
	%		For downward closed generalised atoms, the translation $\tr_{(q,r)}$ can be defined in the same way as above, except that the case for generalised atoms can alternatively be defined using  the doubly exponential translation of Proposition \ref{gd2bor} to $\teamltl(\clor,\flatop)$. To be precise, if $\#_G$ is downward closed, 
	%and $\emptyset\in G$
	%		define
	%		$\tr_{(q,r)}(\#_G(\varphi_1,\dots, \varphi_n))\,\dfn\,$
	%		\[
	%		\tr_{(q,r)}\Big(\Clor_{R\in G}\bigvee_{(b_1,\dots,b_n)\in R}\flatop(\varphi_1^{b_1}\wedge\dots\wedge \varphi_n^{b_n})\Big)
	%		\]
	%		where in the inductive definition we define additionally
	%		\[
	%		\tr_{(q,r)}(\flatop\varphi) \,\dfn\, \forall \pi \big( \F (q_\pi  \land \qst_\pi) \rightarrow \F(r_\pi \land \varphi_\pi) \big).
	%		\] 
	%	
\end{proof}

Next we describe the procedure needed for the case $\flatop$ of the translation of Theorem \ref{s:teamltl2hyperqptl}. It suffices to establish a linear translation ${}^* \colon \teamltl(\clor, \cneg\!\bot) \rightarrow \teamltl$ such that $\flatop \varphi\equiv \flatop\varphi^*$, for every $\varphi\in \teamltl(\clor, \cneg\!\bot)$. The translation is written bottom-up, and the symbols $\clor$ and $\cneg\!\bot$ do not occur in the formulae $\psi$ and $\theta$ below. (We wish to remind the reader that all $\teamltl(\clor)$ formulae are satisfied by the empty team.) For formulae $\varphi$ in which $\cneg\!\bot$ does not occur, the translation only replaces the symbols $\clor$ by $\lor$. In other cases the translation depends on the position of $\cneg\!\bot$ in the subformulae.
Below $\psi$ and $\theta$ are allowed to be empty (i.e, $\cneg\!\bot \land \psi$ covers the case  $\cneg\!\bot$).
\begin{align*}
	(\cneg\!\bot \land \psi) \lor (\cneg\!\bot \land \theta) &\quad\mapsto\quad \cneg\!\bot \land (\psi \land \theta) \\
	(\cneg\!\bot \land \psi) \lor \theta &\quad\mapsto\quad \cneg\!\bot \land \psi \\
	(\cneg\!\bot \land \psi) \clor (\cneg\!\bot \land \theta) &\quad\mapsto\quad \cneg\!\bot \land (\psi \lor \theta) \\
	(\cneg\!\bot \land \psi) \clor \theta &\quad\mapsto\quad \psi \lor \theta \\
	(\cneg\!\bot \land \psi) \land (\cneg\!\bot \land \theta) &\quad\mapsto\quad \cneg\!\bot \land (\psi \land \theta) \\
	(\cneg\!\bot \land \psi) \land \theta &\quad\mapsto\quad \cneg\!\bot \land (\psi \land \theta) \\
	\X (\cneg\!\bot \land \psi) &\quad\mapsto\quad  \cneg\!\bot \land \X \psi \\ 
	(\cneg\!\bot \land \psi) \U (\cneg\!\bot \land \theta) &\quad\mapsto\quad  \cneg\!\bot \land (\psi \U \theta) \\
	\psi \U (\cneg\!\bot \land \theta) &\quad\mapsto\quad  \cneg\!\bot \land (\psi \U \theta) \\
	(\cneg\!\bot \land \psi) \U \theta &\quad\mapsto\quad  \psi \U \theta \\
	(\cneg\!\bot \land \psi) \W (\cneg\!\bot \land \theta) &\quad\mapsto\quad  \cneg\!\bot \land (\psi \W \theta) \\
	\psi \W (\cneg\!\bot \land \theta) &\quad\mapsto\quad  \psi \W \theta \\
	(\cneg\!\bot \land \psi) \W \theta &\quad\mapsto\quad  \psi \W \theta
\end{align*}
It is straightforward to check that, for any $\teamltl$-formulae $\psi$ and $\theta$, the above translation preserves the truth value of formulae over teams of cardinality at most $1$.
Hence, when using the translation to $\varphi$, we obtain a formula $\varphi^+$ of the form $\cneg\!\bot \land \psi$ or $\psi$, where $\psi\in\teamltl$, such that $\flatop\varphi$ and $\flatop\varphi^+$ are equivalent. Finally note that, by the semantics of $\flatop$, the formulae $\flatop(\cneg\!\bot \land \psi)$ and $\flatop\psi$ are equivalent. By defining $\varphi^* \dfn \psi$ we obtain the linear size formula needed in Theorem \ref{s:teamltl2hyperqptl}.

\section{Missing Proofs of Section 5}\label{a:ltl2hyper}

\ltlttohyperlm*

\begin{proof}
	The second claim of the lemma follows easily from the first claim. %and the downward closure of the logic $\teamltl(\mathcal{A}, \clor, \asub)$. 
	%We only give the detailed proof for the first claim. 
	We prove the first claim by induction on $\varphi$. If $\varphi=p$, then
	\begin{align*}
		(T,i)\models p &\,\Leftrightarrow\, \forall t\in T: p\in t[i]\\
		&\,\Leftrightarrow\, \forall \pi\in \Phi: p\in \Pi(\pi)[i]\tag{$\because T=\Pi[\Phi]$}\\
		&\,\Leftrightarrow\, \Pi, i\models_S\bigwedge_{\pi\in \Phi}p_\pi\tag{$\because S\supseteq T$}\\
		&\,\Leftrightarrow\, \Pi, i\models_S p^{\Phi}
	\end{align*}
	Note that if $T=\emptyset$, then $\Phi=\emptyset$ and by definition $p^\emptyset=\top$. In this special case the above proof  still goes through, and in particular $(\emptyset, i)\models p$ and $\Pi,i\models_S\top$. 
	
	The case $\varphi=\neg p$ is proved analogously. If $\varphi=\sim\psi$, then
	\begin{align*}
		(T,i)  \models \sim\psi ~ 
		&\Leftrightarrow~(T, i) \not\models \psi \\
		&\Leftrightarrow~ (\Pi, i) \not\models_S \psi^{\Phi} \tag{by induction hypothesis}\\
		&\Leftrightarrow ~ (\Pi, i) \models_S  \neg\psi^{\Phi}.
	\end{align*}
	
	If $\varphi=\psi\lor\chi$, then
	\begin{align*}
		&(T,i)  \models \psi\lor\chi  \\
		\Leftrightarrow~ &(\Pi[\Phi_0], i) \models \psi \text{ and } (\Pi[\Phi_1], i) \models \chi\text{ for some } \Phi_0,\Phi_1\\
		&\text{ with }\Phi_0 \cup \Phi_1= \Phi \tag{since $T=\Pi[\Phi]$}\\
		\Leftrightarrow~ & \Pi, i \models_S \psi^{\Phi_0} \text{ and } \Pi, i \models_S \chi^{\Phi_1}
		\text{for some } \Phi_0,\Phi_1\text{ with }\\
		&\Phi_0 \cup \Phi_1= \Phi \text{ (by IH,  $\because S\supseteq T=\Pi[\Phi]\supseteq \Pi[\Phi_0],\Pi[\Phi_1]$)}\\
		%&\text{ in case $\Phi_0=\emptyset$ or $\Phi_1=\emptyset$, apply (\ref{lemma:ltl2hyper_eq2}))}\\
		\Leftrightarrow~& \Pi, i \models_S  \bigvee_{\Phi_0 \cup \Phi_1 =\Phi} (\psi^{\Phi_0} \land \chi^{\Phi_1})\\
		\Leftrightarrow ~& \Pi, i \models_S  (\psi \lor \chi)^{\Phi}.
	\end{align*}

	If $\varphi=\chi\U\psi$, then
	\begin{align*}
		&(T, i) \models \chi \U \psi  \\
		\Leftrightarrow\, &\exists n \geq i : (T,n) \models \psi, \text{ and }
		\forall m: i\leq m < n \Rightarrow (T, m)\models \chi \\
		\Leftrightarrow\, &\exists n \geq i : \Pi, n \models_S \psi^{\Phi}, \text{ and } \forall m: i\leq m < n \Rightarrow \Pi, m \models_S \chi^{\Phi}\\
		\Leftrightarrow\, &\Pi,i \models_S \chi^{\Phi} \U  \psi^{\Phi} \\
		\Leftrightarrow\, &\Pi,i \models_S (\chi \U \psi)^{\Phi}.
	\end{align*}
	
	The other inductive cases are proved by using a routine argument.
	%
	%
	%For the first claim, we first prove, by induction, the equivalence (\ref{lemma:ltl2hyper_eq1}) for the case $T=\emptyset$ and $\Phi=\emptyset$, namely
	%\begin{equation}\label{lemma:ltl2hyper_eq2}
		%(\emptyset,i) \models \phi \,\Leftrightarrow\, \Pi,i \models_S \phi^{\emptyset}.
		%\end{equation}
	%
	%
	%If $\varphi=p$, note that by definition $p^\emptyset=\top$. We have $(\emptyset, i)\models p$ and $\Pi,i\models_S\top$, which give the equivalence (\ref{lemma:ltl2hyper_eq1}). The case $\varphi=\neg p$ is similar. If $\varphi=\sim \psi$, then by induction hypothesis, 
	%\[(\emptyset, i)\models \sim\psi\Leftrightarrow(\emptyset, i)\not\models \psi\Leftrightarrow (\Pi, i) \not\models_S \psi^{\emptyset}\Leftrightarrow (\Pi, i) \models_S \neg\psi^{\emptyset}. \]
	%If $\varphi=\psi\vee\chi$, then
	%\begin{align*}
		%(\emptyset,i)  \models \varphi \lor \psi  
		%\Leftrightarrow~ &(\emptyset, i) \models \psi \text{ and } (\emptyset, i) \models \chi\\
		%\Leftrightarrow~ & \Pi, i \models_S \psi^{\emptyset} \text{ and } \Pi, i \models_S \chi^{\emptyset}
		%\text{ (by IH)}\\
		%\Leftrightarrow~& \Pi, i \models_S  \psi^{\emptyset}\wedge \chi^{\emptyset}\\
		%\Leftrightarrow ~& \Pi, i \models_S  (\psi \lor \chi)^{\emptyset}.\tag{by definition}
		%\end{align*}
	%The other inductive cases are proved by a routine argument.
	%
	%Now, we prove, by induction, the equivalence (\ref{lemma:ltl2hyper_eq1}) for arbitrary $T$ and $\Phi$. We only give the details for two most interesting inductive cases.
\end{proof}

\kcoherentteamhyper*
\begin{proof}
	Let $\varphi$ be a $k$-coherent $\teamltl(\cneg)$-formula.
	If \mbox{$\emptyset, i \not\models \varphi$} for some $i\in\mathbb{N}$, then $(T, j) \not\models \varphi$ for every team $(T,j)$. %\todofy{why?}\todojv{If $(T, j) \models \varphi$ then, by $k$-coherence, $(\emptyset, j) \models \varphi$. Hence $(\emptyset, i) \models \varphi$, for all $i$.}\todofy{I was confused about the ``for all $i$" part: so, if the team is empty, the $i$ does not have any function, right?} \todojv{Right.} 
	We then translate $\varphi$ to $\forall \pi \ldot \bot$. Assume now $\emptyset, i \models \varphi$. For any nonempty team $(T,i)$, we have that
	{\setlength{\belowdisplayskip}{0pt}
		\begin{align*}
			&(T,i)\models \varphi \\
			\Leftrightarrow~ &(S,i) \models \varphi \text{ for every $S\subseteq T$ with $\vert S \rvert \leq k$}\tag{since $\varphi$ is $k$-coherent}\\
			\Leftrightarrow~& \Pi,i \models_{T} \varphi^{\{\pi_1,\dots,\pi_k\}} \text{ for every $\Pi$ s.t  $\Pi[\{\pi_1,\dots,\pi_k\}] \subseteq T$}\tag{by Lemma \ref{lemma:ltl2hyper} \& $\emptyset, i \models \varphi$}\\
			\Leftrightarrow~ &\emptyset,i \models_{T} \forall \pi_1\dots \forall\pi_k \ldot \varphi^{\{\pi_1,\dots,\pi_k\}}.
		\end{align*}
	}
\end{proof}

%\section{Proof of \Cref{s:teamltl2hyperqptl} %\todofy{Do we want to swap F and E?}
	%}\label{a:teamltl2hyperqptl}

%\subsection{From \teamltl to $\HQPTLP$}

\label{a:thm:undeccoh}
\thmundeccoh*
\begin{proof}
	For a given $\teamltl(\subseteq, \clor)$-formula $\varphi$, let $\varphi^*$ denote the $\teamltl$-formula obtained from $\varphi$ by first replacing each disjunctions $\clor$ by $\lor$, and then replacing each inclusion atom of the form $\varphi_1,\dots,\varphi_n \subseteq \psi_1,\dots,\psi_n$ by the conjunction
	\[
	\bigwedge_{i\leq n} \varphi_i \leftrightarrow \psi_i.
	\]
	%It is straightforward to check that 
	It is easy to verify, by induction, that over singleton teams  $\varphi$ and $\varphi^*$ are equivalent.
	Moreover, since \teamltl satisfies the singleton equivalence property, we obtain the following equivalences, for every trace $t$:
	\[
	(\{t\},i)\models \varphi \,\Leftrightarrow\, (\{t\},i)\models \varphi^* \,\Leftrightarrow\, (t,i)\models \varphi^*,
	\]
	where the right-most satisfaction relation is the standard one for $\LTL$.

	Next, we show that for any formula $\varphi \in \teamltl(\subseteq, \clor)$ 
	\allowdisplaybreaks[3]
	\begin{align*}
		\text{$\varphi$ is not satisfiable} \quad\Leftrightarrow\quad& \text{$\varphi$ is $1$-coherent and $\varphi^*$ is not satisfiable in the sense of $\LTL$}. 
	\end{align*}
	Assume that $\varphi$ is not satisfiable. Since no team satisfies $\varphi$, it is trivially $1$-coherent. In particular, for any trace $t$, we have $(\{t\},i)\not\models \varphi$, which then implies that $(t,i)\not\models \varphi^*$. Thus $\varphi^*$ is not satisfiable in the sense of $\LTL$. Conversely, suppose  $\varphi$ is $1$-coherent and $\varphi^*$ is not satisfiable. Then for any trace $t$, we have $(t,i)\not\models \varphi^*$, which implies that $(\{t\},i)\not\models \varphi$. It then follows, by $1$-coherence, that $(T,i)\not\models\varphi$ for every team $T$. Thus $\varphi$ is not satisfiable.
	
	Now, since checking $\LTL$-satisfiability can be done in $\PSPACE$ \cite{SistlaC85} and non-satisfiability for $\teamltl(\subseteq, \clor)$ is $\Pi^0_1$-hard by Corollary \ref{cor:undecsat}, we conclude that checking $1$-coherence is undecidable.

\end{proof}

\label{a:correctness}

\teamltltoeahyperqptl*

The theorem follows  from the following lemma.

\begin{lemma}
	Let $\varphi$ be a $\teamltl(\clor,\flatop)$-formula, and $r_1, \ldots, r_n$ free variables occurring in $[\phi,r]$ but not in $\phi$. Let $i \in \nats$, and 
	%$i \in \nats$ be a point in time. Let 
	$s \in (2^{\set{r}})^\omega$  a sequence that has $r$ set exactly at position $i$. For every team $(T,i)$,
	\[(T,i) \models \varphi\,\Leftrightarrow\,\emptyset, 0 \models_{T[r \mapsto s]} \exists r_1 \ldots r_n \forall \pi \ldot [\phi,r].\] 
	%	where $r_1, \ldots, r_n$ are the free propositional variables occurring in $[\phi,r]$ but not in $\phi$ (except $r$).
\end{lemma}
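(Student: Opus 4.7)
The plan is to establish the biconditional by structural induction on $\varphi$, but with a strengthened inductive hypothesis: since the auxiliary propositions introduced inside $[\varphi, r]$ for subformulas (the $r^\phi$, $r^\psi$ for temporal operators and the $d^{\phi \clor \psi}$ for Boolean disjunctions) play the role that $r$ plays at the top level but at shifted positions, I would carry through the claim that for every subformula $\varphi$, every $j \in \nats$, every team $(T, j)$, and every sequence $s \in (2^{\{r\}})^\omega$ with $r$ set exactly at position $j$, it holds that $(T, j) \models \varphi$ iff there exist sequences for the auxiliary propositions of $[\varphi, r]$ such that $\emptyset, 0 \models_{T[r \mapsto s]} \forall \pi \ldot [\varphi, r]$ under those sequences. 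The bottom-level lemma in the paper then falls out for $j = i$.

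The base cases $p$ and $\neg p$ are immediate: $\G(r_\pi \rightarrow p_\pi)$ evaluated on $T[r \mapsto s]$ with $r$ set only at position $j$ is precisely the team-semantics statement that $p$ holds on every trace at position $j$. The cases $\X \varphi$, $\flatop \varphi$ and $\varphi \land \psi$ are then routine applications of the IH: for $\X$, the conjunct $\G(r_\pi \leftrightarrow \X r^\phi_\pi)$ forces $r^\phi$ to mark exactly position $j + 1$, reducing via the IH to $(T, j+1) \models \varphi$; for $\flatop$, the clause $\G(r_\pi \rightarrow \hat\phi)$ with $\hat\phi$ obtained via \Cref{thm:kcoherent-team-hyper} (equivalently \Cref{1coh_HyperLTL}) directly yields team satisfaction of the $1$-coherent $\flatop \varphi$; and the $\land$ case is compositional.

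For the splitjunction $\varphi_1 \lor \varphi_2$, I would exploit the downward closure of $\teamltl(\clor, \flatop)$. In the forward direction, given a partition $T = T_1 \cup T_2$ with $(T_k, j) \models \varphi_k$, the IH supplies witnesses for the (disjoint, because indexed by subformulas) auxiliary propositions of each side; each trace $t \in T_k$ then satisfies $[\varphi_k, r]$ and hence the pointwise disjunction $[\varphi_1, r] \lor [\varphi_2, r]$. The backward direction partitions $T$ by which disjunct each trace satisfies and invokes the IH on the two subteams. For $\varphi \clor \psi$, the uniform quantifier $\existsu d^{\phi \clor \psi}$ forces a single value across all traces at position $0$, so the translation $(d^{\phi\clor\psi}_\pi \rightarrow [\varphi, r]) \land (\neg d^{\phi\clor\psi}_\pi \rightarrow [\psi, r])$ cleanly encodes the team-wide Boolean choice, matching $(T,j)\models \varphi$ or $(T,j)\models \psi$.

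The main obstacle will be the $\W$ case (and $\U$ by an analogous argument), for which left-flatness is essential. Since $\varphi$ in $\varphi \W \psi$ is flat, the team condition $(T, k) \models \varphi$ reduces to the per-trace condition $\forall t \in T \ldot (t, k) \models \varphi$, so the uniformly quantified $r^\phi$ can correctly mark a common set of positions at which the flat $\varphi$ holds. The constraint $\X \G \neg r^\psi_\pi$ pins $r^\psi$ to at most one position, picking out a single ``release point'' $m \geq j$ at which the entire team must satisfy $\psi$. In the forward direction I would read off $r^\psi$ from a team-semantics witness (the least $m \geq j$ with $(T, m) \models \psi$, if any; otherwise never) and $r^\phi$ as the positions strictly before the release point; the IH on $\varphi$ and $\psi$ then supplies the remaining auxiliary witnesses. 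The backward direction reads off the release point from the existential witness and, using IH together with flatness of $\varphi$, recovers the team-semantics definition of $\varphi \W \psi$. The delicate part is coordinating the uniform propositional quantifiers for the different auxiliary propositions so that a single assignment simultaneously witnesses satisfaction for the $\W$-subformula and for any recursively translated subformulas that may appear elsewhere in $\varphi$; this is what motivates the strengthened inductive hypothesis allowing arbitrary positions for $r$.
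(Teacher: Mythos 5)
Your proposal is correct and follows essentially the same route as the paper's own proof: a structural induction in which the base and Boolean cases are immediate, the $\flatop$ case is discharged via Theorem~\ref{thm:kcoherent-team-hyper}, the $\lor$ case combines disjoint auxiliary witnesses over a team split, and the $\U$/$\W$ cases use left-flatness so that the uniformly quantified $r^{\phi}$, $r^{\psi}$ can mark the (common) intermediate positions and the single release point. Note that your ``strengthened'' hypothesis is not really a strengthening --- the lemma as stated already quantifies over all positions $i$ and all teams $(T,i)$, which is exactly what lets the paper apply the induction hypothesis at shifted positions in the $\X$, $\U$, and $\W$ cases.
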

\begin{proof}
	We proceed by %a straight-forward
	induction on $\phi$.
	
	%	\begin{itemize}
		%\item 
		\noindent		Case for $p$:
		\begin{align*}
			(T,i) \models p &\Leftrightarrow\forall t \in T: p \in t[i] \\
			&\Leftrightarrow \emptyset, 0 \models_{T[r \mapsto s]} \forall \pi \ldot \G(r_\pi \rightarrow p_\pi)
		\end{align*}
		%\item 
		Case for $\neg p$: Similar to the above.
		%			\begin{align*}
			%			(T,i) \models \neg p &\equiv \forall t \in T: p \notin t[i] \\
			%			&\equiv \emptyset, 0 \models_{T[r \mapsto s]} \forall \pi \ldot \G(r_\pi \rightarrow \neg p_\pi)
			%			\end{align*}
		%\item 
		
		\noindent			Case for $\flatop \varphi$:
		\begin{align*}
			(T,i) \models \flatop \varphi &\Leftrightarrow \forall t \in T: (t,i) \models \varphi \\
			&\Leftrightarrow \emptyset, 0 \models_T \forall \pi \ldot \G (r_\pi \rightarrow \hat\phi) \tag{by \Cref{thm:kcoherent-team-hyper}, as $\flatop \varphi$ is  1-coherent}
		\end{align*}	
		%\item 
		Case for $\X \phi$:
		\begin{align*}
			& (T,i) \models \X \phi \\
			\Leftrightarrow~& (T, i+1) \models \phi \\
			\overset{\text{IH}}{\Leftrightarrow}~& \emptyset, 0 \models_{T[r^{\phi} \mapsto s^{\phi}]} \exists r_1 \ldots r_n \ldot \forall \pi \ldot [\phi, r^{\phi}] \\
			\Leftrightarrow ~&\emptyset, 0 \models_{T[r \mapsto s]} \exists r^{\phi} \ldot \exists r_1 \ldots r_n \ldot \forall \pi \ldot \G(r_\pi \leftrightarrow \X r_\pi^{\phi}) \land [\phi, r^{\phi}]
		\end{align*}
		%\item 
		Case for $\phi \land \psi$: Easy, by induction hypothesis.
		%		\begin{align*}
			%		(T,i) \models \phi \land \psi &\equiv (T, i) \models \phi \land (T, i) \models  \psi\\
			%		&\overset{\text{IH}}{\equiv} \emptyset, 0 \models_{T[r \mapsto s]} \exists r_1^1 \ldots r_n^1 \ldot \forall \pi \ldot [\phi, r] \text{ and }  \emptyset, 0 \models_{T[r \mapsto s]} \exists r_1^2 \ldots r_m^2 \ldot \forall \pi \ldot [\psi, r]\\
			%		&\equiv \emptyset, 0 \models_{T[r \mapsto s]} \exists r_1^1 \ldots r_n^1 r_1^2 \ldots r_m^2 \ldot \forall \pi \ldot [\phi, r] \land [\psi,r]
			%		\end{align*}
		%\item 
		
		\noindent			Case for $\phi \lor \psi$:
		\begin{align*}
			&(T,i) \models \phi \lor \psi \\
			\Leftrightarrow~& \exists T_1, T_2 \text{ s.t. } T = T_1 \cup T_2 \text{ and } \\
			& \qquad (T_1, i) \models \phi \text{ and } (T_2, i) \models  \psi \\
			\overset{\text{IH}}{\Leftrightarrow}~& \emptyset, 0 \models_{T_1[r \mapsto s]} \exists r_1^1 \ldots r_n^1 \ldot \forall \pi \ldot [\phi, r] \text{ and } \emptyset, 0 \models_{T_2[r \mapsto s]} \exists r_1^2 \ldots r_m^2 \ldot \forall \pi \ldot [\psi, r]\\
			\Leftrightarrow ~&\emptyset, 0 \models_{T[r \mapsto s]} \exists r_1^1 \ldots r_n^1 r_1^2 \ldots r_m^2 \ldot \forall \pi \ldot [\phi, r] \lor [\psi,r] \tag{since $T_1[r \mapsto s] \cup T_2[r \mapsto s] = T[r \mapsto s]$}
		\end{align*}
		%\item 
		Case for $\phi \clor \psi$:
		\begin{align*}
			&(T,i) \models \phi \clor \psi \\
			\Leftrightarrow~& (T, i) \models \phi \text{ or } (T, i) \models  \psi \\
			\overset{\text{IH}}{\Leftrightarrow}~& \emptyset, 0 \models_{T[r \mapsto s]} \exists r_1^1 \ldots r_n^1 \ldot \forall \pi \ldot [\phi, r] \text{ or } \emptyset, 0 \models_{T[r \mapsto s]} \exists r_1^2 \ldots r_m^2 \ldot \forall \pi \ldot [\psi, r]\\
			\Leftrightarrow~& \emptyset, 0 \models_{T[r \mapsto s]} \exists d^{\phi\clor\psi} \ldot \exists r_1^1 \ldots r_n^1 r_1^2 \ldots r_m^2 \ldot \forall \pi \ldot (d^{\phi\clor\psi}_\pi \rightarrow [\phi, r]) \land (\neg d^{\phi\clor\psi}_\pi \rightarrow [\psi,r]) 
		\end{align*}
		%\item 
		Case for $\phi \U \psi$:
		\begin{align*}
			&(T,i) \models \phi \U \psi\\
			&\Leftrightarrow \exists i' \geq i \ldot (T,i') \models \psi \text{ and } \forall i \leq i'' < i'\ldot (T, i'') \models \phi \\
			&\Leftrightarrow \exists i' \geq i\ldot (T,i') \models \psi \text{ and } \forall i \leq i'' < i'\ldot \emptyset, i'' \models_T \forall \pi \ldot \hat\phi \tag{by \Cref{thm:kcoherent-team-hyper}, since $\phi$ is 1-coherent} \\
			&\overset{\text{IH}}{\Leftrightarrow} \exists i' \geq i \ldot \emptyset,0 \models_{T[r^{\psi} \mapsto s^{\psi}]} \exists r_1 \ldots r_n \ldot \forall \pi \ldot [\psi, r^{\psi}]  \text{ and } \forall i \leq i'' < i'\ldot \emptyset, i'' \models_T \forall \pi \ldot \hat\phi \\
			& \tag{where $r^{\psi}$ is set in $s^{\psi}$ exactly at position $i'$} \\
			&\Leftrightarrow \exists i' \geq i \ldot \emptyset, 0 \models_{T[r \mapsto s, r^{\psi} \mapsto s^{\psi}, r^{\phi} \mapsto s^{\phi}]} \exists r_1 \ldots r_n \ldot \forall \pi \ldot \G(r^{\phi}_\pi \rightarrow \hat{\phi}) \land [\psi, r^{\psi}] \\
			&\tag*{(where $r^{\psi}$ is set in $s^{\psi}$ exactly at position $i'$ and} \\
			&\tag*{$r^{\phi}$ is set in $s^{\phi}$ exactly at all positions between $i$ and $i'$)} \\
			&\Leftrightarrow \emptyset, 0 \models_{T[r \mapsto s]} \models \exists r^{\psi} \ldot \exists r^{\phi} \ldot \exists r_1 \ldots r_n \ldot \forall \pi \ldot \G (r_\pi \rightarrow r^{\phi}_\pi \U (r^{\psi}_\pi \land \X \G \neg r^{\psi}_\pi)) \land \\
			&\phantom{\Leftrightarrow \emptyset, 0 \models_{T[r \mapsto s]}} \qquad \G(r^{\phi}_\pi \rightarrow \hat\phi) \land [\psi, r^{\psi}]
		\end{align*}
		%\item 
		Case for $\phi \W \psi$. Similar to $\phi \U \psi$.
		%	\end{itemize}
\end{proof}

%\section{Proof of \Cref{prop:atomtrans}}\label{a:atomtrans}
%\todojv{If we have time. Probably not necessary.}

}{}

\end{document}